\DeclareMathOperator*{\E}{\mathbb{E}}
\newcommand{\opt}{\textsc{Opt}}
\newcommand{\Opt}{\ensuremath{{\textsc{Opt}}}\xspace}
 \newtheorem{theorem}{Theorem}
 \newtheorem{observation}[theorem]{Observation}
 \newtheorem{lemma}[theorem]{Lemma}
 \newtheorem{corollary}[theorem]{Corollary}
 \newtheorem{proposition}[theorem]{Proposition}
\begin{document}

 \title{Online Computation with Untrusted Advice\thanks{A preliminary version of this paper appeared in the Proceedings of
  	the 11th Innovations in Theoretical Computer Science (ITCS), 2020. Part of the work was done while 2nd author was affiliated with CMM. Partially supported by the CNRS PEPS project ADVICE, the project PREDICTIONS, grant ANR-23-CE48-0010, the projet Algoridam, grant ANR-19-CE48-0016 from the French National Research Agency, the Center for Mathematical Modeling grant ANID FB210005, and the grant DGECR-2018-00059 from the Natural Sciences and Engineering Research Council of Canada (NSERC).}}

\author{%
Spyros Angelopoulos\thanks{Sorbonne Université, CNRS, LIP6, Paris, France}
\and
Christoph D\"urr\footnotemark[2]
\and
Shendan Jin\footnotemark[2]
\and
Shahin Kamali\thanks{Department of Electrical Engineering and Computer Science, York University, Toronto, Canada}
\and
Marc Renault\thanks{University of Wisconsin-Madison, Madison, USA}}

\date{}

\maketitle

\begin{abstract}

We study a generalization of the advice complexity model of online computation in which the advice is provided by an untrusted source. Our objective is to quantify the impact of untrusted advice so as to design and analyze online algorithms that are robust if the advice is adversarial, and efficient is the advice is foolproof. We focus on four well-studied online problems, namely ski rental, online bidding, bin packing and list update. For ski rental and online bidding, we show how to obtain algorithms that are Pareto-optimal with respect to the competitive ratios achieved, whereas for bin packing and list update, we give online algorithms with worst-case tradeoffs in their competitiveness, depending on whether the advice is trusted or adversarial. More importantly, we demonstrate how to prove lower bounds, within this model, on the tradeoff between the number of advice bits and the competitiveness of any online algorithm.  
\end{abstract}


\paragraph*{keywords}	
	Online computation ; competitive analysis ; advice complexity ; robust algorithms ; untrusted advice 


\section{Introduction}
\label{sec:introduction}

Suppose that you have an investment account with a significant amount in it, and that your financial institution advises you periodically on investments. One day, your banker informs you that company X will soon receive a big boost, and advises to use the entire account to buy stocks. If you were to completely trust the banker’s advice, there are naturally two possibilities: either the advice will prove correct (which would be great) or it will prove wrong (which would be catastrophic). A prudent customer would take this advice with a grain of salt, and would not be willing to risk everything. In general, our understanding of advice is that it entails {\em knowledge that is not foolproof}.

In this work we focus on the online computation with advice. Our motivation stems from observing that, unlike the real world, the advice under the known models is often closer to ``fiat'' than ``recommendation''. Our objective is to propose a model which allows the possibility of incorrect advice, with the objective of obtaining more realistic and robust online  algorithms.

\subsection{Online computation and advice complexity}
In the standard model of online computation that goes back to the seminal work of Sleator and Tarjan~\cite{SleTar85}, an online algorithm receives as
input a sequence of {\em requests}. For each request in this sequence, the algorithm must make an irrevocable decision concerning the item, without any knowledge of future requests. The performance of an online algorithm is usually evaluated by means of the competitive ratio, which is the
worst-case ratio of the cost incurred by the algorithm (assuming a minimization problem) to the cost of an ideal solution that knows the entire sequence in advance.

In practice, however, online algorithms are often provided with some (limited)
knowledge of the input, such as lookahead on some of the upcoming requests, or
knowledge of the input size. While competitive analysis is still applicable,
especially from the point of view of the analysis of a known, given algorithm,
a new model was required to formally quantify the power and limitations of
offline information. The term {\em advice complexity} was first coined by
Dobrev {\em et al.}~\cite{DobKraPar09}, and subsequent formal models were
presented by B\"ockenhauer {\em et al.}~\cite{BocKomKra09} and Emek {\em et
al.}~\cite{EmekFraKorRos2011}, with this goal in mind.  More precisely, in the
advice setting, the online algorithm receives some bits that encode
information concerning the sequence of input items. As expected, this
additional information can boost the performance of the algorithm, which is
often reflected in better competitive ratios.

Under the current models, the advice bits can encode any information about the input sequence; indeed, defining the ``right'' information to be conveyed to the algorithm plays an important role in obtaining better online algorithms. Clearly, the performance of the online algorithm can only improve with larger number of advice bits. The objective is thus to identify the exact trade-offs between the size of the advice and the performance of the algorithm. This is meant to provide a smooth transition between the purely online world (nothing is known about the input) and the purely ``offline'' world (everything is known about the input).
In the last decade, a substantial number of online optimization problems have been studied in the advice model; we refer the reader to the survey of
Boyar {\em et al.}~\cite{Boyar:survey:2016} for an in-depth discussion of developments in this field.

As argued in detail in~\cite{Boyar:survey:2016}, there are compelling reasons to study the advice complexity of online computation.
Lower bounds establish strict limitations on the power of any online algorithm; there are strong connections between randomized online algorithms and online algorithms with advice (see, e.g.,~\cite{DBLP:conf/icalp/Mikkelsen16}); online algorithms with advice can be of practical interest in settings in which it is feasible to run multiple algorithms and output the best solution (see~\cite{KamLopDCC14} about obtaining improved data compression algorithms by means of list update algorithms with advice); and the first complexity classes for online computation have been based on advice complexity~\cite{BoyarFavrholdt:15:Advice}.

Notwithstanding such interesting attributes, the known advice model has certain drawbacks. The advice is always assumed to be some error-free information that may be used to encode some property often explicitly connected to the optimal solution. In many settings, one can argue that such information cannot be readily available, which implies that the resulting algorithms are often impractical.

\subsection{Online computation with untrusted advice}

In this work, we address what is a significant drawback in the online advice model. Namely, all previous works assume that advice is, in all circumstances, completely trustworthy, and precisely as defined by the algorithm. Since the advice is infallible, no reasonable online algorithm with advice would choose to ignore the advice.

It should be fairly clear that such assumptions are very unrealistic or undesirable. Advice bits, as all information, are prone to transmission errors. In addition, the known advice models often allow
information that one may arguably consider unrealistic, e.g., an encoding of some part of the offline optimal solution. Last, and perhaps more significantly, a malicious entity that takes control of the advice oracle can have a catastrophic impact. For a very simple example, consider the well-known ski rental problem: this is a simple, yet fundamental resource allocation, in which we have to decide ahead of time whether to rent or buy equipment without knowing the time horizon in advance. In the traditional advice model, one bit suffices to be optimal: 0 for renting throughout the horizon, 1 for buying right away. However, if this bit is wrong, then the online algorithm has unbounded competitive ratio, i.e., can perform extremely badly. In contrast, an online algorithm that does not use advice at all has competitive ratio at most 2, i.e., its output can be at most twice as costly as the optimal one.

The above observations were recently made in the context of online algorithms with machine-learned predictions.
Lykouris and Vassilvitskii~\cite{DBLP:conf/icml/LykourisV18} and Purohit {\em et al.}~\cite{NIPS2018_8174} show how to use predictors to design and analyze algorithms with two properties: (i) if the predictor is good, then the online algorithm should perform close to the best offline algorithm (what is called {\em consistency}); and (ii) if the predictor is bad, then the online algorithm should gracefully degrade, i.e., its performance should be close to that of the online algorithm without predictions (what is called {\em robustness}). Since these works, online algorithms with predictions have been studied in a variety of settings, see, e.g., the survey~\cite{DBLP:books/cu/20/MitzenmacherV20}.

Motivated by these definitions from machine learning, in this work we analyze online algorithms based on their performance in both settings of trusted and untrusted advice. In particular, we will characterize the performance of an online algorithm $A$ by a pair of competitive ratios, denoted by $(r_A, w_A)$, respectively. Here, $r_A$ is the competitive ratio achieved assuming that the advice encodes precisely what it is meant to capture; we call this ratio the competitive ratio with {\em trusted} (thus, always correct) advice, or trusted competitive ratio of $A$, for simplicity. In contrast, $w_A$ is the competitive ratio of $A$ when the advice is untrusted, and thus potentially wrong; we will call this measure the untrusted competitive ratio of $A$. More precisely, in accordance with the worst-case nature of competitive analysis, we allow the incorrect advice to be chosen {\em adversarially}. Namely, assuming a deterministic online algorithm $A$, the incorrect advice string is generated by a malicious, adversarial entity.

To formalize the above concept, assume the standard advice model, in which a deterministic online algorithm $A$ processes a sequence of requests
$\sigma=(\sigma[i])_{i\in [1,n]}$ using an advice tape. The advice tape stores an advice string, which we denote by $\phi\in\{0,1\}^*$.  At each time $t$, $A$ serves request $\sigma[t]$,
and its output is a function of $\sigma[1, \ldots ,t-1]$ and $\phi$. Let $A(\sigma, \phi)$ denote the cost incurred by $A$ on input $\sigma$, using advice $\phi$.  Denote by $r_A$, $w_A$ as
\begin{equation}
r_A= \sup_{\sigma} \inf_{\phi} \frac{A(\sigma,\phi)}{\opt(\sigma)} ,
      \quad  \textrm{and} \quad
w_A = \sup_{\sigma} \sup_{\phi} \frac{A(\sigma,\phi)}{\opt(\sigma)},
\label{eq:definition}
\end{equation}
where $\opt(\sigma)$ denotes the optimal offline cost for $\sigma$.
Then we say that algorithm $A$ is $(r,w)$-competitive for every $r\geq r_A$ and $w\geq w_A$.
In addition, we say that $A$ has advice complexity $s(n)$ if for every request sequence $\sigma$ of length $n$, the algorithm $A$ depends only on the first $s(n)$ bits of the advice string $\phi$.
To illustrate this definition, the opportunistic 1-bit advice algorithm for ski rental that was described above is $(1,\infty)$-competitive, whereas the standard competitively optimal algorithm without advice is $(2,2)$-competitive. In general, every online algorithm $A$ without advice or ignoring its advice is trivially $(w,w)$-competitive, where $w$ is the competitive ratio of $A$. 

Hence, we can associate every algorithm $A$ to a point in the 2-dimensional space with coordinates $(r_A,w_A)$. These points are in general incomparable,
e.g., it is difficult to argue that a $(2,10)$-competitive algorithm is better than a $(4,8)$-competitive algorithm. However, one can
appeal to the notion of {\em dominance}, by saying that algorithm $A$ dominates algorithm $B$ if $r_A\leq r_B$ and $w_A\leq w_B$.
More precisely, we are interested in finding the Pareto frontier in this representation of all online algorithms. For the ski rental example, the two above mentioned algorithms belong to the Pareto set.

A natural goal is to describe this Pareto frontier, which in general may be comprised of several algorithms with vastly different statements.
Ideally, however, one would like to characterize it by a single {\em family} $\cal A$ of algorithms, with similar statements (e.g., algorithms in $\cal A$ are obtained by appropriately selecting a parameter). We say that $\cal A$ is {\em Pareto-optimal} if it consists of pairwise incomparable algorithms,
and for every algorithm $B$, there exists $A \in \cal A$ such that $A$ dominates $B$.  Regardless of optimality, given $\cal A$, we will describe its competitiveness by means of a function $f$
such that for every $r$ there is an $(r,f(r))$-competitive algorithm in $\cal A$. At some places,  we will equivalently describe its competitiveness by a function $g:\mathbb{R}_{\geq1} \rightarrow \mathbb{R}_{\geq1}$ such that for every $w$, there is an $(g(w),w)$-competitive algorithm in $\cal A$. These functions will in general depend on parameters of the problem, such as, for example, the buying cost $B$ in the ski rental problem. Note also that $r$ and $w$ may be subject to constraints; e.g., $w$ cannot be smaller than the best competitive ratio without advice. 

\subsection{Contribution}
We study various online problems in the setting of untrusted advice. We also demonstrate that it is possible to establish
both upper and lower bounds on the tradeoff between the size of the advice and the competitiveness in this new advice model.
We begin in Section~\ref{sec:ski.rental} with a simple, yet illustrative online problem as a case study, namely the {\em ski rental} problem.
Here, we give a Pareto-optimal algorithm with only one bit of advice. We also show that this algorithm is Pareto-optimal even in the space of all (deterministic) algorithms with advice of {\em any} size.

In Section~\ref{sec:online.bidding} we study the {\em online bidding} problem, in which the objective is to guess
an unknown, hidden value, using a sequence of bids. This problem was introduced in~\cite{ChrKen06} as a vehicle for formalizing efficient
doubling, and has applications in several important online and offline optimization problems.
As with ski rental, this is another problem for which a trivial online algorithm is $(1,\infty)$-competitive.
We first show how to find a Pareto-optimal strategy, when the advice encodes the hidden value, and thus can have unbounded size.
Moreover, we study the competitiveness of the problem with only $k$ bits of advice, for some fixed $k$, and
show both upper and lower bounds on the achieved competitive ratios. The results illustrate that it is
possible to obtain non-trivial lower bounds on the competitive ratios, in terms of the advice size. In particular, the lower bound implies that,
unlike the ski rental problem, Pareto-optimality is not possible with a bounded number of advice bits.

In Sections~\ref{sect:bp} and~\ref{sect:lu}, we study the \emph {bin packing} and \emph {list update} problems; these problems are central in the analysis of online problems and competitiveness, and have  
numerous applications in practice. For these problems, an efficient advice scheme should address the issues of ``what constitutes good advice'' as well as
``how the advice should be used by the algorithm''. We observe that the existing algorithms with advice perform poorly in the case the advice is untrusted. To address this, we give algorithms that can be ``tuned'' based on how much we are willing to trust the advice. This enables us to show
guarantees in the form $(r,f(r))$-competitiveness, where $r$ is strictly better than the competitive ratio of all deterministic online algorithms and $f(r)$ smoothly decreases as $r$ grows, while still being close to the worst-case competitive ratio. To illustrate this, consider the bin packing problem. Our $(r,f(r))$-competitive algorithm has 
$f(r) = \max\{33-18r,7/4\}$ 
for any $r \in [1.5,1.73]$.
If $r=1.5$, our algorithm is $(1.5,6)$-competitive, 
and matches the performance of a known algorithm~\cite{BoyarKLL16}. 
However, with a slight increase of $r$, one can 
improve competitiveness in the event the advice is untrusted. For instance,
choosing $r=1.55$, we obtain 
$f(r)=5.1$.
 In other words, the algorithm designer can hedge against untrusted advice, by a small sacrifice in the trusted performance. Thus we can interpret $r$ as the ``risk'' for trusting the advice: the smaller the $r$, the bigger the risk.
Likewise, for the list update problem, our $(r,f(r))$-competitive algorithm has $f(r)= 2 + \frac{10-3r}{9r-5}$ for $r \in [5/3,2]$. If the algorithm takes maximum risk,
i.e., if $r$ is smallest, the algorithm is equivalent to an existing $(5/3,5/2)$-competitive algorithm~\cite{BoyKamLata14}. Again, by increasing $r$, we better safeguard against the event of untrusted advice.

All the above results pertain to deterministic online algorithms. In Section~\ref{sec:extensions}, we study the power of randomization in online computation with untrusted advice. First, we show that the randomized algorithm of Purohit et al.~\cite{NIPS2018_8174} for the ski rental problem Pareto-dominates any deterministic algorithm, even when the latter is allowed unbounded advice. 
Furthermore, we show an interesting difference between the standard advice model and the model we introduce: in the former, an advice bit can be at least as powerful as a random bit, since an advice bit can effectively simulate any efficient choice of a random bit.  In contrast, we show that in our model,  there are situations in which a randomized algorithm with $L$ advice bits and one random bit is Pareto-incomparable to the Pareto-optimal deterministic algorithm with $L+1$ advice bits.
This confirms the intuition that a random bit is considered trusted, and thus not obviously inferior to an advice bit.

While our work addresses issues similar to~\cite{DBLP:conf/icml/LykourisV18} and~\cite{NIPS2018_8174}, in that trusted advice is related to consistency whereas untrusted advice is related to robustness, it differs in two significant aspects: First, our ideal objective is to identify an optimal
family of algorithms, and we show that in some cases (ski rental, online bidding), this is indeed possible; when this is not easy or possible, we can still provide approximations. Note that finding a Pareto-optimal family of algorithms presupposes that the exact competitiveness of the online problem with no advice is known. For problems such as bin packing,
the exact optimal competitive ratios are not known. Hence, a certain degree of approximation is unavoidable in such cases. In contrast,~\cite{DBLP:conf/icml/LykourisV18,NIPS2018_8174} focus on ``smooth'' tradeoffs between the trusted and untrusted competitive ratios, but do not address the issues related to optimality and approximability of these tradeoffs.


Second, our model considers the size of advice and its impact on the algorithm's performance, which is the main focus of the advice complexity field. For all problems we study, we parameterize advice by its size, i.e., we allow advice of a certain size $k$. Specifically, the advice need not necessarily encode the optimal solution or the request sequence itself. This opens up more possibilities to the algorithm designer in regards to the choice of an appropriate advice oracle, which may have further practical applications in machine learning.

\section{A warm-up: the ski rental problem}
\label{sec:ski.rental}


\subsection{Background}
The ski rental problem is a canonical example in online rent-or-buy problems. Here, the request sequence can be seen as vacation days, and on each
day the vacationer (that is, the algorithm) must decide whether to continue renting skis, or buy them. Without loss of generality we
assume that renting costs a unit per day, and buying costs $B \in \mathbb{N}^+$. The number of skiing days, which we denote by $D$,
is unknown to the algorithm, and we observe that the optimal offline cost is $\min\{D,B\}$.
Generalizations of ski rental have been applied in many settings, such as dynamic TCP acknowledgment~\cite{karlin2003dynamic},
the parking permit problem~\cite{Meyers05}, and snoopy caching~\cite{KaMaRS88}.

Consider the single-bit advice setting. Suppose that the advice encodes whether to buy on day 1, or always rent. An algorithm that blindly follows the
advice is optimal if the advice is trusted, but, if the advice is untrusted, the competitive ratio is as high as $D/B$, if $D>B$.
Hence, this algorithm is $(1, \infty)$-competitive, for $D \rightarrow \infty$.

\subsection{Ski rental with untrusted advice}
We define the family of algorithms $A_k$, with parameter $0<k\leq B$ as follows.
There is a single bit of advice, which is the indicator of the event $D<B$. If the advice bit is 1,
then $A_k$ rents until until day $B-1$ and buys on day $B$. Otherwise, the algorithm buys on day $k$.

\begin{proposition}
Algorithm $A_k$ is
$
(1+\frac{k-1}{B}, 1+\frac{B-1}{k})
$-competitive.
\label{prop:ski.adversarial}
\end{proposition}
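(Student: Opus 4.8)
The plan is to establish the two guarantees separately by a short case analysis on the value of the single advice bit, in each case being careful about whether the algorithm actually survives long enough to carry out its planned purchase. Recall that $A_k$ rents on days $1,\dots$ and then, depending on the bit, either buys on day $B$ (bit $1$) or on day $k$ (bit $0$); in particular if the skiing season ends before that day, no purchase is ever made and $A_k$ just pays $D$.

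First I would compute the trusted ratio $r_{A_k}$, where the bit correctly indicates whether $D<B$. If the bit is $1$ then $D<B$, so $D\le B-1$ and $A_k$ never reaches day $B$: it rents for all $D$ days at cost $D=\min\{D,B\}=\opt(\sigma)$, giving ratio $1$. If the bit is $0$ then $D\ge B\ge k$, so $A_k$ does reach day $k$, paying $(k-1)+B$, while $\opt(\sigma)=\min\{D,B\}=B$; the ratio is $1+\frac{k-1}{B}$. Taking the larger of the two (and using $k\ge 1$) gives $r_{A_k}=1+\frac{k-1}{B}$.

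Next I would compute the untrusted ratio $w_{A_k}$, where the adversary picks both $D$ and the (possibly wrong) bit. If the bit is $1$, the adversary does best by choosing $D\ge B$: then $A_k$ rents through day $B-1$ and buys on day $B$ at cost $(B-1)+B=2B-1$, against $\opt=\min\{D,B\}=B$, for ratio $2-\tfrac1B$; the subcase $D<B$ only yields ratio $1$. If the bit is $0$, then whenever $D\ge k$ the cost of $A_k$ is $(k-1)+B$ while $\opt=\min\{D,B\}\ge k$, so the ratio $\frac{(k-1)+B}{\min\{D,B\}}$ is maximized by the smallest feasible choice $D=k$, giving $1+\frac{B-1}{k}$; the subcase $D<k$ again yields ratio $1$. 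Hence $w_{A_k}=\max\bigl\{2-\tfrac1B,\,1+\tfrac{B-1}{k}\bigr\}=1+\tfrac{B-1}{k}$, where the last equality uses $k\le B$ (equivalently $\tfrac{B-1}{B}\le\tfrac{B-1}{k}$).

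Combining the two bounds shows $A_k$ is $\bigl(1+\tfrac{k-1}{B},\,1+\tfrac{B-1}{k}\bigr)$-competitive. There is no genuine difficulty here beyond bookkeeping: the one point that must not be overlooked is that when $A_k$'s scheduled purchase day exceeds $D$ the purchase never happens, so those subcases contribute only ratio $1$ and may be discarded; once this is noted, the worst instance in each direction is exactly the ``wrong bit forces an avoidable late purchase'' scenario, and the final comparison $2-\tfrac1B\le 1+\tfrac{B-1}{k}$ is immediate from $k\le B$.
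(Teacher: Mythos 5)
Your proof is correct and follows essentially the same route as the paper, which simply tabulates the four (advice value) $\times$ (trusted/untrusted) cases and takes the relevant maxima, using $k\le B$ for the final comparison $2-\tfrac1B\le 1+\tfrac{B-1}{k}$. Your additional care about the purchase day possibly exceeding $D$ is a sound (and implicit in the paper) bookkeeping point, not a different argument.
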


\begin{proof}
	Table~\ref{table:cr} summarizes the competitive ratios of the algorithm, for the four different settings,
	depending on the value and the trustworthiness of the advice.
	\begin{table}[htb!]
		\begin{center}
			\begin{tabular}{l|ll}
				advice    & trusted & untrusted  \\ \hline \\[-.8em]
				{0 ($D \geq B$)}       & $\frac{k-1+B}{B}= 1+\frac{k-1}{B}$ & $\frac{k-1+B}{k}=1+\frac{B-1}{k}$ 
				\\[0.5em]
				{1 ($D <  B$)}     & 1 & $\frac{B-1+B}{B}=2-\frac{1}{B}$ \\
			\end{tabular}
			\caption{\label{table:cr} The competitive ratios of the family of algorithms $A_k$.}
		\end{center}
	\end{table}
	
	It follows that the trusted competitive ratio of $A_k$ is at most $1+\frac{k-1}{B}$ and its untrusted competitive
	ratio is at most $\max\{2-\frac{1}{B}, 1+\frac{B-1}{k}\} = 1+\frac{B-1}{k}$.
\end{proof}

%

It is worth noting that algorithm $A_k$ is slightly different from the one proposed in~\cite{NIPS2018_8174}. The latter buys on day $\lceil B^2/k \rceil$ if the advice is $1$  and on day $k$ if the advice is $0$, choosing $\lambda=k/B$ to match their notation, and in~\cite{NIPS2018_8174} it
is shown to be $(1+k/B,1+B/k)$-competitive.
More importantly, we show that $A_k$
is Pareto-optimal in the space of all deterministic online algorithms with advice of {\em any size}. This implies that more than
a single bit of advice will not improve the tradeoff between the trusted and untrusted competitive ratios.

\begin{theorem}\label{th:ski-main}
For any deterministic $(1+\frac{k-1}{B},w)$-competitive algorithm A, with $1 \leq k \leq B$, with advice of any size,
it holds that $w \geq 1+\frac{B-1}{k}$.
\label{thm:ski.one.suffices}
\end{theorem}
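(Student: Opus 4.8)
The plan is to exploit the fact that in ski rental the request sequence carries no information: at any day $t$ the history is simply a block of $t-1$ identical ``ski day'' requests, so a deterministic algorithm's decision on day $t$ is a function of the advice string and of $t$ alone. Hence $A$ is completely described by a map $\phi\mapsto d(\phi)\in\mathbb{N}^+\cup\{\infty\}$, where $d(\phi)$ is the day on which $A$ buys when fed advice $\phi$ (and $d(\phi)=\infty$ means ``never buy''). On an instance with $D$ ski days the cost is $A(D,\phi)=D$ if $d(\phi)>D$ and $A(D,\phi)=d(\phi)-1+B$ otherwise, while $\opt(D)=\min\{D,B\}$. Writing $S=\{d(\phi):\phi\in\{0,1\}^*\}$ and letting $A_d$ denote the fixed ``buy on day $d$'' strategy, equation~\eqref{eq:definition} reads $r_A=\sup_D\inf_{d\in S}A_d(D)/\min\{D,B\}$ and $w_A=\sup_D\sup_{d\in S}A_d(D)/\min\{D,B\}$; in particular, the adversary in the untrusted regime is free to realize \emph{any} $d\in S$ on \emph{any} instance length.

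First I would observe that $S$ must contain a finite element: otherwise $A$ always rents, so $A(D,\phi)=D$ for all $D$ and $r_A=\sup_{D\ge B}D/B=\infty$, contradicting $r_A\le 1+\tfrac{k-1}{B}\le 2$. Let $d^\star=\min\bigl(S\cap\mathbb{N}^+\bigr)$, which exists since a nonempty set of positive integers has a minimum. Next I would show $d^\star\le k$. Take $D=d^\star+B$, so that $D\ge B$ and $D>d^\star-1+B$. Every finite $d\in S$ has $d\le D$ and costs $A_d(D)=d-1+B\ge d^\star-1+B$, with equality at $d=d^\star$; every $d\in S$ with $d>D$ (including $\infty$) costs $D>d^\star-1+B$. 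Thus $\inf_{d\in S}A_d(D)=d^\star-1+B$ and $\opt(D)=B$, so $r_A\ge 1+\tfrac{d^\star-1}{B}$. Combined with $r_A\le 1+\tfrac{k-1}{B}$ this gives $d^\star\le k$.

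Finally, let $\phi^\star$ be an advice string with $d(\phi^\star)=d^\star$ and consider the instance with exactly $D=d^\star$ ski days. Then $A(d^\star,\phi^\star)=d^\star-1+B$, while $\opt(d^\star)=\min\{d^\star,B\}=d^\star$ since $d^\star\le k\le B$. Therefore
\[
  w_A\ \ge\ \frac{A(d^\star,\phi^\star)}{\opt(d^\star)}\ =\ \frac{d^\star-1+B}{d^\star}\ =\ 1+\frac{B-1}{d^\star}\ \ge\ 1+\frac{B-1}{k},
\]
the last inequality using $1\le d^\star\le k$ and $B\ge 1$. This is the claimed bound.

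I do not expect a genuine obstacle in this argument; the only step requiring care is the reduction in the first paragraph, namely justifying that advice of unbounded size is no help because the online history is memoryless, so the algorithm collapses to the map $\phi\mapsto d(\phi)$, and the untrusted adversary's control of $\phi$ amounts precisely to choosing an arbitrary $d\in S$. Once this is in place, the bound follows from comparing the ``buy on day $d$'' cost against $\opt$ on just two instance lengths: $D\to\infty$ (which pins down $d^\star\le k$ via the trusted hypothesis) and $D=d^\star$ (which extracts the lower bound on $w_A$).
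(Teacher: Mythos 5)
Your proof is correct and follows essentially the same strategy as the paper's: use the trusted guarantee on a long instance (you use $D=d^\star+B$, the paper uses $D=B+k$) to force the existence of an advice string under which the algorithm buys by day $k$, then replay that advice on the short instance whose length equals the buying day. Your formulation via the set $S$ of realizable buying days and the sup-sup definition of $w_A$ cleanly absorbs the paper's separate treatment of the case $k=B$; the only cosmetic slip is the clause ``every finite $d\in S$ has $d\le D$,'' which is not needed and is anyway superseded by the case analysis that immediately follows it.
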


\begin{proof}
Let $A$ be an algorithm with trusted competitive ratio at most $1+\frac{k-1}B$.
First, note that if the advice is untrusted, the competitive ratio cannot be better than the competitive ratio of a purely online algorithm. For ski rental, it is known that no online algorithm can achieve a competitive ratio better than $1+(B-1)/B$~\cite{KaMaRS88}. So, in the case $k=B$, the claim trivially holds. In the remainder of the proof, we assume $k < B$.

We use $\sigma_D$ to denote the instance of the problem in which the number of skiing days is $D$, and use $A_t(\sigma_D)$ to denote the cost of $A$ for $\sigma_D$ in case of trusted advice.

Consider a situation  in which the input is $\sigma_{B+k}$ and the advice for $A$ is trusted. Let $j$ be the day the algorithm will buy under this advice.  Since the advice is trusted and thus $\Opt(\sigma_{B+k}) = B$, it must be that
\[
    A_t(\sigma_{B+k}) \leq \left(1+\frac{k-1}B\right) \Opt(\sigma_{B+k}),
\]
which implies $j<B+k$. In other words, $A$ indeed buys on day $j$. We conclude that $A_t(\sigma_{B+k}) = j-1+B$ which further implies $j\leq k$.

Let $x$ be the trusted advice $A$ receives on input $\sigma_{B+k}$ and suppose $A$ receives the same advice $x$ on input $\sigma_j$. Note that $x$ can be trusted or untrusted for $\sigma_j$.  The important point is that $A$ serves $\sigma_j$ in the same way it serves $\sigma_{B+k}$, that is, it rents for $j-1$ days and buys on day $j$. The cost of $A$ for $\sigma_j$ is then $ j-1 + B$, while $\Opt(\sigma_j) =j$. The ratio between the cost of the algorithm and the optimal cost is therefore $1+\frac{B-1}{j}$, which is at least $1+\frac{B-1}{k}$ since $j\leq k$.
Note that $1+\frac{B-1}{k} > 1 +\frac{k-1}{B}$ (since we assumed $k<B$) and therefore the advice in this situation has to be untrusted, by the assumption on the trusted competitive ratio of $A$. We conclude that the untrusted competitive ratio must be at least $1+(B-1)/k$.
\end{proof}

\section{Online bidding}
\label{sec:online.bidding}


\subsection{Background}
In the {\em online bidding} problem, a player wants to guess a hidden, unknown real value $u \geq 1$. To this end, the player submits a sequence
$X=(x_i)$ of increasing {\em bids}, until one of them is at least $u$. The strategy of the player is defined by this sequence of bids,
and the cost of guessing the hidden value $u$ is equal to $\sum_{i=1}^j {x_i}$, where $j$ is such that $x_{j-1}<u\leq x_j$. Hence the following natural
definition of the competitive ratio of the bidder's strategy.
\[
w_X= \sup_{u} \frac{\sum_{i=1}^j {x_i}}{u}, \textrm{ where $j$ is such that } x_{j-1}<u\leq x_j.
\]

The problem was introduced in~\cite{ChrKen06} as a canonical problem for formalizing doubling-based strategies in
online and offline optimization problems, such as searching for a target on the line, minimum latency, and hierarchical clustering.
It is worth noting that online bidding is identical to the problem
of minimizing the {\em acceleration ratio of interruptible algorithms}~\cite{RZ.1991.composing}; the latter and its
generalizations are problems with many practical applications in AI (see, for instance~\cite{aaai06:contracts}).

Without advice, the best competitive ratio is 4, and can be achieved using the doubling strategy $x_i=2^i$. If the advice encodes\footnote{We assume that the
advice provides the exact value $u$ to the algorithm. For practical considerations, it suffices to assume an oracle that provides an $(1+\epsilon)$-approximation of the hidden value, for sufficiently small $\epsilon>0$. This will only affect the competitive ratios by the same negligible factor.}
the value $u$,
and assuming trusted advice, bidding $x_1=u$ is a trivial optimal strategy.
The above  observations imply that there are simple strategies that are $(4,4)$-competitive and $(1,\infty)$-competitive, respectively.


\subsection{Online bidding with untrusted advice}

We will analyze the problem under two settings. In the first one (Section~\ref{subsec:bidding.target}), the advice encodes the target $u$, and is thus unlimited in size. In the second setting (Section~\ref{subsec:bidding.kbits}), the advice has fixed size $k$.

\subsubsection{Advice encodes the target $u$}
\label{subsec:bidding.target}

Suppose that $w \geq 4$ is a fixed, given parameter.
We will show a Pareto-optimal bidding strategy $X_u^*$, assuming that the advice encodes $u$,
which is $(\frac{w-\sqrt{w^2-4w}}{2},w)$-competitive (Theorem~\ref{thm:bidding.optimal} and Figure~\ref{fig:pareto-bidding}).

\begin{figure}
\begin{center}
\begin{tikzpicture}[thick,scale=1.5]
  \draw[->] (4,1 ) -- (8,1);
  \draw (8,.9) node[below] {$w$};
  \draw[->] (4,1) -- (4,3);
  \draw (3.9,3) node[left] {$r$};
  \draw[domain=4:8, samples=200, variable=\w, blue] plot ({\w},{(\w-(\w*\w-4*\w)^0.5)/2});
\foreach \y in {4,5,6,7} {
    \draw (\y,1)  -- (\y,0.9)
        node [below] {$\y$};
};
\foreach \x in {1,2} {
    \draw (4,\x)  -- (3.9,\x)
        node [left] {$\x$};
};
\end{tikzpicture}
\end{center}
\caption{The Pareto frontier of the online bidding problem, when the advice encodes the target $u$, as shown in Theorem~\ref{thm:bidding.optimal}.}
\label{fig:pareto-bidding}
\end{figure}

We begin with some definitions. Since the index of the bid which reveals the value will be important in the analysis, we define the class
$S_{m,u}$, with $m \in \mathbb{N}^+$ as the set of bidding strategies with advice $u$ which are $w$-competitive, and which, if the advice
is trusted, succeed in finding the value with precisely the $m$-th bid. We say that a strategy $X \in S_{m,u}$ that is
$(r,w)$-competitive dominates $S_{m,u}$ if for every $X' \in S_{m,u}$, such that $X'$ is $(r',w)$-competitive, $r\leq r'$ holds.

The high-level idea is to identify, for any given $m$, a dominant strategy in $S_{m,u}$. Let $X_{m,u}^*$ denote such a strategy, and denote
by $(r_{m,u}^*, w)$ its competitiveness. Then $X_{m,u}^*$ and $r_{m,u}^*$ are the solutions to an infinite linear program
which we denote by $P_{m,u}$, and which is shown below. For convenience, for any strategy $X$, we will always define $x_0$ to be equal to 1.

\hspace*{-1cm}
\begin{minipage}[t]{0.42\textwidth}
\begin{align*}
\min \quad &r_{m,u} \tag{$P_{m,u}$} \\
\textrm{s.t. } & x_i <x_{i+1}, \quad i \in \mathbb{N}^+ \\
& x_{m-1} <u\leq x_m \\
& \sum_{j=1}^m x_j \leq r_{m,u} \cdot u \\
& \sum_{j=1}^i x_j \leq  w \cdot x_{i-1}, \quad i \in \mathbb{N}^+ \\
& x_i \geq 0, \quad i \in \mathbb{N}^+.
\end{align*}
\end{minipage}
$\quad \quad$
\begin{minipage}[t]{0.5\textwidth}
\begin{align*}
\min \quad &\frac{1}{u} \cdot \sum_{i=1}^m x_i \tag{$L_{m,u}$} \\
\textrm{s.t. }
&x_i <x_{i+1}, \quad i \in \mathbb{N}^+ \\
& x_m = u \\
&\sum_{j=1}^i x_j \leq  w \cdot x_{i-1}, \quad i \in \mathbb{N}^+ \tag{$C_i$} \\
& x_i \geq 0, \quad i \in \mathbb{N}^+.
\end{align*}
\end{minipage}
\bigskip

Note that in $P_{m,u}$ the constraints  $\sum_{j=1}^i x_j \leq w \cdot x_{i-1}$ guarantee that the untrusted competitive ratio of $X$ is at most $w$,
whereas the constraints $\sum_{j=1}^m x_j \leq r_{m,u} \cdot u$ and $x_{m-1} <u\leq x_m$ guarantee that if the advice is trusted,
then $X$ succeeds in finding $u$ precisely with its $m$-th bid, and in this case the competitive ratio is $r_{m,u}$.

We also observe that an optimal solution $X^*_{m,u}=(x^*_{i})_{i\geq 1}$ for $P_{m,u}$ must be such that $x_{m}=u$,
otherwise one could define a strategy $X'_{m,u}$ in
which $x'_{i}=x^*_{i}/\alpha$, for all $i \geq 1$, with $\alpha=u/x^*_{m}$, which is still feasible for $P_{m,u}$,
is such that $x'_{m}=u$, and has better objective value than $X^*_{m,u}$, a contradiction. Furthermore, in an optimal solution, the constraint
$\sum_{i=1}^m x_i \leq r_{m,u} \cdot u$ must hold with equality. Therefore, $X^*_{m,u}$ and $r^*_{m,u}$ are also solutions to the  linear program
$L_{m,u}$.


Next, define
$r^*_u=\inf_m r^*_{m,u}$, and $r^* =\sup_u r^*_u$. Informally, $r^*_u$, $r^*$ are the optimal competitive ratios,
assuming trusted advice. More precisely, the dominant strategy in the space of all $w$-competitive strategies is
$(r^*_u,w)$-competitive, and $r^*$ is an upper bound on $r^*_u$, assuming the worst-case choice of $u$.

We first argue how to compute $r^*_{m,u}$ and the corresponding strategy $X_{m,u}^*$, provided that $L_{m,u}$ is feasible.
This is accomplished in Lemma~\ref{lem:bidding.recurrence}. 
The main idea behind the technical proof is to show that
in an optimal solution of $L_{m,u}$, all constraints $C_i$ hold with equality. This allows us to
describe the bids of the optimal strategy by means of a linear recurrence relation
which we can solve so as to obtain an expression for the bids of $X^*_{m,u}$. We refer the reader to the appendix, for the technical details. 

To introduce the next lemma, we need to define sequences $a_i$ and $b_i$ as follows:
\begin{equation}
  \label{eq:rec_a'}
  a_i = \frac{a_{i-1}}{w-1-b_{i-1}}, \textrm{ with } a_0 = 1,
\end{equation}

\begin{equation}
  \label{eq:rec_b}
  b_i = \frac{1 + b_{i-1}}{ w - 1 - b_{i-1}}, \textrm{ with } b_0 = 0,
\end{equation}
In addition, denote the roots of $x^2 - wx + w$ by
$$
	\rho_1=\frac{w-\sqrt{w^2-4w}}{2} \mbox{ and } \rho_2=\frac{w+\sqrt{w^2-4w}}{2}.
$$
\begin{lemma}[Appendix] 	\label{lem:bidding.recurrence}
For every $m$ define $X_{m,u}$ as follows:
\begin{itemize}
\item If $w>4$, then $x_{m,u,i}=\alpha \cdot \rho_1^{i-1} +\beta \cdot \rho_2^{i-1}$, where $\alpha=\frac{a_{m-1}\rho_2^{m-1}-1}{\rho_2^{m-1}-\rho_1^{m-1}} \cdot u$,
and $\beta=\frac{a_{m-1}\rho_1^{m-1}-1}{\rho_1^{m-1}-\rho_2^{m-1}} \cdot u$,
\item If $w=4$, then $x_{m,u,i}=(\alpha+\beta \cdot i)\cdot 2^i$, where $\alpha=\frac{2^{m-1}\cdot m \cdot a_{m-1}-1}{2^m(m-1)} \cdot u$, and $\beta=\frac{1-2^{m-1}\cdot a_{m-1}}{2^m(m-1)} \cdot u$.
\end{itemize}
Then, $X_{m,u}$ is an optimal feasible solution if and only if $a_{m-1}\cdot u \leq w$.
\end{lemma}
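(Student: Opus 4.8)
The plan is to work with the linear program $L_{m,u}$ directly, after a convenient normalization. Any feasible strategy has $x_1,x_2,\ldots>0$, so the partial sums $S_i:=\sum_{j=1}^i x_j$ are positive and strictly increasing; I would pass to the normalized quantities $\tilde S_i:=S_i/x_1$, which satisfy $\tilde S_0=0$, $\tilde S_1=1$, and $\tilde S_0<\tilde S_1<\tilde S_2<\cdots$. In these variables $C_i$ for $2\le i\le m$ becomes $\tilde S_i\le w(\tilde S_{i-1}-\tilde S_{i-2})$, $C_1$ becomes the single scalar condition $x_1\le w$, the target $x_m=u$ becomes $x_1=u/(\tilde S_m-\tilde S_{m-1})$, and the objective $\frac1u\sum_{i=1}^m x_i=S_m/u$ becomes $\tilde S_m/(\tilde S_m-\tilde S_{m-1})$ --- an expression that no longer mentions $u$ or $x_1$. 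So both the feasibility and the optimal value of $L_{m,u}$ are governed by the ``shape'' inequalities $\tilde S_i\le w(\tilde S_{i-1}-\tilde S_{i-2})$ plus the cap $x_1\le w$.

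Next I would introduce the reference sequence $g_0=0$, $g_1=1$, $g_i=w(g_{i-1}-g_{i-2})$ --- the one obtained when $C_2,\ldots,C_m$ hold with equality --- and prove by a one-line induction (base $i=2$: $\tilde S_2\le w$) that every feasible $\tilde S$ satisfies $\tilde S_i/\tilde S_{i-1}\le g_i/g_{i-1}$ for $i\ge2$, the inductive step being a direct substitution of $g_i=w(g_{i-1}-g_{i-2})$ into $\tilde S_i\le w(\tilde S_{i-1}-\tilde S_{i-2})$. Telescoping gives $\tilde S_i\le g_i$, whence $\tilde S_m-\tilde S_{m-1}=\tilde S_{m-1}\big(\tilde S_m/\tilde S_{m-1}-1\big)\le g_{m-1}\big(g_m/g_{m-1}-1\big)=g_m-g_{m-1}$. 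Two things follow for every feasible $X$: (i) the objective $\tilde S_m/(\tilde S_m-\tilde S_{m-1})=1/(1-\tilde S_{m-1}/\tilde S_m)\ge g_m/(g_m-g_{m-1})$; and (ii) $x_1=u/(\tilde S_m-\tilde S_{m-1})\ge u/(g_m-g_{m-1})$. A short computation shows that $a_i=1/(g_{i+1}-g_i)$ and $b_i=g_i/(g_{i+1}-g_i)$ satisfy the recurrences~\eqref{eq:rec_a}, so $1/(g_m-g_{m-1})=a_{m-1}$; thus (ii) together with $x_1\le w$ forces $a_{m-1}u\le w$, which is the ``only if'' direction.

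For the ``if'' direction, assume $a_{m-1}u\le w$ and take $X_{m,u}$ to be the sequence that makes $C_2,\ldots,C_m$ tight with $x_1=a_{m-1}u$, i.e.\ $S_i=x_1 g_i$ and $x_i=x_1(g_i-g_{i-1})$. Then $\tilde S_i=g_i$, the objective equals $g_m/(g_m-g_{m-1})$ --- matching the lower bound (i), hence this strategy is optimal --- and $x_m=x_1(g_m-g_{m-1})=u$, as required. Tightness of $C_2,\ldots,C_m$ forces $x_i=w(x_{i-1}-x_{i-2})$ for $i\ge3$ (with $x_2=(w-1)x_1$), and solving this second-order recurrence (characteristic polynomial $x^2-wx+w$: distinct roots $\rho_1,\rho_2$ if $w>4$, double root $2$ if $w=4$) under the data $x_m=u$ reproduces the stated closed forms for $x_{m,u,i}$; confirming the match is routine (plug $i=m$ to get $u$, $i=1$ to get $a_{m-1}u$). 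Finally I would verify that $X_{m,u}$ is genuinely feasible for $L_{m,u}$: $x_i=x_1(g_i-g_{i-1})>0$ and $x_i<x_{i+1}$ for all $i\ge1$ because $g$ is increasing and strictly convex (its ratios $g_i/g_{i-1}$ exceed $\rho_2\ge2$, which in turn exceeds $\tfrac{w-1}{w-2}$); $C_1$ is exactly the hypothesis; and extending the same recurrence past index $m$ keeps the sequence positive and increasing and makes each $C_i$ with $i>m$ hold with equality. Hence $X_{m,u}$ is an optimal feasible solution precisely when $a_{m-1}u\le w$.

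The step I expect to be the crux is the ratio inequality $\tilde S_i/\tilde S_{i-1}\le g_i/g_{i-1}$ --- and, more importantly, the observation that it simultaneously controls the lower bound on $x_1$ (for feasibility) and on the objective (for optimality). This is what lets me sidestep the more standard route of arguing, via a local exchange/perturbation argument, that an optimal solution must have $C_2,\ldots,C_m$ all tight --- a route that gets tangled with the strict monotonicity constraints $x_{i-1}<x_i$ (one has to track which variable may be decreased when such a constraint is on the verge of being tight). A secondary, purely computational obstacle is the algebra confirming that the explicit $\alpha,\beta$ stated in the lemma reconstruct exactly the recurrence solution pinned down by $x_m=u$.
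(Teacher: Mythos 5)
Your proof is correct, and it reaches the result by a genuinely different route from the paper's. The paper proceeds by a \emph{backward} induction from $i=m$ down to $i=1$ (Lemma~\ref{lem:main_lemma}), establishing the simultaneous affine lower bounds $x_i \ge a_{m-i}\,u + b_{m-i}\,T_{i-1}$ and $\mathrm{Obj} \ge c_{m-i}\,u + d_{m-i}\,T_i$, with equality exactly when $C_{i+1},\dots,C_m$ are tight; specializing to $i=1$ gives both the necessity of $a_{m-1}u\le w$ (Corollary~\ref{cor:L_m_u_feasibility}, via $x_1\ge a_{m-1}u$ and $C_1$) and the optimality of the all-tight solution. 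You instead normalize by $x_1$, observe that the objective and the constraints $C_2,\dots,C_m$ depend only on the ratios of partial sums, and run a \emph{forward} induction comparing $\tilde S_i/\tilde S_{i-1}$ against the extremal sequence $g_i=w(g_{i-1}-g_{i-2})$; your identities $a_i=1/(g_{i+1}-g_i)$ and $b_i=g_i/(g_{i+1}-g_i)$ do recover the recurrences~\eqref{eq:rec_a}, and all the individual steps (the ratio induction, the deduction $\tilde S_m-\tilde S_{m-1}\le g_m-g_{m-1}$, the convexity check $g_i/g_{i-1}\ge\rho_2\ge 2>\frac{w-1}{w-2}$ giving $x_i<x_{i+1}$) are sound. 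The two inductions are dual descriptions of the same extremal structure, and both legitimately bypass a perturbation/exchange argument. Your normalization is arguably the cleaner framing: it isolates $C_1$, i.e.\ $x_1\le w$, as the only place where $u$ enters feasibility, which makes the equivalence with $a_{m-1}u\le w$ transparent. What the paper's version buys in exchange is the closed form $c_m$ for the optimal objective value, which is reused verbatim in the proof of Lemma~\ref{lem:bidding.r}; from your formulation the optimal value comes out as $g_m/(g_m-g_{m-1})=1+b_{m-1}$, which equals $c_m$ but needs one more identity to see. The remaining work you flag as routine --- solving the second-order recurrence with boundary data $x_1=a_{m-1}u$, $x_m=u$ to get the stated $\alpha,\beta$ --- is exactly the content of the paper's Lemma~\ref{lem:x_star_closed_formula}.
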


We can now give the statement of the optimal strategy $X_u^*$. First, we can argue that the optimal objective value of
$L_{m,u}$ is monotone increasing in $m$, thus it suffices to find the objective value of the smallest $m^*$ for which $L_{m^*,u}$ is feasible;
This can be accomplished with a binary search in the interval $[1,\lceil \log u \rceil]$ , since we know that the doubling strategy in
which the $i$-th bid equals $2^i$ is $w$-competitive for all $w \geq 4$; hence $m^* \leq \lceil \log u \rceil $.
Then $X^*_u$ is derived as in the statement of Lemma~\ref{lem:bidding.recurrence}. 

Last, the following lemma allows us to express $r^*$ as a function of the values of the sequence $b$, which we can further exploit so as to obtain the exact value of $r^*_u$.
\begin{lemma}[Appendix]
It holds that $r^*=1+\sum_{i=1}^\infty \prod_{j=1}^{i-1} b_j$. Furthermore, $r^*=\frac{w-\sqrt{w^2-4w}}{2}$.
\label{lem:bidding.r}
\end{lemma}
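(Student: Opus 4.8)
The plan is to compute $r^*_{m,u}$ in closed form in terms of the sequence $(b_i)$, take the resulting supremum, and then solve a fixed-point equation. Recall that $r^*_u=\inf_m r^*_{m,u}$, that $r^*=\sup_u r^*_u$, that $r^*_{m,u}$ is the optimal value of $L_{m,u}$, and that by Lemma~\ref{lem:bidding.recurrence} the program $L_{m,u}$ is feasible exactly when $a_{m-1}\cdot u\le w$. First I would record the monotonicity already noted in the text: $r^*_{m,u}$ is non-decreasing in $m$ over the feasible range $\{m:a_{m-1}u\le w\}$ (which, since the $a_i$ are decreasing, is a final segment $\{m\ge m^*(u)\}$), so $r^*_u=r^*_{m^*(u),u}$ with $m^*(u):=\min\{m:a_{m-1}u\le w\}$. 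Since the recurrence makes $0\le b_i<1$ immediate and hence $w-1-b_i>w-2\ge 2$, we get $a_i<a_{i-1}/2\to 0$; with $a_0=1$ this makes $m^*$ surjective onto $\mathbb{N}^+$, which is what will let $\sup_u$ be replaced by $\sup_m$.

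Next I would evaluate $r^*_{m,u}$ at $m=m^*(u)$. For $m^*(u)\ge 2$, infeasibility of $m-1$ forces $a_{m-2}u>w$, hence $a_{m-1}u>w/(w-1)>1$, so at this index the optimal solution of $L_{m,u}$ lies in the regime where the constraint $x_0<x_1$ is slack; it is therefore exactly the solution of Lemma~\ref{lem:bidding.recurrence}, which makes $C_2,\dots,C_m$ tight. From $C_m$ being tight, $\sum_{i=1}^m x_i=w\,x_{m-1}$, and since $x_m=u$ this gives $r^*_{m,u}=w\,x_{m-1}/x_m$; a short induction on $i$ from the defining recurrence of $(b_i)$ shows the optimal bids satisfy $x_i/x_{i-1}=w-1-b_{i-2}$, so $r^*_{m,u}=\frac{w}{w-1-b_{m-2}}=1+b_{m-1}$, the last step being the $b$-recurrence itself. (The case $m^*(u)=1$ is trivial: $r^*_{1,u}=1=1+b_0$.) Because $(b_m)$ is increasing, $r^*=\sup_u\big(1+b_{m^*(u)-1}\big)=1+\lim_m b_m$; writing $\lim_m b_m=b_0+\sum_{k=1}^{\infty}(b_k-b_{k-1})$ and using $b_0=0$ together with the telescoping identity $b_k-b_{k-1}=\prod_{j=1}^{k}b_j$ (another restatement of the recurrence), this is exactly the series $\sum_{i=1}^{\infty}\prod_{j=1}^{i-1}b_j$, giving the first assertion $r^*=1+\sum_{i=1}^{\infty}\prod_{j=1}^{i-1}b_j$.

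Finally I would evaluate the limit directly. The sequence $(b_m)$ is increasing and bounded (an easy induction keeps it below the smaller root of $x^2-(w-2)x+1$), hence converges to some $\beta$ that must satisfy the fixed-point equation $\beta=(1+\beta)/(w-1-\beta)$, i.e. $\beta^2-(w-2)\beta+1=0$. Substituting $\rho=1+\beta$ turns this into $\rho^2-w\rho+w=0$, the characteristic polynomial of the bid recurrence, whose roots are $\rho_1,\rho_2$; since $b_0=0$ the increasing sequence tends to the smaller fixed point, $\beta=\frac{w-2-\sqrt{w^2-4w}}{2}$ (and $\beta=1$ when $w=4$), so $r^*=1+\beta=\frac{w-\sqrt{w^2-4w}}{2}=\rho_1$, as claimed. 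The step I expect to be the real work is the middle one --- checking that at $m=m^*(u)$ one is genuinely in the non-degenerate regime so that Lemma~\ref{lem:bidding.recurrence} applies cleanly, and carrying out the bookkeeping that collapses $\frac1u\sum_i x_i$ first to $1+b_{m-1}$ and then into the series $\sum_i\prod_j b_j$; the closing fixed-point computation is routine.
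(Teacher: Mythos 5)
Your proof is correct, and it reaches the result by a genuinely different computation than the paper's. The paper's proof is essentially two lines: it invokes $r^*_{m,u}=c_m$ (via Lemma~\ref{lem:main_lemma} and Lemma~\ref{lem:bidding.recurrence}) and then substitutes the explicit closed forms for $c_m$ obtained in Lemma~\ref{lem:a_b_c_d} by linearizing the $b$-recurrence (the substitution $u_i=1/(b_i-p)$), finally letting $m\to\infty$ to get $1+p$. You avoid the closed forms entirely: tightness of $C_2,\dots,C_m$ gives $r^*_{m,u}=w\,x_{m-1}/x_m$, the ratio identity $x_i/x_{i-1}=w-1-b_{i-2}$ (anchored at $i=2$, where $C_2$ tight gives $x_2=(w-1)x_1$, since $C_1$ need not be tight) collapses this to $1+b_{m-1}$ --- which indeed equals $c_m$ --- the telescoping identity $b_k-b_{k-1}=\prod_{j=1}^{k}b_j$ yields the series form, and a monotone-convergence/fixed-point argument on the $b$-recurrence gives the value. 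What your route buys is self-containedness (only the recurrences are needed, not Lemma~\ref{lem:a_b_c_d}) and a more careful handling of the quantifier structure $\sup_u\inf_m$: you justify replacing it by $\lim_m$ via monotonicity of $r^*_{m,u}$ in $m$ and surjectivity of $u\mapsto m^*(u)$, a step the paper compresses into ``$r^*=\lim_m r^*_{m,u}$''. What the paper's route buys is brevity, given that Lemma~\ref{lem:a_b_c_d} is needed elsewhere anyway. One cosmetic remark: your identification of $\lim_m b_m$ with $\sum_{i\ge1}\prod_{j=1}^{i-1}b_j$ inherits the same empty-product off-by-one present in the lemma's statement (under the usual convention the $i=1$ term equals $1$, so the displayed series gives $1+\rho_1$ rather than $\rho_1$); the correct chain is the one you actually compute, $r^*=1+\lim_m b_m=1+p=\rho_1$, and this does not affect the second, quantitative assertion.
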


Combining Lemmas~\ref{lem:bidding.recurrence} and \ref{lem:bidding.r} we obtain our main result of this section:
\begin{theorem}
Strategy $X^*_u$ is Pareto-optimal and is $(\frac{w-\sqrt{w^2-4w}}{2},w)$-competitive.
\label{thm:bidding.optimal}
\end{theorem}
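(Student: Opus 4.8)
The plan is to assemble Theorem~\ref{thm:bidding.optimal} from the linear-programming machinery set up above together with Lemmas~\ref{lem:bidding.recurrence} and~\ref{lem:bidding.r}; throughout write $\rho_1=\frac{w-\sqrt{w^2-4w}}{2}$. First I would check that the algorithm using $X^*_u$ is $(\rho_1,w)$-competitive. For the untrusted ratio, note that whatever value $u'$ the (possibly corrupted) advice encodes, the strategy actually run is $X^*_{u'}$, which by Lemma~\ref{lem:bidding.recurrence} is an optimal feasible solution of $L_{m^*,u'}$ for the smallest feasible index $m^*$ — such an $m^*$ exists since the doubling strategy witnesses feasibility when $w\ge 4$ — and in particular its bids obey every constraint $C_i$, i.e.\ $\sum_{j=1}^i x_j\le w\,x_{i-1}$. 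A one-line argument then gives that for any true hidden value $v$ with $x_{j-1}<v\le x_j$ the cost is $\sum_{i=1}^j x_i\le w\,x_{j-1}<w\,v$, so $w_{X^*}\le w$. For the trusted ratio, on instance $u$ with correct advice the strategy $X^*_u$ has its $m^*$-th bid equal to $u$ and increasing bids, hence reveals $u$ exactly on bid $m^*$ at total cost $r^*_{m^*,u}\cdot u$; since the optimal value of $L_{m,u}$ is nondecreasing in $m$, the smallest feasible $m$ realizes $\inf_m r^*_{m,u}=r^*_u$, and Lemma~\ref{lem:bidding.r} gives $\sup_u r^*_u=r^*=\rho_1$, so $r_{X^*}\le\rho_1$.

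Next I would prove the matching lower bound: every deterministic $(r,w)$-competitive strategy $X$ whose (arbitrarily long) advice encodes $u$ has $r\ge\rho_1$. Fix such an $X$ and a hidden value $u$. Feeding $X$ the correct advice for $u$ yields an increasing bid sequence $(x_i)$; because $w_X\le w$, letting the true value be an arbitrary $v$ slightly above $x_{i-1}$ forces $\sum_{j=1}^i x_j\le w\,x_{i-1}$ for every $i$ (in the closure), i.e.\ all constraints $C_i$ hold. Let $m$ be the index with $x_{m-1}<u\le x_m$. Then $(x_i)$ paired with $r_{m,u}:=\frac1u\sum_{j=1}^m x_j$ is feasible for $P_{m,u}$, hence $\frac1u\sum_{j=1}^m x_j\ge r^*_{m,u}\ge\inf_{m'}r^*_{m',u}=r^*_u$; that is, the cost ratio of $X$ on $u$ under trusted advice is at least $r^*_u$. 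Taking the supremum over $u$ and using Lemma~\ref{lem:bidding.r} gives $r\ge r_X\ge\sup_u r^*_u=r^*=\rho_1$.

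Finally, to obtain Pareto-optimality, let $\mathcal A=\{X^*_u:w\ge 4\}$ be the family obtained by letting the design parameter $w$ range. The map $w\mapsto\frac{w-\sqrt{w^2-4w}}{2}$ is strictly decreasing on $[4,\infty)$ (from $2$ toward $1$), so the members of $\mathcal A$ are pairwise incomparable. Given any strategy $B$ with competitive pair $(r_B,w_B)$, we have $w_B\ge 4$, since fixing any advice string turns $B$ into an advice-free online algorithm, whose competitive ratio is at least $4$; then the member of $\mathcal A$ with parameter $w=w_B$ is $(\rho_1(w_B),w_B)$-competitive, and the lower bound (applied with $w=w_B$) yields $\rho_1(w_B)\le r_B$, so this member dominates $B$. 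Hence $\mathcal A$ is Pareto-optimal; in particular no algorithm strictly dominates $X^*_u$, which is the claimed statement.

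I expect the genuinely hard work to lie in the two lemmas we are permitted to assume — solving the recurrence~\eqref{eq:rec_a} in closed form and evaluating $r^*=1+\sum_{i\ge 1}\prod_{j<i}b_j=\rho_1$. Within the assembly above, the delicate points are: (i) arguing that $w$-competitiveness of any strategy forces the constraints $C_i$ on its trusted bid sequence, via a limiting argument on the hidden value, so that this sequence is $P_{m,u}$-feasible; (ii) the monotonicity in $m$ of the optimal value of $L_{m,u}$ together with finiteness of the smallest feasible $m$; and (iii) handling the supremum over the continuum of hidden values (the identity $\sup_u r^*_u=\rho_1$ may hold only in the limit $u\to\infty$), which is precisely why $(\rho_1,w)$ is the exact statement.
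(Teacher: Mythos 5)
Your proposal is correct and follows essentially the same route as the paper: the upper bound comes from the feasibility/optimality of $X^*_{m^*,u}$ for $L_{m^*,u}$ together with Lemma~\ref{lem:bidding.r}, and Pareto-optimality comes from observing that $w$-competitiveness forces any strategy's trusted bid sequence to satisfy the constraints $C_i$, hence to be feasible for $P_{m,u}$ and thus have trusted ratio at least $r^*_u$, with $\sup_u r^*_u=\rho_1$. You merely make explicit (correctly) several steps the paper leaves implicit, such as the monotonicity of the optimal value of $L_{m,u}$ in $m$ and the limiting nature of the supremum over $u$.
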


\subsubsection{Advice is a $k$-bit string}
\label{subsec:bidding.kbits}

The strategy $X_u^*$ studied in the previous section requires $u$ as advice, which can be unbounded.
A natural question is what competitiveness can one achieve with $k$ advice bits, for some fixed $k$. We address this question
both from the point of view of upper and lower bounds. Concerning upper bounds, we show the following:

\begin{theorem}
  For every $w\geq 4$,
  there exists a bidding strategy with $k$ bits of
  advice which is $(r, w)$-competitive, where
  \[
      r = \left\{
      \begin{array}{ll}
        \frac{\left(w+\sqrt{w^2-4w}\right)^{1+1/K}}{2^{1/K}(w+\sqrt{w^2-4w}-2)}
        & \mbox{if } w \leq (1+K)^2/K
        \\[1em]
        \frac{(1+K)^{1+1/K}}{K}
        &
        \mbox{if }  w \geq (1+K)^2/K.
      \end{array}
      \right.
  \]
  and  where $K=2^k$.
\label{thm:bidding.upper.k}
\end{theorem}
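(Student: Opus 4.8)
The plan is to construct an explicit family of strategies, one for each of the $K=2^{k}$ advice strings, together with a rule for encoding the advice, and then to bound its trusted and untrusted ratios by hand. The whole construction depends on a single base $\gamma>1$ that governs the multiplicative gap between consecutive bids: $\gamma$ must be small enough that the strategy stays $w$-competitive when the advice is adversarial, and, subject to that, as close to $K+1$ as possible so as to minimise the ratio when the advice is correct. Concretely I would take $\gamma:=K+1$ when $w\geq (1+K)^2/K$ and $\gamma:=\rho_2=\frac{w+\sqrt{w^2-4w}}{2}$ otherwise; a short computation (the parabola $\gamma^2-w\gamma+w$ is $\leq 0$ exactly on $[\rho_1,\rho_2]$, its value at $K+1$ is $(K+1)^2-wK$, and $K+1$ lies to the right of the vertex $w/2$ precisely when $w<(1+K)^2/K$) shows that in both cases $\gamma\in[\rho_1,\rho_2]$ and that $\gamma$ is the point of $[\rho_1,\rho_2]$ closest to $K+1$. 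Set $\Delta:=\gamma^{1/K}$, so that $\gamma=\Delta^{K}$.

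The advice is the single integer $\phi:=\lfloor\log_\Delta u\rfloor \bmod K\in\{0,\dots,K-1\}$, which fits in $k$ bits. Given advice $\phi$, the algorithm plays the bids $x^{(\phi)}_i:=\Delta^{\phi+1}\gamma^{i-1}=\Delta^{\phi+1+(i-1)K}$ for $i\geq 1$ (with the convention $x^{(\phi)}_0:=1$); thus consecutive bids differ by the factor $\gamma=\Delta^{K}$, and $x^{(\phi)}_i$ is exactly the right endpoint $\Delta^{\phi+(i-1)K+1}$ of the $i$-th interval $[\Delta^{\phi+(i-1)K},\Delta^{\phi+(i-1)K+1})$ that a trusted advice $\phi$ leaves open.

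The analysis then splits into the two ratios. For the untrusted ratio I would verify the constraints $\sum_{j=1}^{i}x^{(\phi)}_j\leq w\,x^{(\phi)}_{i-1}$ for every $i\geq 1$ and every $\phi$; summing the geometric series reduces the case $i\geq 2$ to the inequality $\gamma^{2}\leq w(\gamma-1)$, which holds precisely because $\gamma\in[\rho_1,\rho_2]$, while the case $i=1$ needs only $x^{(\phi)}_1=\Delta^{\phi+1}\leq\Delta^{K}=\gamma\leq\rho_2\leq w$. The usual charging argument---if $u$ is first reached by bid $i$ then the cost is $\sum_{j\leq i}x^{(\phi)}_j\leq w\,x^{(\phi)}_{i-1}\leq w\,u$---then shows that every $X^{(\phi)}$ is $w$-competitive for all $u\geq 1$, so $w_A\leq w$ (when $\gamma=K+1$ the sharper bound $w_A\leq (1+K)^2/K$ in fact holds, but $\leq w$ is all we need). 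For the trusted ratio I would use that, when $\phi=\phi(u)$, the value satisfies $\Delta^{\phi+jK}\leq u<\Delta^{\phi+jK+1}$ for some $j\geq 0$; comparing this interval against the bids $x^{(\phi)}_i=\Delta^{\phi+1+(i-1)K}$ shows that $x^{(\phi)}_{j}<u\leq x^{(\phi)}_{j+1}$, i.e.\ bid $j+1$ reveals $u$, so the cost equals $\Delta^{\phi+1}\frac{\gamma^{j+1}-1}{\gamma-1}$. Dividing by $u\geq\Delta^{\phi+jK}=\Delta^{\phi}\gamma^{j}$, the ratio is largest in the limit $u\to\Delta^{\phi+jK}$ and increases with $j$, with supremum $\frac{\Delta\gamma}{\gamma-1}=\frac{\gamma^{1+1/K}}{\gamma-1}$. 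Substituting $\gamma=K+1$ gives the second branch of the statement, and substituting $\gamma=\rho_2$ and rewriting $\rho_2$ and $\rho_2-1$ through $w$ (using $\rho_2^{1+1/K}/(\rho_2-1)=(w+\sqrt{w^2-4w})^{1+1/K}/\bigl(2^{1/K}(w+\sqrt{w^2-4w}-2)\bigr)$) gives the first.

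The step I expect to be the main obstacle is the trusted-case bookkeeping: one must extract, from the congruence $\phi=\lfloor\log_\Delta u\rfloor\bmod K$ together with $u\geq 1$, the exact index $j$ with $\lfloor\log_\Delta u\rfloor=\phi+jK$, then match it against the geometric bid sequence to pin down which bid first reaches $u$, keeping the floors and the modular reduction mutually consistent, and finally confirm that the supremum of the cost-to-$u$ ratio over the whole preimage of a pair $(\phi,j)$ is attained as a one-sided limit at the left endpoint and is monotone in $j$. Two smaller technical points are the optimality of the choice of $\gamma$ (one should observe that $g(\gamma):=\gamma^{1+1/K}/(\gamma-1)$ is minimised over $\gamma>1$ at $\gamma=K+1$ and is increasing for $\gamma>K+1$, which is exactly what forces the split into the two stated regimes) and the degenerate configurations $w=4$ (double root $\rho_1=\rho_2=2$) and $K=1$, which should be checked to be covered by the same closed forms.
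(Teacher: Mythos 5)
Your proposal is correct and follows essentially the same route as the paper: the paper's strategy plays the bids $x_i=\rho^{i+a/K}$ with the advice $a\in\{0,\dots,K-1\}$ selecting the fractional shift, bounds the untrusted ratio by $\sup_i \sum_{j\le i}x_j/x_{i-1}=\rho^2/(\rho-1)\le w$ (equivalently $\rho\in[\rho_1,\rho_2]$) and the trusted ratio by $\rho^{1+1/K}/(\rho-1)$ via the overshoot factor $\rho^{1/K}$, and then optimizes $\rho$ as the point of $[\rho_1,\rho_2]$ closest to $1+K$, exactly as you do. Your write-up is simply a more explicit rendering of the same construction, advice encoding, and case split.
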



\begin{proof}
	The strategy has a parameter $\rho>1$ which will be specified later.
	The advice corresponding to the hidden value $u$ is the number $a \in \{ 0, \ldots, K-1 \}$ such that there exists $i \in \mathbb{N}$ for which $\rho^{i+\frac{a}{K}}\geq u$ and $\rho^{i+\frac{a}{K}}$ is as small as possible. From this we derive that 
	\[
a=\lceil K \log_{\rho} u \rceil \bmod K.
	\]
For the intuition behind the above choice, we may think of $K$ different bidding sequences of the form 
$(\rho^{i+\frac{l}{K}})_{i\geq 0}$, with $l\in \{0,\ldots k-1\}$, then the advice points to the sequence which discovers the target $u$ at minimum possible cost.


	If the advice is untrusted, then the worst-case untrusted competitive ratio occurs when $u$ is infinitesimally larger than some $x_{i-1}$. Thus,
	\[
	w_A= \sup_i \frac{\sum_{j=1}^i x_j}{x_{i-1}}=  \sup_i \frac{\sum_{j=1}^i x_j}{x_i/\rho} = \frac{\rho^2}{\rho-1}.
	\]
	
	This bound is convex in $\rho$ and equals $w$ for the following values
	\[
	\rho_{1,2} = \frac{w \mp \sqrt{w^2 - 4w}}{2}.
	\]
	Hence for any $\rho\in[\rho_1,\rho_2]$ the untrusted ratio is at most $w$.
	
	To analyze the trusted competitive ratio $r_A$ we observe that if the advice is correct, the ratio between the hidden value and the first
	successful bid is most $\rho^{1/K}$.  Thus the trusted ratio is at most
	\begin{align*}
		r_A \leq \sup_i \frac{\sum_{j=1}^i x_j}{x_i/\rho^{1/K}}
		= \frac{\rho^{1+1/K}}{\rho-1}.
	\end{align*}
	We now argue how to choose an appropriate base $\rho$.
	Using second order analysis we observe the above upper bound on $r_A$ has an extreme point at $\rho=1+K$.
	The first derivative of the ratio is then
	\[
	\frac{\rho^{1/K} (-K+\rho-1)}{K (\rho-1)^2}
	\]
	which is negative for $1< \rho < 1+K$ and positive for $\rho > 1+K$. Hence the ratio is minimum at its extreme point $\rho=1+K$.
	
	However, the choice of $\rho$ needs to satisfy $\rho\in[\rho_1,\rho_2]$. Since $\rho_2$ is increasing in $w$, there is a threshold $w_0$ such that $1+K \leq \rho_2$ iff $w\geq w_0$. This value is
	\[
	w_0 = \frac{(1+K)^2}{K}.
	\]
	Choosing $\rho=\rho_2$ whenever $w\leq w_0$ and $\rho=1+K$ yields the claimed bound on the right ratio $r_A$.
	Note that for both choices we have $\rho\geq \rho_1$ as required since $\rho_1\leq 2$ because $\rho_1$ is decreasing in $w$ and equals $2$ at $w=4$.
\end{proof}

In particular, for $w=4$, the strategy of Theorem~\ref{thm:bidding.upper.k} is $(2^{1+\frac{1}{2^k}},4)$-competitive, whereas $X_u^*$ is $(2,4)$-competitive.
In what follows, we will prove 
a lower bound on the competitiveness of any bidding strategy with $k$ bits (Theorem~\ref{thm:bidding.lower.k}). The result shows that one needs an unbounded
number of bids to achieve $(2,4)$-competitiveness. 

We begin with a useful property of all competitively optimal strategies (in the standard, no-advice model). The property
essentially shows that competitively optimal strategies are small variants of the doubling strategy.

\begin{lemma}
	For any 4-competitive strategy $X=(x_i)$ for online bidding, it holds that
	\[
	x_i \leq (2+\frac{2}{i})x_{i-1},
	\]
	for all $i \geq 1$, where $x_0$ is defined to be equal to 1.
	\label{lemma:bb}
\end{lemma}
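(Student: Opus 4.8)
The plan is to prove the bound $x_i \leq (2 + \tfrac{2}{i}) x_{i-1}$ by leveraging the $4$-competitiveness of $X$ at two consecutive ``critical'' values of the hidden $u$, and then chaining the resulting inequalities. Recall that $4$-competitiveness means $\sum_{j=1}^{i} x_j \leq 4 x_{i-1}$ for every $i$ (taking $u$ infinitesimally above $x_{i-1}$, so that the $i$-th bid is the first successful one and $\opt = u \to x_{i-1}$). Write $T_i = \sum_{j=1}^i x_j$ as in the earlier appendix, so the competitiveness constraint reads $T_i \leq 4 x_{i-1}$, i.e. $x_i \leq 4 x_{i-1} - T_{i-1}$. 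The key point is that this is not tight enough on its own; we must exploit that the \emph{same} constraint holds simultaneously for all indices, which forces the partial sums $T_{i-1}$ to be reasonably large relative to $x_{i-1}$.

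\textbf{Step 1: A lower bound on $T_{i-1}/x_{i-1}$.} I would first show by induction that $T_i \geq \left(1 + \tfrac{i}{2}\right) x_i$ equivalently $T_{i-1} \ge \tfrac{i+1}{2}\,x_{i-1}$, or some comparable linear-in-$i$ lower bound. The base case is immediate since $T_1 = x_1 \geq x_1$. For the inductive step, combine $x_{i-1} < x_i$ (strict monotonicity of bids) with the competitiveness constraint at index $i$: from $T_i = T_{i-1} + x_i$ and the inductive lower bound on $T_{i-1}$, together with $x_i > x_{i-1}$, one propagates a slightly weaker coefficient. The precise coefficient is exactly what is needed to later extract the $2 + 2/i$ factor; the sequences $a_i, b_i$ from \eqref{eq:rec_a} with $w = 4$ (where $b_i = i/(i+2)$, $a_i = \tfrac{2}{i+2}\cdot 2^{-i}$) are the guide for the right constants, and in fact Lemma~\ref{lem:main_lemma} specialized to $w=4$ essentially gives $T_i \geq c_{m-i} u + d_{m-i} T_i$ type relations — but here I want a clean standalone induction rather than invoking the LP machinery.

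\textbf{Step 2: Combine.} With the lower bound $T_{i-1} \geq \lambda_i\, x_{i-1}$ in hand (where $\lambda_i$ grows linearly), plug into $x_i \leq 4 x_{i-1} - T_{i-1} \leq (4 - \lambda_i)\, x_{i-1}$ and check that $4 - \lambda_i \leq 2 + 2/i$, i.e. $\lambda_i \geq 2 - 2/i = \tfrac{2(i-1)}{i}$. So the target reduces to proving $T_{i-1} \geq \tfrac{2(i-1)}{i} x_{i-1}$ for all $i \geq 1$ (which is vacuous at $i=1$, consistent with $x_1 \le 3 x_0$ not actually being claimed there — one should double check the $i=1$ boundary where $x_0 = 1$). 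This is the cleaner reformulation I would actually prove by induction in Step 1.

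\textbf{Main obstacle.} The delicate part is getting the induction in Step 1 to close with \emph{exactly} the coefficient $\tfrac{2(i-1)}{i}$ (or its shifted version $\tfrac{i+1}{2}$ for $T_i/x_i$) rather than something asymptotically equivalent but off by lower-order terms — a naive induction tends to lose a constant and only yield $x_i \le (2+c/i) x_{i-1}$ for some $c > 2$, or a bound valid only for large $i$. The fix is to use the strict inequality $x_{i-1} < x_i$ carefully and possibly to track the ratio $T_i / x_i$ directly as the induction variable, showing it satisfies a recurrence whose solution is exactly $1 + i/2$. I would also double-check the edge cases $i = 1, 2$ separately, since the inductive argument and the formula for $x_0 = 1$ interact subtly there, and the statement must hold for \emph{all} $i \geq 1$.
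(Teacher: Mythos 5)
Your overall strategy is the same as the paper's: derive a lower bound on the partial sum $T_{i-1}=\sum_{j=1}^{i-1}x_j$ relative to $x_{i-1}$, and feed it into the competitiveness constraint $x_i \le 4x_{i-1}-T_{i-1}$. Your Step~2 reduction is correct: the target $x_i\le(2+\tfrac{2}{i})x_{i-1}$ is exactly equivalent to $T_{i-1}\ge(2-\tfrac{2}{i})x_{i-1}$, and this is precisely what the paper establishes (in the guise of $\sum_{k=1}^{i-1}x_k\ge P_i\,x_i$ with $P_i=1-\tfrac{2}{i+1}$, which is the same statement after dividing by the ratio $x_i/x_{i-1}$).

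However, your Step~1 as first stated is false, not merely imprecise. The claim $T_i\ge(1+\tfrac{i}{2})x_i$, equivalently $T_{i-1}\ge\tfrac{i+1}{2}x_{i-1}$, asserts a lower bound on $T_{i-1}/x_{i-1}$ that grows linearly in $i$; but $4$-competitiveness itself forces $T_{i-1}\le T_i\le 4x_{i-1}$, so $T_{i-1}/x_{i-1}\le 4$ for all $i$ (and for the doubling strategy it is below $2$). Hence no ``linear-in-$i$'' lower bound can hold, and the closed form $1+i/2$ you propose for the recurrence on $T_i/x_i$ is likewise wrong: the recurrence arising from the worst case, $Q_{i+1}=1+\tfrac{Q_i}{4-Q_i}$ with $Q_1=1$, has solution $Q_i=2-\tfrac{2}{i+1}$, which is bounded. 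Your self-correction in Step~2 to the coefficient $\lambda_i=2-\tfrac{2}{i}$ is the right target, but you never actually close the induction for it, and it cannot be closed ``standalone'': to convert $T_{i-1}\ge\lambda_i x_{i-1}$ into a bound on $T_i/x_i$ you need an upper bound on $x_i/x_{i-1}$, which is the very inequality being proved. The argument must therefore be a joint induction: from $T_i\ge(2-\tfrac{2}{i+1})x_i$ and $T_{i+1}\le 4x_i$ deduce $x_{i+1}\le(2+\tfrac{2}{i+1})x_i$, and then $T_{i+1}=T_i+x_{i+1}\ge\bigl(1+\tfrac{2-2/(i+1)}{2+2/(i+1)}\bigr)x_{i+1}=(2-\tfrac{2}{i+2})x_{i+1}$, which restores the hypothesis at $i+1$. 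This simultaneous induction is exactly the paper's proof (its identity $P_{i+1}=\tfrac{1+P_i}{2+2/(i+1)}$ is the same computation), so once you fix the coefficient and make the induction joint, your proof coincides with the paper's.
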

\begin{proof}
	For $i=1$, the claim holds since $x_1$ must be at most 4, otherwise
	the strategy cannot be 4-competitive. We can then assume that $i \geq 2$,
	and prove the claim by induction on $i$.
	Suppose that the claim holds for all $j \leq i$, that is
	$x_{j} \leq (2+\frac{2}{j})x_{j-1}$, for all $j \leq i$. This implies
	that
	\begin{equation}
		x_{i-j} \geq \frac{1}{\prod_{k=0}^{j-1} (2+\frac{2}{i-k})}x_i.
		\label{eq:bb.1}
	\end{equation}
	We will show that the claim holds for $i+1$. From the 4-competitiveness of the strategy $X$ we have that
	\[
	\frac{\sum_{j=1}^{i+1} x_j}{x_i} \leq 4 \Rightarrow x_{i+1} +\sum_{k=1}^{i-1} x_k \leq 3x_i,
	\]
	and substituting $x_1, ,\ldots x_{i-1}$ using~\eqref{eq:bb.1}, we obtain that
	\[
	x_{i+1} \leq (3-P_i)x_i, \quad \mbox{ where } P_i=\sum_{k=1}^{i-1} \frac{1}{\prod_{j=0}^{k-1} (2+\frac{2}{i-j})}.
	\]
	It then suffices to show that
	\begin{equation}
		3-P_i \leq 2+\frac{2}{i+1} \quad \mbox{or equivalently } \ P_i \geq 1-\frac{2}{i+1}.
		\label{eq:bb.2}
	\end{equation}
	We will prove~\eqref{eq:bb.2} by induction on $i$, in fact, even stronger, we will show that~\eqref{eq:bb.2} holds with
	equality. For $i=2$, this claim can be readily verified. Assuming that it holds
	for $i$, we will show that it holds for $i+1$.
	Indeed we have that
	\[
	P_{i+1} =\frac{1}{2+\frac{2}{i+1}} (1+P_i) = \frac{1}{2+\frac{2}{i+1}} \left(2-\frac{2}{i+1}\right)
	= 1-\frac{2}{i+2},
	\]
	where the second equality follows from the induction hypothesis, and the third equality can be readily verified.
\end{proof}

\begin{corollary}
	For any 4-competitive strategy $X=(x_i)$, and every $\epsilon>0$, there exists $i_0$ such that for all $i\geq i_0$, it holds
	that
	\[
	\sum_{j=1}^{i} x_j \geq (2-\epsilon)x_i.
	\]
	\label{cor:bb}
\end{corollary}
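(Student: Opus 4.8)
The plan is to recycle the machinery developed in the proof of Lemma~\ref{lemma:bb}, turning its one‑sided estimates into a matching lower bound on the partial sums. The starting observation is that Lemma~\ref{lemma:bb} holds for \emph{every} index $i\geq 1$, so telescoping it downwards gives, unconditionally, exactly the inequality~\eqref{eq:bb.1}: for all $0\leq k\leq i-1$,
\[
x_{i-k}\ \geq\ \frac{1}{\prod_{j=0}^{k-1}\bigl(2+\tfrac{2}{i-j}\bigr)}\,x_i .
\]

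Next I would sum this over $k$ from $0$ to $i-1$, the term $k=0$ contributing $x_i$ itself since the empty product equals $1$:
\[
\sum_{j=1}^{i} x_j\ =\ \sum_{k=0}^{i-1} x_{i-k}\ \geq\ x_i\Bigl(1+\sum_{k=1}^{i-1}\frac{1}{\prod_{j=0}^{k-1}\bigl(2+\tfrac{2}{i-j}\bigr)}\Bigr)\ =\ x_i\,(1+P_i),
\]
where $P_i$ is precisely the quantity defined inside the proof of Lemma~\ref{lemma:bb}. But there it was established — as~\eqref{eq:bb.2}, and in fact with equality — that $P_i=1-\tfrac{2}{i+1}$ for all $i\geq 2$ (and trivially for $i=1$ as well). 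Hence $\sum_{j=1}^{i}x_j\geq\bigl(2-\tfrac{2}{i+1}\bigr)x_i$ for every $i\geq 1$.

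It then only remains to absorb the error term: given $\epsilon>0$, take any integer $i_0\geq \tfrac{2}{\epsilon}-1$, so that $\tfrac{2}{i+1}\leq\epsilon$ for all $i\geq i_0$, and the claimed bound $\sum_{j=1}^i x_j\geq(2-\epsilon)x_i$ follows. There is no genuine obstacle in this argument; the only conceptual point is to notice that summing the (one‑sided) bounds of Lemma~\ref{lemma:bb} together with the \emph{exact} evaluation $P_i=1-\tfrac{2}{i+1}$ produces a partial‑sum lower bound whose leading constant is exactly $2$, which is what makes $2$ (and not a smaller value) appear in the limit.
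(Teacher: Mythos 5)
Your argument is correct and is essentially the paper's own proof: both telescope Lemma~\ref{lemma:bb} into the lower bound~\eqref{eq:bb.1}, sum it to get $\sum_{j=1}^i x_j \geq x_i(1+P_i)$, and invoke the exact evaluation of $P_i$ from that lemma's proof before absorbing the $O(1/i)$ error into $\epsilon$. If anything, your version is slightly cleaner, since you keep the full sum and obtain $2-\tfrac{2}{i+1}$ where the paper truncates one term and settles for $2-\tfrac{2}{i}$.
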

\begin{proof}
	It holds that
	\[
	\sum_{j=1}^{i} x_j=x_i +\sum_{j=1}^{i-1} x_j \geq x_i\left(1+\sum_{k=1}^{i-2} \frac{1}{\prod_{j=0}^{k-1} (2+\frac{2}{i-j})}\right)=
	x_i(1+P_{i-1}) \geq x_i\left(2-\frac{2}{i}\right),
	\]
	where the inequality follows from Lemma~\ref{lemma:bb}, and the last inequality holds from the property on $P_i$ that was
	shown in the proof of Lemma~\ref{lemma:bb}. Last, note that $\frac{2}{i} \leq \epsilon$, for sufficiently large $i$.
\end{proof}

\begin{theorem}
For any bidding strategy with $k$ advice bits that is $(r,4)$-competitive it holds that $r \geq 2+\frac{1}{3\cdot 2^k}$.
\label{thm:bidding.lower.k}
\end{theorem}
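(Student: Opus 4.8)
The plan is to exploit an adversarial argument against the $k$-bit advice, much in the spirit of the lower bound for ski rental (Theorem~\ref{thm:ski.one.suffices}), but now keyed to the structure of bidding strategies. A strategy with $k$ bits of advice is nothing more than a family of $K = 2^k$ deterministic bidding sequences $X^{(1)}, \dots, X^{(K)}$, and on any given hidden value $u$ the adversary controlling the untrusted advice may force the algorithm to use whichever $X^{(\ell)}$ it wants. So if the strategy is $(r,4)$-competitive, then \emph{every one} of the $K$ sequences must be $4$-competitive (this is the untrusted guarantee), while for every $u$ at least one of them achieves ratio $\le r$ on $u$ (this is the trusted guarantee, for the advice value that happens to point to the good sequence). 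First I would formalize this and reduce the claim to: if $X^{(1)}, \dots, X^{(K)}$ are each $4$-competitive bidding sequences, then there is a value $u$ on which all of them incur cost $> (2 + \tfrac{1}{3\cdot 2^k})\,u$.

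Second, I would set up the obstruction. For a single $4$-competitive sequence $X = (x_i)$ with $x_0 = 1$, recall that $\sum_{j=1}^{i} x_j \le 4 x_{i-1}$ for all $i$, so in particular $x_i \le 4 x_{i-1} - \sum_{j<i} x_j < 4 x_{i-1}$, and the worst value for this sequence is one placed just above some bid: taking $u = x_{i-1} + \epsilon$ gives cost $\sum_{j=1}^{i} x_j$ over $u \approx x_{i-1}$. The ratio $\sum_{j=1}^i x_j / x_{i-1}$ is exactly the quantity constrained to be $\le 4$; the point is that for a sequence that is ``only just'' $4$-competitive the ratio must be close to $4$ on \emph{many} consecutive indices, i.e. the partial-sum-to-previous-bid ratios cannot all be far below $4$, because if $\sum_{j=1}^i x_j \le c\, x_{i-1}$ with $c < 4$ for all $i$, the characteristic-root analysis (the same $x^2 - cx + c$ polynomial appearing before Lemma~\ref{lem:bidding.recurrence}, which has complex roots when $c<4$) forces the bids to eventually go non-positive, contradicting $x_i > 0$. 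So each individual $X^{(\ell)}$ is bad — ratio arbitrarily close to $4 \ge 2 + \tfrac{1}{3\cdot 2^k}$ — on a whole window of hidden values.

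Third, and this is the crux, I would do a pigeonhole over the $K$ sequences on a carefully chosen geometric grid of values $u$. Consider test values $u = 2^{t}$ (or a refinement thereof) for $t$ ranging over a long interval $[1, N]$ with $N \gg K$. For the strategy to be $(r,4)$-competitive, for each such $t$ some sequence $X^{(\ell(t))}$ must have cost $\le r\, 2^t$, which roughly means its last bid $\le r\,2^t$ is ``large'' at scale $2^t$. By pigeonhole, one sequence $X^{(\ell)}$ is responsible for $\ge N/K$ of these values, i.e. it is ``$r$-good'' at roughly $N/K$ different scales $2^{t_1} < t_2 < \dots$ spread across $[1,N]$. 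But between two consecutive good scales $2^{t_a}$ and $2^{t_{a+1}}$ that are far apart, $X^{(\ell)}$ must ramp its bids up by a factor $\approx 2^{t_{a+1}-t_a}$, and during that ramp-up there is necessarily an index $i$ where $x_{i-1}$ is small relative to $2^{t_{a+1}}$ but the next few bids leap past it — at that index, picking $u = x_{i-1} + \epsilon$ produces ratio close to $4$. Quantitatively, the gap $t_{a+1} - t_a \ge N/(N/K) = K$ on average forces, via the constraint $\sum_{j\le i} x_j \le 4 x_{i-1}$, that the ratio on the worst $u$ in that block is at least something like $4 - O(1)$ divided into $K$ pieces, yielding the bound $2 + \tfrac{1}{3K} = 2 + \tfrac{1}{3 \cdot 2^k}$ after optimizing the block length and the choice of $u$; here the constant $\tfrac13$ will come out of balancing the ``$2$'' (one full doubling, forced because a $4$-competitive sequence multiplies consecutive bids by less than $4$, i.e. less than two doublings) against the marginal loss over $K$ steps. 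I expect the main obstacle to be making this last counting argument tight: getting exactly $2 + \tfrac{1}{3\cdot 2^k}$ rather than merely $2 + \Omega(2^{-k})$ requires choosing the adversarial value $u$ and the block boundaries so that the partial-sum constraint is saturated in the right way, and tracking how the ``leftover budget'' past a good scale is necessarily dissipated before the next good scale. I would handle this by analyzing, within one block, the extremal $4$-competitive continuation — which by the recurrence of~\eqref{eq:rec_a} with $w=4$ is the $(\alpha + \beta i)2^i$ family — and reading off the worst attainable ratio as a function of the block length, then setting the block length to $k$ (equivalently $\log K$) and simplifying.
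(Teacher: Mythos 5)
Your opening reduction is right and matches the paper's setup: $k$ advice bits give $K=2^k$ deterministic sequences, each of which must itself be $4$-competitive, and you must exhibit a hidden value $u$ on which \emph{all} $K$ of them incur cost greater than $(2+\frac{1}{3K})u$. But the argument you build on top of it has two genuine gaps. First, your quantitative handle on a single $4$-competitive sequence is the wrong one. The fact that $\sum_{j\le i}x_j/x_{i-1}$ must come arbitrarily close to $4$ somewhere (your complex-roots observation), and the bound $x_i<4x_{i-1}$ (``less than two doublings''), are both too weak. What is actually needed is that every $4$-competitive sequence satisfies $x_i\le(2+\frac{2}{i})x_{i-1}$, hence $\sum_{j\le i}x_j\ge(2-\epsilon)x_i$ for all large $i$ (the paper's Lemma~\ref{lemma:bb} and Corollary~\ref{cor:bb}): the cost a sequence pays when its $i$-th bid is the first to exceed $u$ is at least $(2-\epsilon)$ times that bid. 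This is the inequality that converts ``sequence $X_j$ serves $u$ with ratio at most $2+\delta$'' into ``the first bid of $X_j$ above $u$ is at most $\frac{2+\delta}{2-\epsilon}\,u$,'' which is the engine of the whole proof; nothing in your sketch supplies it.

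Second, and more fundamentally, your pigeonhole over the grid $u=2^t$ produces a value $u$ that is bad only for the single pigeonholed sequence $X^{(\ell)}$ (the one that is good at many scales); at that $u$, any of the other $K-1$ sequences may well be cheap, and the advice will simply point to one of them. So your step 3 does not deliver the universally bad $u$ that your own reduction in step 1 demands. The paper closes this by replacing the fixed grid with an \emph{adaptive} chain of test values: start with $u_0$ just above a late bid $x_{K,i-1}$ of $X_K$; whichever sequence $X_j$ the advice selects, the $(2-\epsilon)$ bound forces its first bid above $u_0$ to be at most $\frac{2+\delta}{2-\epsilon}\,u_0$, and the next test value is placed just above that bid. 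One then shows the advice must select a strictly larger index in each round (otherwise $r\ge 3-\epsilon$, or a contradiction with the accumulated bounds), so after at most $K$ rounds the algorithm is cornered into $X_K$ while the test value has grown by a factor of at most $(1+\frac{1}{3K})^K\le e^{1/3}$, giving $r\ge 4/e^{1/3}>2+\frac{1}{3K}$. If you want to avoid adaptivity you would need something like a density or averaging argument over the union of all $K$ bid sequences; that would be a different proof from the one you outline, and your block-counting heuristic (``$4-O(1)$ divided into $K$ pieces'') does not currently constitute one.
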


\begin{proof}
	With $k$ bits of advice, the online algorithm can only differentiate between $K=2^k$ online bidding sequences, which
	we denote by $X_j=(x_{j,i})_{i\geq 1}$, with $j \in [1,K]$. (In words, the $i$-th bid in the $j$-th strategy is equal to
	$x_{j,i}$.) That is, the advice is a function that maps the hidden value $u$ to one of these $K$ strategies.
	We will say that the advice {\em chooses $X_j$}, or simply {\em chooses $j$}, with $j \in [1,K]$
	to describe this action. Define
	\[
	\delta=\frac{1}{3K},
	\]
	and let $r$ denote the trusted competitive ratio of a given strategy with $k$ advice bits.
	We will prove that $r> 2+\delta$. By way of contradiction, we will assume that $r \leq 2+\delta$,
	and we will prove that there exists a choice of the hidden value which results in a trusted competitive ratio larger than
	$2+\delta$. Note that each $X_j$, with $j \in [1,K]$ must be 4-competitive, since the untrusted competitive ratio
	has to be 4.

	For a given index $i$, and for any $j \in [1,K-1]$, let $l_{j,i}$ denote the smallest index such that $x_{j, l_{j,i}}>x_{K,i-1}$.
	To simplify notation, we will define $\overline{x}_j \equiv x_{j, l_{j,i}}$. Such $l_{j,i}$ must always
	exist; moreover, since $i$ can be unbounded, so is $x_{K,i-1}$, and hence also $l_{j,i}$. Thus we can
	invoke Corollary~\ref{cor:bb}, and obtain that for any $\epsilon>0$, there exists $i_0$ such that
	\[
	T_j(\overline{x_j}) \geq (2-\epsilon) \overline{x}_j, \quad \mbox{ for all $j \in [1,K-1]$,}
	\]
	where we use the notation $T_j(x)$ to denote the sum of all bids less than or equal to $y$ in
	$X_j$, i.e., $\sum_{y \in X_j, y \leq x}y$.
	
	We will assume, without loss of generality, that $\overline{x}_1 \leq \overline{x}_2 \ldots \leq \overline{x}_{K-1}$.

	To prove the theorem, we will use a game between the online algorithm with advice, and an adversary, which proceeds in rounds.
	In each round, the adversary will consider a hidden value $u$ from the following candidate set:
	\[
	\{x_{K,i-1}+\epsilon, \overline{x}_1+\epsilon, \ldots \overline{x}_{K-1}+\epsilon \},
	\]
	where $\epsilon \rightarrow 0$ is an infinitesimally small value. These values are intuitive: The first value is a ``bad'' choice for $X_K$,
	and the remaining values are ``bad'' choices for $X_1, \ldots X_{K-1}$, respectively. For each choice, the online algorithm needs to maintain
	a trusted competitive ratio of at most $2+\delta$. To do so, we will argue that the algorithm will need to choose sequences of ``progressively
	larger indexes''. By this we mean that if in some round the advice chooses sequences $X_j$, then when presented with
	$u=\overline{x}_j+\epsilon$ in the next round,
	the advice will have to choose $X_{j'}$, with $j'>j$. Moreover, we will argue that the $\overline{x}_j$'s used
	as the hidden value by the adversary cannot be  too big in comparison to $x_{K,i-1}$, if the algorithm is to be $(2+\delta)$-competitive.
	At the end, the ``top'' strategy, namely strategy $X_K$, will be the best choice. Then $r$ will be at least
	\[
	\frac{\sum_{j=1}^i x_{K,i}}{\overline{x}_j},
	\]
	where $\overline{x}_j$ is not too big in comparison to $x_{K,i-1}$, i.e., by a factor of at most $e^{1/3}$. This will imply that the trusted competitive ratio
	must be at least $1/e^{1/3}$ times 4 (since 4 is the untrusted competitive ratio of $X_K$), which is larger than $2+\delta$ for all $K$, a contradiction.
	
	\paragraph{Description of the game}
	The adversarial game proceeds in rounds. In round $0$, the adversary chooses $u$ to be infinitesimally larger than $x_{K,i-1}$.

	There are the following subcases:

	\noindent
	\begin{enumerate}
		\item[\bf 1]
		{\em The algorithm chooses some $j \in [1,K-1]$.} In this case it must be that
		\[
		r \geq \frac{T_j(\overline{x}_j)}{x_{K,i-1}} \geq \frac{(2-\epsilon)\overline{x}_j}{x_{K,i-1}},
		\]
		and since $r=2+\delta$, we obtain that
		\begin{equation}
			\overline{x}_j \leq \frac{2+\delta}{2-\epsilon} x_{K,i-1}.
			\label{eq:bid.rounds}
		\end{equation}
		
		\item[ \bf 2]
		{\em The algorithm chooses $K$}. In this case we have that
		\[
		r\geq \frac{T_K (x_{K,i})}{x_{K,i-1}}.
		\]
	\end{enumerate}
	
	The remaining rounds are defined inductively. Suppose that rounds $0, \ldots ,\rho$ have been defined. We will now define round $\rho+1$.
	If in round $\rho$ the advice chose $X_K$, the game is over, and no actions take place in any subsequent rounds. Otherwise, let $j_{\rho}$ be such that
	the advice chose $X_{j_\rho}$ in round $\rho$. For round $\rho+1$, the adversary chooses $u$ to be infinitesimally larger than $\overline{x}_{j_\rho}$.
	We consider the following cases concerning the algorithm.
	\begin{enumerate}
		\item [\bf 1]
		{\em The algorithm chooses some $j=j_{\rho+1}>j_\rho$, with $j<K$.} In this case there must be that
		\[
		r \geq \frac{T_j(\overline{x}_j)}{\overline{x}_{j_\rho}} \geq \frac{(2-\epsilon)\overline{x}_j}{\overline{x}_{j_\rho}},
		\]
		and since $r=2+\delta$, we obtain, that
		\[
		\overline{x}_j \leq \frac{2+\delta}{2-\epsilon} \overline{x}_{j_\rho}.
		\]
		
		\item [\bf 2]
		{\em The algorithm chooses $X_K$}. In this case we have that
		\[
		r\geq \frac{T_K(x_{K,i})}{\overline{x}_{j\rho}}.
		\]
		
		\item [\bf 3]
		{\em The algorithm chooses $j$ with $j \leq j_\rho$}. We will argue that this case cannot occur, unless $r>2+\delta$.
	\end{enumerate}

	We prove the claim concerning the last case. First, suppose that the algorithm chose $j_{\rho+1} =j_\rho$. Then we have that
	\[
	r \geq \frac{T_j(\overline{x}_{j_\rho})+\overline{x}_{j_\rho}}{\overline{x}_{j_\rho}} \geq 3-\epsilon > 2+\delta.
	\]
	Suppose now that the algorithm chose $j_{\rho+1}=j< j_\rho$. Then we have that
	\[
	r\geq \frac{ T_j(\overline{x}_j)+\overline{x}_{j_\rho} } {\overline{x}_{j_\rho}} \geq \frac{(2-\epsilon)\overline{x}_j+\overline{x}_{j_\rho}}{\overline{x}_{j_\rho}}
	\geq 1+\frac{(2-\epsilon)x_{K,i-1}}{\overline{x}_{j_\rho}}.
	\]
	Since it must be that $r \leq 2+\delta$, it follows that
	\[
	\overline{x}_{j_\rho} \geq \frac{2-\epsilon}{1+\delta} x_{K,i-1}.
	\]
	However, from~\eqref{eq:bid.rounds}, we know that $\overline{x}_{j_\rho}$ cannot exceed $(\frac{2+\delta}{2-\epsilon})^K \cdot x_{K,i-1}$, and since $\epsilon$ can be arbitrarily small, we have that
	\[
	\overline{x}_{j_\rho} \leq (1+\frac{1}{3K})^K \cdot x_{K,i-1}.
	\]
	It then follows that
	\[
	(1+\frac{1}{3K})^K \geq \frac{2}{1+\frac{1}{3K}},
	\]
	which is a contradiction, as it can readily be confirmed using standard calculus.
	
	\paragraph{Analysis of the game}
	Let $R$ denote the last round of the game. From the arguments above it follows that either $r \geq 3-\epsilon$,
	or in round $R$ the advice chose algorithm $X_K$. As argued above, in this case we have that
	\[
	\overline{x}_{j_R} \leq (1+\frac{1}{3K})^K x_{K,i-1} \leq e^\frac{1}{3} x_{K,i-1}.
	\]
	Thus, from the corresponding action in round $R$ it must be that
	\[
	r \geq \frac{T_K(x_{K,i})}{\overline{x}_{j_R}} \geq \frac{T_K(x_{K,i})}{e^\frac{1}{3} x_{K,i-1}}.
	\]
	Note that, as discussed early in the proof, the above holds for all sufficiently large $i\geq i_0$. Therefore,
	\[
	r \geq \sup_{i \geq {i_0}} \frac{\sum_{j=1}^i x_{K,j}} {e^\frac{1}{3} x_{K,i-1}} =
	\frac{1}{e^\frac{1}{3}} \sup_{i=1} \frac{\sum_{j=1}^i x_{K,j}} {x_{K,i-1}} \geq \frac{4}{e^\frac{1}{3}} >2+\frac{1}{3K}.
	\]
	We conclude that  $r \geq 2+\frac{1}{3K}$.

\end{proof}


\newcommand{\RRC}{\ensuremath{\textsc{Rrc}}\xspace}
\newcommand{\sh}[1]{}
\newcommand{\neww}[1]{\textcolor{red}{#1}}

\section{Online bin packing}
\label{sect:bp}

\subsection{Background}
{Online bin packing finds applications in a broad range of practical problems, from server consolidation to cutting stock problems. We refer the reader to a survey by Coffman et al.~\cite{BPsurvey2013} and a brief introduction by Johnson~\cite{Johnson16} for details on bin packing and its applications.  Along with its practical significance, research on this problem has lead to technical developments for online algorithms in general.
}

An instance of the online bin packing problem consists of a sequence of items with different \emph{sizes} in the range $(0,1]$, and the objective is to pack these items into a minimum number of {bins}, each with a capacity of 1. For each arriving item, the algorithm must place it in one of the current bins or open a new bin for the item. {By the nature of the bin packing problem, it cannot be avoided that a constant number of bins are not well filled.  Hence it is standard practice to measure the performance using the \emph{asymptotic competitive ratio}. Formally, w}e say that algorithm $A$ has an asymptotic competitive ratio $r$ if, on every sequence $\sigma$, the number of opened bins satisfies $A(\sigma) \leq r \cdot \opt(\sigma) + c$, where $c$ is a constant. \sh{This ratio contrasts with the so-called \emph{absolute competitive ratio} for which the constant $c$ has to be zero. }As standard in the analysis of bin packing problems, throughout this section, by ``competitive ratio'' we mean ``asymptotic competitive ratio''.

{The 
most practical algorithms 
for bin packing are First-Fit and Best-Fit. First-Fit
}
maintains bins in the same order that they have been opened, and places an item into the first bin with enough free space; if no such bin exists, it opens a new bin.
Best-Fit works similarly, except that it maintains bins in the non-increasing order of their \emph{level}, where level of a bin is the total size of its items.
{Johnson~\cite{Johnso73} proved that the competitive ratio 
of 
First-Fit and Best-Fit
is 1.7. Many other algorithms with improved competitive ratios have been studied. The best known algorithm was introduced by Balogh et al.~\cite{BaloghBDEL18} and has a competitive ratio of at most 1.5783. Moreover, it is known that no online algorithm can achieve a competitive ratio better than 1.54278~\cite{BaloghArX05554}.}

Online bin packing has also been studied in the advice setting~\cite{BoyarKLL16,RenaultRS15,ADKRR18}. In particular,\sh{a result by Angelopoulos et al.~\cite{ADKRR18} shows that with only constant number of bits,} it is possible to achieve a competitive ratio of 1.4702 with only a constant number of (trusted) advice bits~\cite{ADKRR18}. A restricted version of the bin packing problem, where items take sizes from a discrete set $\{1/k,2/k,\ldots, 1\}$, for some constant integer $k$, has been recently studied under a prediction model~\cite{DBLP:journals/jair/AngelopoulosKS23}, where it is assumed that predictions concern frequency of each item size. We note that this type of prediction is not applicable to the general setting of the bin packing problem.

In this section, we introduce an algorithm named Robust-Reserve-Critical
(\textsc{Rrc})
which has a parameter $\alpha \in [0,1]$, indicating how much the algorithm relies on the advice. Provided with $O(1)$ bits of advice, the algorithm is asymptotically $(r_\textsc{Rrc},w_\textsc{Rrc})$-competitive for 
$r_\textsc{Rrc} = 1.5 + \frac{3-3\alpha}{12-8\alpha}$ 
and 
$w_\textsc{Rrc} = 1.5+\max\{\frac{1}{4},\frac{6\alpha}{6-4\alpha}\}$.
If the advice is reliable, we set $\alpha=1$ and the algorithm is asymptotically 
$(1.5,4.5)$-competitive; otherwise, we set $\alpha$ to a smaller value. For $\alpha=0.9$ for example, the algorithm is asymptotically 
$(1.5625,3.75)$-competitive. Similarly, for $\alpha=0.868$, we get a $(1.5783,3.56)$-algorithm, whose consistency is the same as best existing online algorithms. Therefore, our results are useful for values of $\alpha > 0.868$, where improvements over online algorithms without prediction are realized. 
In contrast, for $\alpha <0.868$, the best-known competitive algorithms without predictions dominate our proposed solution. 

\subsection{The Reserve-Critical algorithm}
Our solution uses an algorithm introduced by Boyar et al.~\cite{BoyarKLL16} which achieves a competitive ratio of 1.5 using $O(\log n)$ bits of advice. We refer to this algorithm as Reserve-Critical in this paper and describe it briefly. See Figure~\ref{fig:resCrit} for an illustration.
The algorithm classifies items according to their size.  Tiny items have their size in the range $(0,1/3]$, small items in $(1/3,1/2]$, critical items in $(1/2,2/3]$, and large items in $(2/3,1]$. In addition, the algorithm has four kinds of bins, called tiny, small, critical and large bins. Large items are placed alone in large bins, which are opened at each arrival. Small items are placed in pairs in small bins, which are opened every other arrival. Critical bins contain a single critical item, and tiny items up to a total size of $1/3$ per bin, while tiny bins contain only tiny items.  The algorithm receives as advice the number of critical items, denoted by $c$, and opens $c$ \emph{critical bins} at the beginning.  Inside each critical bin, a space of 2/3 is reserved for a critical item, and tiny items are placed using First-Fit into the remaining space of these bins possibly opening new bins dedicated to tiny items.  Each critical item is placed in one of the critical bins.   
Note that the algorithm is heavily dependent on the advice being trusted. Imagine that the encoded advice 
overestimates the
number of critical items. This results in critical bins which contain only tiny items.
\sh{By the First-Fit strategy, all of them, except possibly for a single bin, are guaranteed to be filled up to a level of $1/6$ at least.  }The worst case is reached when tiny items form a subsequence $(1/6,\epsilon, 1/6, \epsilon, \ldots)$, 
while there is no critical item. In this case, all critical bins are filled up to a level slightly more than $1/6$.  Hence, untrusted advice can result in a competitive ratio as bad as 6.

\begin{figure}[htb]
	\tikzset{every picture/.style={line width=0.75pt}} 
\begin{center}  
\begin{tikzpicture}[x=0.75pt,y=0.75pt,yscale=-1.2,xscale=1.2]

\draw  [draw opacity=0][fill=black!100,fill opacity=1 ] (90,110) -- (110,110) -- (110,151) -- (90,151) -- cycle ;
\draw  [draw opacity=0][fill=black!75,fill opacity=1 ] (90,170) -- (110,170) -- (110,152) -- (90,152) -- cycle ;
\draw  [draw opacity=0][fill=black!50,fill opacity=1 ] (90,170) -- (110,170) -- (110,191) -- (90,191) -- cycle ;
\draw  [draw opacity=0][fill=black!25,fill opacity=1 ] (90,191) -- (110,191) -- (110,230) -- (90,230) -- cycle ;
\draw  [draw opacity=0][fill=black  ,fill opacity=1 ] (351,127) -- (371,127) -- (371,229.97) -- (351,229.97) -- cycle ;
\draw   (351,110.97) -- (371,110.97) -- (371,229.97) -- (351,229.97) -- cycle ;
\draw  [color={rgb, 255:red, 255; green, 255; blue, 255 }  ,draw opacity=1 ][fill=black!25  ,fill opacity=1 ] (300,154.97) -- (320,154.97) -- (320,160.97) -- (300,160.97) -- cycle ;
\draw  [draw opacity=0][fill=black!75  ,fill opacity=1 ] (300,160.97) -- (320,160.97) -- (320,230.97) -- (300,230.97) -- cycle ;
\draw  [color={rgb, 255:red, 255; green, 255; blue, 255 }  ,draw opacity=1 ][fill=black!25  ,fill opacity=1 ] (300,146.97) -- (320,146.97) -- (320,154.97) -- (300,154.97) -- cycle ;
\draw  [color={rgb, 255:red, 255; green, 255; blue, 255 }  ,draw opacity=1 ][fill=black!25  ,fill opacity=1 ] (300,136) -- (320,136) -- (320,146.97) -- (300,146.97) -- cycle ;
\draw   (300,110) -- (320,110) -- (320,230.97) -- (300,230.97) -- cycle ;
\draw  [color={rgb, 255:red, 255; green, 255; blue, 255 }  ,draw opacity=1 ][fill=black!25  ,fill opacity=1 ] (200,223.97) -- (220,223.97) -- (220,229.97) -- (200,229.97) -- cycle ;
\draw  [color={rgb, 255:red, 255; green, 255; blue, 255 }  ,draw opacity=1 ][fill=black!25  ,fill opacity=1 ] (200,215.97) -- (220,215.97) -- (220,223.97) -- (200,223.97) -- cycle ;
\draw  [color={rgb, 255:red, 255; green, 255; blue, 255 }  ,draw opacity=1 ][fill=black!25  ,fill opacity=1 ] (200,205) -- (220,205) -- (220,215.97) -- (200,215.97) -- cycle ;
\draw  [color={rgb, 255:red, 255; green, 255; blue, 255 }  ,draw opacity=1 ][fill=black!25  ,fill opacity=1 ] (200,200.97) -- (220,200.97) -- (220,206.97) -- (200,206.97) -- cycle ;
\draw  [color={rgb, 255:red, 255; green, 255; blue, 255 }  ,draw opacity=1 ][fill=black!25  ,fill opacity=1 ] (200,192.97) -- (220,192.97) -- (220,200.97) -- (200,200.97) -- cycle ;
\draw  [color={rgb, 255:red, 255; green, 255; blue, 255 }  ,draw opacity=1 ][fill=black!25  ,fill opacity=1 ] (200,182) -- (220,182) -- (220,192.97) -- (200,192.97) -- cycle ;
\draw   (200,110) -- (220,110) -- (220,229.97) -- (200,229.97) -- cycle ;
\draw  [color={rgb, 255:red, 255; green, 255; blue, 255 }  ,draw opacity=1 ][fill=black!50  ,fill opacity=1 ] (250,186) -- (270,186) -- (270,230) -- (250,230) -- cycle ;
\draw  [color={rgb, 255:red, 255; green, 255; blue, 255 }  ,draw opacity=1 ][fill=black!50  ,fill opacity=1 ] (250,147) -- (270,147) -- (270,186) -- (250,186) -- cycle ;
\draw   (250,109.85) -- (270,109.85) -- (270,230) -- (250,230) -- cycle ;

\draw (68,83) node [anchor=north west][inner sep=0.75pt]   [align=left] {\textbf{item classes}};
\draw (113,201) node [anchor=north west][inner sep=0.75pt]   [align=left] {tiny};
\draw (113,170) node [anchor=north west][inner sep=0.75pt]   [align=left] {small};
\draw (113,151) node [anchor=north west][inner sep=0.75pt]   [align=left] {critical};
\draw (113,118) node [anchor=north west][inner sep=0.75pt]   [align=left] {large};
\draw (196,83) node [anchor=north west][inner sep=0.75pt]   [align=left] {\textbf{bin classes}};
\draw (197,232) node [anchor=north west][inner sep=0.75pt]   [align=left] {tiny};
\draw (244,232) node [anchor=north west][inner sep=0.75pt]   [align=left] {small};
\draw (288,232) node [anchor=north west][inner sep=0.75pt]   [align=left] {critical};
\draw (346,232) node [anchor=north west][inner sep=0.75pt]   [align=left] {large};
\draw (322,142) node [anchor=north west][inner sep=0.75pt]  [font=\footnotesize] [align=left] {$\leq \frac{1}{3}$};
\draw    (100,110) -- (100,230) ;
\draw    (90,110) node[left] {1} -- (110,110);
\draw    (90,230)  node[left] {0}-- (110,230) ;
\draw    (90,151)  node[left] {2/3} -- (110,151) ;
\draw    (90,191)  node[left] {1/3} -- (110,191) ;
\draw    (90,170) node[left] {1/2} -- (110,170) ;
\end{tikzpicture}
\end{center}
\caption{Item and bin classification according to Boyar et al.~\cite{BoyarKLL16}.\label{fig:resCrit}}
\end{figure}

\sh{Let $t$ be the number of tiny bins opened by the algorithm. Equivalently the advice for the algorithm could be the fraction $c/(c+t)$ instead of the number $c$. We call this fraction the \emph{critical ratio}. Then the algorithm would open critical and tiny bins as needed, maintaining a proportion between them close to the given critical ratio. The precise mechanism is explained in the next section.
In a variant of Reserve-Critical the critical ratio is given to the algorithm only up to a precision of $k$ bits, and its competitive ratio has been analyzed as a function of $k$.  This variant is introduced and analyzed in \cite{ADKRR18}.}


\subsection{The Robust-Reserve-Critical (RRC) algorithm}
Let $t$ be the number of tiny bins opened by the Reserved-Critical algorithm. Recall that $c$ is the number of critical bins. We call the fraction $c/(c+t)$ the \emph{critical ratio}.
The advice for \textsc{Rrc} is a fraction $\gamma$, integer multiple of $1/2^{k}$, that is encoded in $k$ bits such that if the advice is trusted then 
$\gamma \leq c/(c+t) \leq \gamma + 1/2^k$. In case $c/(c+t)$ is a positive integer multiple of $1/2^k$, we break the tie towards $\gamma<c/(c+t)$.
Note that for a sufficiently large, yet constant, number of bits, $\gamma$ provides a good approximation of the critical ratio. {Indeed having $\gamma$ as advice is sufficient to achieve a competitive ratio that approaches $1.5$ in the trusted advice model, as shown in \cite{ADKRR18}.}

The \textsc{Rrc} algorithm has a parameter 
$0\leq \alpha \leq 1$, which together with the advice $\gamma$ can be used to define a fraction $\beta = \min \{ \alpha, \gamma\}$.  The algorithm maintains a proportion close to $\beta$ of critical bins among critical and tiny bins.
Formally, on the arrival of a critical item, the algorithm places it in a critical bin, opening a new one if necessary.  Each arriving tiny item $x$ is packed 
in the first critical bin which has enough space, with the restriction that the tiny items do not exceed a fraction 1/3 in these bins. If this fails, the algorithm tries to pack $x$ in a tiny bin using First-Fit strategy (this time on tiny bins). If this fails as well, a new bin $B$ is opened for $x$. Now, $B$ should be declared as a critical or a tiny bin. Let $c'$ and $t'$ denote the number of critical and tiny bins before opening $B$. If $c'+t'>0$ and $\frac{c'}{c'+t'} < \beta$, then $B$ is declared a critical bin; otherwise, $B$ is declared a tiny bin. Large and small items are placed similarly to the Reserved-Critical algorithm (one large item in each large bin and two small items in each small bin).

\subsection{Analysis}
Intuitively, \RRC works similarly to Reserved-Critical except that it might not open as many critical bins as suggested by the advice. The algorithm is more ``conservative'' in the sense that it does not keep two thirds of many (critical) bins open for critical items that might never arrive. The smaller the value of $\alpha$ is, the more conservative the algorithm is. Our analysis is based on two possibilities in the final packing of the algorithm. In the first case (case I), all critical bins receive a critical item, while in the second case (case II) some of them have their reserved space empty. 
In case I, we show the number of bins in the packing of \RRC is within a factor $1.5 + \frac{1-\beta}{4-3\beta}$ of the number of bins in the optimal packing.  
Note that this ratio decreases as the value of $\alpha$ (and $\beta)$ grows. This implies a less conservative algorithm would be better packing in this case.  Case II happens only if the advice is untrusted. In this case, the number of bins in the \RRC packing is within a factor $ 1.5 + \frac{9\beta}{8-6\beta}$ of the number of bins in an optimal packing. 
This ratio increases with $\alpha$ (and $\beta$). This implies a more conservative algorithm would be better in this case as it would open less critical bins and, thus, would have fewer without critical items. 

In what follows, we study the above cases in details to provide upper bounds for the competitive ratio of \RRC in the case of trusted and untrusted advice.


First, note that when $\gamma \leq \alpha$, 
then the algorithm 
works with the ratio $\gamma$ as indicated by the advice.
Consequently, if the advice is trusted, we have the same performance guarantee as stated in~\cite{ADKRR18}: 

\begin{lemma}\label{lem:pure-advice}
	\cite{ADKRR18} When $\gamma \leq \alpha$ and the advice is trusted, the competitive ratio of \textsc{Rrc} is at most $1.5 + \frac{15}{2^{k/2+1}}$.
\end{lemma}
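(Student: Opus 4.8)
The plan is to derive the lemma by reducing \RRC, under the stated hypothesis, to the bounded-precision Reserve-Critical variant analyzed in~\cite{ADKRR18}. The first and crucial step is the observation that when $\gamma\le\alpha$ we have $\beta=\min\{\alpha,\gamma\}=\gamma$. Consequently, when \RRC opens a fresh bin $B$, it declares $B$ critical iff $c'+t'>0$ and $c'/(c'+t')<\gamma$; that is, \RRC maintains a proportion of critical bins (among critical and tiny bins) that tracks the advised value $\gamma$ exactly, with no interference from $\alpha$. Since, with trusted advice, $\gamma$ is an integer multiple of $1/2^k$ satisfying $\gamma\le c/(c+t)\le\gamma+1/2^k$ (with the prescribed tie-break when $c/(c+t)$ is itself such a multiple), $\gamma$ is precisely a $k$-bit estimate of the true critical ratio --- exactly the oracle model under which~\cite{ADKRR18} analyzes their algorithm.

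The second step is to verify that the remaining placement rules of \RRC coincide, after the identification $\beta=\gamma$, with those of the algorithm of~\cite{ADKRR18}: each large item alone in a large bin, small items in pairs in small bins, each critical item in a critical bin (opening a new one if necessary), and each tiny item placed by First-Fit first among the critical bins subject to the cap of total tiny size $1/3$ per critical bin, and then among the tiny bins. Under this identification the two algorithms produce the same packing on the given request sequence, so the competitive-ratio bound $1.5+\frac{15}{2^{k/2+1}}$ established in~\cite{ADKRR18} transfers verbatim, proving the lemma.

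The main point needing care --- and essentially the only content beyond invoking~\cite{ADKRR18} --- is to confirm that the robustness cap $\beta=\min\{\alpha,\gamma\}$ is genuinely inert exactly when $\gamma\le\alpha$: in particular, that with trusted advice the run lands in what the subsequent lemmas call Case~I, where every critical bin eventually receives a critical item, matching the trusted behaviour of the variant in~\cite{ADKRR18}. A secondary, purely routine matter is to check that the minor differences in presentation (the ``proportion close to $\beta$'' description here versus the explicit counting/charging argument in~\cite{ADKRR18}, and the off-by-one in whether $c',t'$ are measured before or after opening $B$) influence only lower-order additive terms, leaving the leading $1.5$ and the constant $15$ untouched; this does not require reworking the~\cite{ADKRR18} estimate.
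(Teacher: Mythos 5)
Your proposal is correct and matches the paper's own justification: the paper likewise observes that when $\gamma\le\alpha$ we have $\beta=\min\{\alpha,\gamma\}=\gamma$, so \RRC maintains exactly the advised ratio and coincides with the bounded-precision Reserve-Critical variant of~\cite{ADKRR18}, whose bound $1.5+\frac{15}{2^{k/2+1}}$ is then imported verbatim. The only superfluous element in your write-up is the appeal to ``Case~I,'' which is needed for the separate case $\gamma>\alpha$ (Lemma~\ref{lem:bp-cor-adv}) but not for this reduction.
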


The remaining cases are more interesting and involve scenarios when the advice is untrusted, or when  the advice is trusted but the algorithm maintains a ratio of $\alpha$ instead of $\gamma$ as indicated by the advice. 

First, we prove the following technical lemma that bounds the size of the packing of RRC based on the number of tiny bins and the parameter $\beta$. 


\begin{lemma}
	\label{lem:size-tiny}
	Let $S$ denote the total size of tiny items in an input sequence and assume there are $t$ tiny bins in the final packing of the \textsc{Rrc} algorithm. 
	We have $S > (t-1) \frac{4-3\beta}{6-6\beta}-1/6$.
\end{lemma}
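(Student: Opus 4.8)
The plan is to write the total size of tiny items as $S = S_{\mathrm c} + S_{\mathrm t}$, where $S_{\mathrm c}$ (resp.\ $S_{\mathrm t}$) is the total size of the tiny items that end up packed inside critical bins (resp.\ tiny bins), and to bound the two terms separately; large and small items and their bins play no role here. If $t \le 1$ the right-hand side of the claimed inequality is at most $-1/6 < 0$, so the statement is trivial; assume henceforth $t \ge 2$.

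For $S_{\mathrm t}$ I would use the standard First-Fit estimate: every tiny bin except possibly the last one that was opened is filled to a level strictly greater than $2/3$, since otherwise the tiny item that opened a later tiny bin (which has size at most $1/3$) would have fit into it. Hence $S_{\mathrm t} > \frac{2}{3}(t-1)$.

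For $S_{\mathrm c}$ I would examine the moment $T$ at which the last ($t$-th) tiny bin is opened, say by a tiny item $x^\ast \le 1/3$, and let $c'$ be the number of critical bins existing at time $T$. Because \textsc{Rrc} declared the newly opened bin tiny rather than critical, the ratio test gives $c'/(c' + t - 1) \ge \beta$, i.e.\ $c' \ge \frac{\beta}{1-\beta}(t-1)$. Also $x^\ast$ failed to be placed into any of these $c'$ critical bins, so each of them has tiny load strictly larger than $1/3 - x^\ast \ge 0$, hence strictly positive, at time $T$. The crucial claim is that at most one of these $c'$ bins has tiny load below $1/6$ at time $T$: let $b$ be the earliest-opened critical bin whose tiny load at time $T$ is below $1/6$ (if none exists the claim is immediate). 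Every critical bin opened before $b$ has tiny load at least $1/6$ by the choice of $b$; and for any critical bin $b'$ opened after $b$, throughout the whole lifetime of $b'$ up to time $T$ the bin $b$ has tiny load below $1/6$ (loads only increase), hence spare tiny capacity above $1/6$, so by First-Fit no tiny item of size at most $1/6$ is ever placed in $b'$ — it would go to $b$ or to an earlier critical bin — and therefore, since $b'$ has positive tiny load at $T$, that load exceeds $1/6$. Consequently $S_{\mathrm c} \ge \frac{1}{6}(c'-1) \ge \frac{\beta(t-1)}{6(1-\beta)} - \frac{1}{6}$. Adding the two estimates and using $\frac{2}{3} + \frac{\beta}{6(1-\beta)} = \frac{4-3\beta}{6-6\beta}$ gives $S > (t-1)\frac{4-3\beta}{6-6\beta} - \frac{1}{6}$.

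The step I expect to be the real obstacle is the claim that at most one critical bin has tiny load below $1/6$. The naive reading of First-Fit only yields that each of the $c'$ critical bins holds more than $1/3 - x^\ast$ worth of tiny items, which degrades to nothing when the last tiny bin happens to be opened by an item $x^\ast$ close to $1/3$; the argument above recovers a uniform per-bin bound of $1/6$ by exploiting that once one critical bin is known to keep large spare tiny capacity, every later critical bin can only have received tiny items larger than $1/6$. Finally I would double-check the degenerate cases ($\beta = 0$, or $c' \le 1$), where the bound on $S_{\mathrm c}$ becomes vacuous but $S_{\mathrm t} > \frac{2}{3}(t-1)$ already implies the claim.
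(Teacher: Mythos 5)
Your proof is correct and follows essentially the same route as the paper's: split $S$ into the tiny-bin and critical-bin contributions, use the First-Fit level bound of $2/3$ for all but the last tiny bin, show at most one critical bin has tiny level below $1/6$, and combine with the ratio test $c' \ge \frac{\beta}{1-\beta}(t-1)$. The only (cosmetic) difference is that you establish the ``at most one deficient critical bin'' claim via the earliest such bin and what later bins can receive, whereas the paper argues by contradiction with two deficient bins; your version spells out the timing a bit more carefully, but it is the same argument.
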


\begin{proof}
	Assume $t>0$, otherwise the claim holds trivially. Recall that $\beta = \min\{\alpha,\gamma\}$, which gives $\beta < 1$ since $\alpha<1$.
	Let $B$ denote the last tiny bin that is opened by the algorithm and let $x$ be the tiny item which caused its opening. Let $c'$ and $t'$ respectively denote the number of critical and tiny bins before $B$ was opened ($t' = t-1$). Since $B$ is declared a tiny bin, we have $\frac{c'}{c'+t'} \geq \beta$ which gives $c'\geq \frac{\beta}{1-\beta} t'$.
	
	Since $x$ is tiny and caused the opening of a new bin, all $t'$ tiny bins have a level of at least $2/3$. Also we claim that all of the $c'$ critical bins, except possibly one 
	bin $B'$, contain tiny items of total size at least 1/6, 
	we call it the \emph{tiny level} of the bins. 
	If there are two critical bins with a tiny level of at most $1/6$, then each of them must contain at least one tiny item, otherwise $x$ could have fit.  And this means that one tiny item of the second bin could have fit into the first bin, contradicting the First-Fit packing of the algorithm.
	In summary, the total size $S$ of tiny items in the input sequence will be more than $t' \cdot 2/3$ (for tiny items in tiny bins) plus $(c'-1) \cdot 1/6$ 
	(for tiny items in critical bins). Since $c'> \frac{\beta}{1-\beta} t'$, we can write 
	$S > t' \cdot 2/3 + \beta/(1-\beta) t' \cdot 1/6 - 1/6 > t' (2/3 + \frac{\beta}{6(1-\beta)}) -1/6$.
\end{proof}

To continue our analysis of the \textsc{Rrc} algorithm, we investigate the two cases discussed earlier, captured by the following two lemmas. 
In this case, all bins declared as critical will receive a critical item.

\begin{lemma}
	\label{lem:bpl1}
	If all critical bins receive a critical item, then the number of bins in the final packing of \RRC algorithm is within a ratio 
	$1.5 + \frac{1-\beta}{4-3\beta}$ 
	of the number of bins in the optimal packing.
\end{lemma}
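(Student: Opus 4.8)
The plan is to bound the number of bins of each of the four types (large, small, critical, tiny) against a common lower bound on $\opt$, and combine these bounds. I would denote by $\ell$, $s$, $c$, $t$ the number of large, small, critical and tiny bins in the final \RRC packing, so that $\RRC(\sigma) = \ell + s + c + t$. The easy parts are the large and critical bins: each large bin holds one large item (size $>2/3$), and in the case of this lemma each critical bin receives a critical item (size $>1/2$), so both large and critical items must go in distinct bins in any packing, giving $\opt(\sigma) \ge \ell + c$. For small bins, two small items of size $>1/3$ each are packed per bin, except possibly one half-empty bin, so $\opt(\sigma) \ge 2(s-1)/3$ roughly (two small items don't fit with a large or critical item, and at most three small items fit per bin); this is the standard weighting argument and I'd quote it from \cite{BoyarKLL16}.

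The crux is the tiny bins. Here I would invoke Lemma~\ref{lem:size-tiny}: if there are $t$ tiny bins, the total size $S$ of tiny items satisfies $S > (t-1)\frac{4-3\beta}{6-6\beta} - 1/6$. Since all items (of total size at least $S$) must be packed into unit bins, $\opt(\sigma) \ge S$, hence $\opt(\sigma) \gtrsim t \cdot \frac{4-3\beta}{6-6\beta}$ up to an additive constant, i.e.\ $t \le \frac{6-6\beta}{4-3\beta}\opt(\sigma) + O(1)$. Note $\frac{6-6\beta}{4-3\beta} = \frac{6(1-\beta)}{4-3\beta}$; I should double-check that $1.5 + \frac{1-\beta}{4-3\beta}$ equals the bound obtained by adding the four contributions, since $\frac{3}{2} = \frac{6-6\beta+?}{4-3\beta}\cdots$ — more precisely I expect the large/small/critical bins together to contribute a factor close to $3/2$ (large and critical are ``perfect'', small contributes $3/2$) and the tiny bins to contribute the extra $\frac{1-\beta}{4-3\beta}$ once one accounts for the fact that tiny items sitting inside critical bins are also counted in $S$.

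Concretely, I would argue: the total size of all items is at least (size of large items) $+$ (size of small items) $+$ (size of critical items) $+ S \ge \frac{2}{3}\ell + \frac{2}{3}s + \frac{1}{2}c + S$, but this size-based bound alone is too weak for the $3/2$ part, so instead I would combine the size bound on tiny items with the \emph{counting} bounds on large/small/critical bins. Specifically, charge large and critical bins at ratio $1$ (each needs its own bin in $\opt$), small bins at ratio $3/2$ (standard), and tiny bins at ratio $\frac{6-6\beta}{4-3\beta}$ via Lemma~\ref{lem:size-tiny}; taking the maximum of these ratios over the bin types — and observing $\frac{6-6\beta}{4-3\beta}$ is the largest, equal to $\frac{3}{2} + \frac{6-6\beta - \frac32(4-3\beta)}{4-3\beta} = \frac32 + \frac{-\frac{3}{2}\beta}{\cdots}$, hmm, that's negative, so actually the tiny ratio is \emph{below} $3/2$ and the binding type is the small bins at $3/2$; then the extra $\frac{1-\beta}{4-3\beta}$ must come from a sharper accounting where a critical bin with its reserved critical item \emph{plus} up to $1/3$ of tiny items is only charged once while those tiny items were already counted. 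The main obstacle is getting this bookkeeping exactly right: one must avoid double-counting the tiny items that reside inside critical bins, and I would handle it by partitioning tiny items into those in tiny bins versus those in critical bins, applying the size bound of Lemma~\ref{lem:size-tiny} to all of them together, and using the ratio $\frac{c' \ge \frac{\beta}{1-\beta}t'}{}$ relationship to translate between the number of tiny bins, the number of critical bins, and $\opt$. Once the ratio $\frac{6(1-\beta)}{4-3\beta}$ from Lemma~\ref{lem:size-tiny} is in hand, verifying that $\ell + s + c + t \le \left(\frac{3}{2} + \frac{1-\beta}{4-3\beta}\right)\opt(\sigma) + O(1)$ is routine algebra that I would carry out by plugging in the three inequalities and simplifying.
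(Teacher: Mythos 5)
There is a genuine gap at the combination step, which is where the proof actually lives. Your per-type lower bounds on $\opt$ (namely $\opt \ge \ell + c$ from the items larger than $1/2$, $\opt \gtrsim s$ from the small items, and $\opt \ge S \gtrsim t\cdot\frac{4-3\beta}{6-6\beta}$ from Lemma~\ref{lem:size-tiny}) all refer to the \emph{same} optimal bins: in the tight instance every bin of $\opt$ simultaneously contains a critical item of size $\approx 1/2$, a small item of size $\approx 1/3$, and tiny items of total size $\approx 1/6$. Hence these bounds are not additive --- summing them yields only $\RRC(\sigma)\lesssim 3\,\opt(\sigma)$ --- and taking the maximum of the per-type ratios, as you propose, is simply invalid, since the bins of each type must be paid for out of the same budget $\opt(\sigma)$. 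You sense the problem (``the extra $\frac{1-\beta}{4-3\beta}$ must come from a sharper accounting'') but never supply that accounting; the closing claim that the combination is ``routine algebra'' is exactly the missing idea. (A smaller slip: three small items of size $>1/3$ never fit in one bin, so ``at most three small items fit per bin'' and the bound $\opt \ge 2(s-1)/3$ are off; at most two fit.)

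The paper closes this gap with a weighting function. Large and critical items get weight $1$, small items weight $1/2$, and a tiny item of size $x$ gets weight $\frac{6-6\beta}{4-3\beta}x$. One then shows (i) $\RRC(\sigma)\le W+3$, where $W$ is the total weight of the input --- this is where Lemma~\ref{lem:size-tiny} enters, exactly as you intended, to show that the number of tiny bins is at most $w_t+2$ --- and (ii) no unit-capacity bin can carry weight more than $\frac{14-11\beta}{8-6\beta}$, the maximum being attained by the critical$+$small$+$tiny combination described above, whence $\opt(\sigma)\ge W\cdot\frac{8-6\beta}{14-11\beta}$. Step (ii) is precisely the device that handles mixed optimal bins and replaces your max/sum of per-type ratios; without it, or an equivalent amortization, the claimed ratio $1.5+\frac{1-\beta}{4-3\beta}=\frac{14-11\beta}{8-6\beta}$ does not follow from the ingredients you list.
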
 

\begin{proof}
	To prove the lemma, we use a weighting function argument as follows. Define the \emph{weight} of large and critical items to be 1, and the weight of small items to be 1/2.
	The weight of a tiny item of size $x$ is defined as $\frac{6-6\beta}{4-3\beta} x$. 
	Note that the weight of a tiny item $x$ is less than $3x/2$. 
	Let $W$ denote the total weight of all items in the sequence.
	
	First we claim that the number of bins opened by \textsc{Rrc} is at most $W+3$.
	Large bins include 1 large item of weight 1, and small 
	bins
	include two items of weight 1/2 (except possibly the last one) which gives a total weight of 1 for the bin. Critical bins all include a critical item of weight 1. So, if $w_\ell$, $w_s$, $w_c$ respectively denote the total weight of large, small, and critical items, then the number of non-tiny bins opened by the algorithm is at most $w_\ell+w_s+w_c+1$. Let $S$ denote the total size of tiny items. By Lemma~\ref{lem:size-tiny}, we have 
	$S > (t-1) \frac{4-3\beta}{6-6\beta} -1/6$. 
	The total weight of tiny bins is
	$\frac{6-6\beta}{4-3\beta} \cdot S \geq \frac{6-6\beta}{4-3\beta}  \cdot ((t-1) \frac{4-3\beta}{6-6\beta} -1/6) \geq t-2$.
	So, tiny items have total weight of at least $t-2$, that is, the number tiny bins is at most $w_t+2$, where $w_t$ is the total weight of tiny items. Consequently, the total number of bins opened by the algorithm is at most $w_\ell+w_s+w_c + 1 + w_t+2$, and the claim is established, i.e., \textsc{Rrc}$(\sigma) \leq W+3$.
	
	Next, we
	show the number of bins in an optimal solution is at least 
	$W (8-6\beta) / (14-11\beta)$. 
	For that, it suffices to show that the weight of any bin in the optimal solution (i.e., any collection of items with total size at most 1) is at most $(14-11\beta)/(8-6\beta)$. 
	Define the \emph{density} of an item as the ratio between its weight and size. To maximize the weight of a bin, it is desirable to place items of larger densities in it. This is achieved by placing a critical item of size $1/2+\epsilon$, a small item of size $1/3+\epsilon$ and a set of tiny items of total size $1/6-2\epsilon$ in the bin, where $\epsilon$ is an arbitrary small positive value. The weight of such a bin will be 
	$1 + 1/2 + (1/6-2\epsilon) \frac{6-6\beta}{4-3\beta} < \frac{14-11\beta}{8-6\beta} =  1.5 + \frac{1-\beta}{4-3\beta}$. 

	To summarize, we have \textsc{Rrc}$(\sigma) \leq W+3$ and 
	$\opt(\sigma) \geq \frac{8-6\beta}{14-11\beta}W$. 
	This gives an asymptotic competitive ratio of at most $\frac{14-11\beta}{8-6\beta} =  1.5 + \frac{1-\beta}{4-3\beta}$ 
	for \textsc{Rrc}.
\end{proof}


 Next, we consider case II, where the algorithm has declared too many bins as critical 
 and some of them did not receive any critical item.
\begin{lemma}
	\label{lem:bpl2}
	If some of the bins declared as critical do not receive a critical item, then the asymptotic number of bins in the final packing of \textsc{Rrc} algorithm within a ratio 
	$ 1.5 + \frac{9\beta}{8-6\beta}$ 
	of the number of bins in the optimal packing.
\end{lemma}

\begin{proof}
	Let $C$ denote the last critical bin opened by \textsc{Rrc}. Since there are critical bins without critical items at the final packing, $C$ should be opened by a tiny item. Let $x$ be the tiny item that opens $C$. Let $c'$ and $t'$ respectively denote the number of critical and tiny bins before $C$ is opened. Since $C$ is declared a critical bin, we have $\frac{c'}{c'+t'} < \beta$ which gives $c'< \frac{\beta}{1-\beta} t'$. As before, let $c$ and $t$ respectively denote the number of critical and tiny bins in the final packing ($c' = c-1$). We have $c < \frac{\beta}{1-\beta}t+1$.
	
	In order to prove the lemma, we show that all bins, except possibly a constant number of them, on average have a level of at least $\frac{4-3\beta}{6}$. This clearly holds for bins opened by large and small items, except possibly for the bin opened for the last small item; these bins all have a level of at least $2/3 \geq \frac{4-3\beta}{6}$. Note that if $c+t$ is a constant, 
	then all but a constant number of bins have level of at least 2/3 and the algorithm has a competitive ratio of at most 1.5. In what follows, we assume $c+t$ grows with the input length $n$. 
	By Lemma~\ref{lem:size-tiny}, the total size of tiny items is at least 
	$(t-1) \frac{4-3\beta}{6-6\beta}-1/6$. 
	These items are distributed between $t+c < t + \frac{\beta }{1-\beta}t+1 = (t-1)\frac{1}{1-\beta}+ d $ bins, for constant $d=(2-\beta)/(1-\beta)$. 
	Therefore, if we ignore $d$ bins, the average level of the remaining tiny/critical bins will be more than 
	$\frac{(4-3\beta)/(6-6\beta)}{1/(1-\beta)} - \frac{1}{6(t+c)} = \frac{4-3\beta}{6}- \frac{1}{6(t+c)}$. 
	Since $t+c$ is asymptotically large, the average level of these bins converges to a value of size larger than
	$\frac{4-3\beta}{6}$.
	
	Let $S$ denote the total size of items in the input sequence $\sigma$. Clearly at least $S$ bins are required to pack all items, i.e., $\opt(\sigma) \geq S$. On the other hand, since the average level of all bins (excluding $d+1$ bins) is more than 
	$\frac{4-3\beta}{6}$, 
	for the cost of \textsc{Rrc} we can write \textsc{Rrc}$(\sigma) \leq \lceil \frac{6}{4-3\beta} S \rceil + d+1 < \frac{6}{4-3\beta} S + d+2 \leq (1.5 + \frac{9\beta}{8-6\beta})\opt(\sigma) + d+2$.
\end{proof}

Provided with Lemmas~\ref{lem:bpl1} and \ref{lem:bpl2}, we are ready to prove an upper bound for the competitive ratio of \textsc{Rrc} as a function of its parameter $\alpha$. We consider two cases based on whether the advice is trusted.

\begin{lemma}\label{lem:bp-cor-adv}
	If the advice is trusted, then the competitive ratio of the \textsc{Rrc} algorithm is at most
	$1.5 + \max\{\frac{1-\alpha}{4-3\alpha}, \frac{15}{2^{k/2+1}} \}$.
			
\end{lemma}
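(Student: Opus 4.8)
The plan is to split on whether $\gamma \le \alpha$ or $\gamma > \alpha$, which governs the two behaviours of $\beta = \min\{\alpha,\gamma\}$. If $\gamma \le \alpha$, then $\beta = \gamma$, so on a trusted instance \RRC maintains exactly the critical ratio named by the advice, making it identical to the $k$-bit critical-ratio variant of Reserve-Critical analysed in~\cite{ADKRR18}; Lemma~\ref{lem:pure-advice} then gives a competitive ratio of at most $1.5 + \tfrac{15}{2^{k/2+1}}$, which is at most $1.5 + \max\{\tfrac{1-\alpha}{4-3\alpha},\tfrac{15}{2^{k/2+1}}\}$, as required.

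Now suppose $\gamma > \alpha$, so $\beta = \alpha$. I would show that the final packing is, up to an additive constant and hence asymptotically, in Case~I of Lemma~\ref{lem:bpl1}: essentially every bin declared critical ends up holding a critical item, whence Lemma~\ref{lem:bpl1} bounds the ratio by $1.5 + \tfrac{1-\beta}{4-3\beta} = 1.5 + \tfrac{1-\alpha}{4-3\alpha}$. Since critical items always occupy the earliest vacant critical bin, it suffices to bound the number $q$ of critical bins created by \emph{tiny-item overflow} and show $q \le c + O(1)$, where $c$ is the number of critical items. Two ingredients feed this: (i) the declaration rule keeps the overflow count roughly proportional to the tiny-bin count, $q \le \tfrac{\beta}{1-\beta}\,t_A + O(1)$, where $t_A$ is the number of tiny bins \RRC opens; and (ii) trusted advice means $\gamma \le \tfrac{c}{c+t}$, equivalently $c \ge \tfrac{\gamma}{1-\gamma}\,t$, with $t$ the tiny-bin count of plain Reserve-Critical on the same instance. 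One then compares the two First-Fit packings of the (identical) tiny items into the (identical, $1/3$-budget) critical-bin structure to argue that the average fill level of critical bins and that of tiny bins agree between \RRC and Reserve-Critical up to $o(1)$; this lets one write $q$ and $c$ as, essentially, $S/(\ell_C + \tfrac{1-\beta}{\beta}\ell_T)$ and $S/(\ell_C + \tfrac{1-\gamma}{\gamma}\ell_T)$ respectively, where $S$ is the total tiny volume and $\ell_C,\ell_T$ are the shared average levels. Because $\beta = \alpha < \gamma$ and $x \mapsto x/(1-x)$ is increasing, the second denominator is smaller, so $q \le c + o(c)$, establishing Case~I up to lower-order terms, which do not affect the asymptotic ratio.

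Combining the two regimes gives the claimed bound $1.5 + \max\{\tfrac{1-\alpha}{4-3\alpha},\tfrac{15}{2^{k/2+1}}\}$. The main obstacle is the fill-level comparison in the second regime: \RRC and Reserve-Critical feed the same tiny items through First-Fit but open their critical bins in different orders (Reserve-Critical all at once, \RRC one at a time as the ratio dictates), so their per-bin fill profiles are not literally equal and one must show the aggregate discrepancy is $o(\cdot)$ — most naturally by adapting the pairing argument behind Lemma~\ref{lem:size-tiny}. Everything else is either a direct appeal to Lemma~\ref{lem:pure-advice} or Lemma~\ref{lem:bpl1}, or bookkeeping of $O(1)$ and $o(\cdot)$ terms that vanish in the asymptotic competitive ratio.
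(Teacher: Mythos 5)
Your proposal follows the paper's proof exactly in structure: split on $\gamma \le \alpha$ versus $\gamma > \alpha$, invoke Lemma~\ref{lem:pure-advice} in the first case, and in the second case argue that every critical bin receives a critical item so that Lemma~\ref{lem:bpl1} applies. The only difference is one of detail: for the second case the paper simply asserts, in a single sentence, that because \RRC maintains a critical ratio $\alpha$ smaller than the (trusted) ratio $\gamma$ it opens no more critical bins than there are critical items, whereas your fill-level comparison between \RRC and Reserve-Critical is an elaborate (and admittedly unfinished) attempt to justify rigorously what the paper treats as immediate -- so you are not missing any ingredient that the paper's own proof actually supplies.
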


\begin{proof}
	First, if $\gamma \leq \alpha$, by Lemma~\ref{lem:pure-advice}, the competitive ratio will be at most $1.5 +  \frac{15}{2^{k/2+1}}$. Next, assume $\alpha < \gamma$, that is $\beta = \alpha$. All critical bins receive a critical item in this case. This is because the algorithm maintains a critical ratio $\alpha$ which is smaller than $\gamma$. In other words, the algorithm declares a smaller ratio of its bins critical compared to the actual ratio in the Reserve-Critical algorithm. Hence, all critical bins receive a critical item. By 
	Lemma~\ref{lem:bpl1}, the competitive ratio is at most 
	$ 1.5+\frac{1-\alpha}{4-3\alpha}$.
\end{proof}

\begin{lemma}\label{lem:bp-incor-adv}
	If the advice is untrusted, then the competitive ratio of \textsc{Rrc} is at most
	$1.5+\max\{\frac{1}{4},\frac{9\alpha}{8-6\alpha}\}$.
\end{lemma}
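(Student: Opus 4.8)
The plan is to obtain the bound by simply combining the two case analyses of Lemmas~\ref{lem:bpl1} and~\ref{lem:bpl2}, together with two elementary monotonicity observations on the resulting bounds viewed as functions of $\beta=\min\{\alpha,\gamma\}$. When the advice is untrusted, $\gamma$ is arbitrary (adversarial), so \emph{both} types of final packing are possible: either every critical bin eventually receives a critical item (Case~I), or some bin declared critical never does (Case~II). Hence the untrusted competitive ratio of \textsc{Rrc} is at most the maximum of the two bounds guaranteed by these lemmas, and the task reduces to bounding each of them uniformly over the admissible range of $\beta$.

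For Case~I, Lemma~\ref{lem:bpl1} gives competitive ratio at most $1.5+\frac{1-\beta}{4-3\beta}$. A one-line derivative computation, $\frac{d}{d\beta}\frac{1-\beta}{4-3\beta}=\frac{-1}{(4-3\beta)^2}<0$, shows this quantity is decreasing in $\beta$ on $[0,1]$; since $\beta\ge 0$, it is at most $1.5+\frac14$. In particular the Case~I bound is harmless and does not depend on $\alpha$. For Case~II, Lemma~\ref{lem:bpl2} gives ratio at most $1.5+\frac{9\beta}{8-6\beta}$, and here $\frac{d}{d\beta}\frac{9\beta}{8-6\beta}=\frac{72}{(8-6\beta)^2}>0$, so the quantity is increasing in $\beta$; since $\beta=\min\{\alpha,\gamma\}\le\alpha$, it is at most $1.5+\frac{9\alpha}{8-6\alpha}$.

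Combining, the competitive ratio on untrusted advice is at most $\max\{1.5+\tfrac14,\;1.5+\tfrac{9\alpha}{8-6\alpha}\}=1.5+\max\{\tfrac14,\tfrac{9\alpha}{8-6\alpha}\}$, as claimed. I expect no real obstacle in this proof: all of the substantive work lies in Lemmas~\ref{lem:bpl1} and~\ref{lem:bpl2} (the weighting-function argument for Case~I and the average-level argument for Case~II), which we may assume; what remains is only the observation that untrusted advice cannot preclude Case~II, the passage from $\beta$ to $\alpha$ via monotonicity of the Case~II bound, and the remark that the Case~I bound is maximized at $\beta=0$. The only mild subtlety worth stating explicitly is why $\beta\le\alpha$ always holds (immediate from $\beta=\min\{\alpha,\gamma\}$) and why, on untrusted advice, nothing further can be assumed about $\gamma$, so both cases must be retained.
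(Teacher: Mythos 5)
Your proof is correct and follows essentially the same route as the paper's: the same two-case split according to whether every critical bin receives a critical item, invoking Lemmas~\ref{lem:bpl1} and~\ref{lem:bpl2} respectively, and then using monotonicity in $\beta$ (decreasing for Case~I, so bounded at $\beta=0$; increasing for Case~II, so bounded at $\beta=\alpha$). The explicit derivative computations and the remark that untrusted advice cannot rule out either case are just slightly more detailed versions of what the paper states.
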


\begin{proof}
	Consider two cases. First, assume all bins declared as critical by the \textsc{Rrc} algorithm receive a critical item. In this case, by Lemma~\ref{lem:bpl1}, the competitive ratio of the algorithm will be bounded by 
	$1.5 + \frac{1-\beta}{4-3\beta}$;
	 this value decreases in $\beta$ and hence is maximized at $\beta=0$. Next, assume some of the bins declared as critical do not receive a critical item. By Lemma~\ref{lem:bpl2}, the competitive ratio of \textsc{Rrc} in this case is at most 
	 $1.5 + \frac{9\beta}{8-6\beta}$; 
	 this value however increases by $\beta$ and is maximized at the upper bound $\beta=\alpha$.  This completes the proof.
\end{proof}

The following theorem 
directly follows from Lemmas~\ref{lem:bp-cor-adv} and \ref{lem:bp-incor-adv}.


%

\begin{theorem}\label{thm:bp-main}
Algorithm~Robust-Reserve-Critical with parameter $\alpha\in[0,1]$ and $k$ bits of advice achieves a competitive ratio of 
$r_\textsc{Rrc} \leq 1.5 + \max\{\frac{1-\alpha}{4-3\alpha}, \frac{15}{2^{k/2+1}} \}$ 
when the advice is trusted and a competitive ratio of $w_\textsc{Rrc} \leq 1.5+\max\{\frac{1}{4},\frac{9\alpha}{8-6\alpha}\}$ 
when the advice is untrusted.
\end{theorem}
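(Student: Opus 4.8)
The plan is to obtain Theorem~\ref{thm:bp-main} by a short case analysis that glues together the structural lemmas on the shape of the final \RRC packing, up to the usual additive constant of asymptotic competitiveness. Recall that the outcome of \RRC falls into one of two cases: in Case~I every critical bin eventually receives a critical item, and in Case~II at least one critical bin ends up holding only tiny items, which by the definition of the algorithm can occur only when the advice is untrusted. Lemma~\ref{lem:bpl1} bounds the ratio in Case~I by $1.5+\frac{1-\beta}{4-3\beta}$ and Lemma~\ref{lem:bpl2} bounds it in Case~II by $1.5+\frac{9\beta}{8-6\beta}$, where $\beta=\min\{\alpha,\gamma\}$. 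I would first record the two elementary monotonicity facts that drive everything: writing $g_1(\beta)=\frac{1-\beta}{4-3\beta}$ and $g_2(\beta)=\frac{9\beta}{8-6\beta}$, one has $g_1'(\beta)=-1/(4-3\beta)^2<0$ and $g_2'(\beta)=72/(8-6\beta)^2>0$ on $[0,1]$, so $g_1$ is decreasing and $g_2$ is increasing; also $\beta=\min\{\alpha,\gamma\}\le\alpha$ always. This, together with the three cited lemmas (plus the imported Lemma~\ref{lem:pure-advice} from~\cite{ADKRR18}), is essentially all that is needed.

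For the trusted bound I would split on whether $\gamma\le\alpha$ or $\gamma>\alpha$. If $\gamma\le\alpha$ then $\beta=\gamma$, so \RRC maintains a critical ratio that, by the encoding of the advice, lies within $1/2^k$ of the true critical ratio $c/(c+t)$; this is exactly the regime analysed in~\cite{ADKRR18}, so Lemma~\ref{lem:pure-advice} gives a ratio of at most $1.5+\frac{15}{2^{k/2+1}}$. One should note here that the packing is in fact in Case~I and Lemma~\ref{lem:bpl1} also applies, but its bound $1.5+g_1(\gamma)$ is too weak for small $\gamma$, which is precisely why the refined bound of~\cite{ADKRR18} must be invoked in this branch. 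If instead $\gamma>\alpha$ then $\beta=\alpha<\gamma\le c/(c+t)$, so \RRC opens strictly fewer critical bins than there are critical items; hence every critical bin is eventually filled with a critical item, the packing is in Case~I, and Lemma~\ref{lem:bpl1} with $\beta=\alpha$ yields $1.5+\frac{1-\alpha}{4-3\alpha}$. Taking the maximum over the two subcases gives $r_\textsc{Rrc}\le 1.5+\max\{\frac{1-\alpha}{4-3\alpha},\frac{15}{2^{k/2+1}}\}$, which is the content of Lemma~\ref{lem:bp-cor-adv}.

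For the untrusted bound the advice $\gamma$ can be arbitrary, but $\beta=\min\{\alpha,\gamma\}\le\alpha$ still holds and the final packing is either in Case~I or in Case~II. In Case~I, Lemma~\ref{lem:bpl1} and the fact that $g_1$ is decreasing give a ratio of at most $1.5+g_1(\beta)\le 1.5+g_1(0)=1.5+\frac14$. In Case~II, Lemma~\ref{lem:bpl2} and the fact that $g_2$ is increasing, together with $\beta\le\alpha$, give a ratio of at most $1.5+g_2(\beta)\le 1.5+g_2(\alpha)=1.5+\frac{9\alpha}{8-6\alpha}$. Since the adversary can realise either case, $w_\textsc{Rrc}\le 1.5+\max\{\frac14,\frac{9\alpha}{8-6\alpha}\}$, which is Lemma~\ref{lem:bp-incor-adv}, and this completes the proof of the theorem.

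The real difficulty is hidden inside the cited lemmas rather than in this assembly. Proving Lemma~\ref{lem:bpl1} requires a weighting/charging argument bounding the wasted space across tiny, small, critical, and large bins under the hypothesis that every critical bin is full, and pinning down the constant $\frac{1-\beta}{4-3\beta}$ depends delicately on how the $\beta$-proportion rule interleaves critical and tiny bins. Proving Lemma~\ref{lem:bpl2} is the harder of the two, since it must charge the near-empty critical bins (filled only to roughly $1/6$ in the worst case) against the optimum, and this is the source of the $\frac{9\beta}{8-6\beta}$ term. Granting those lemmas, the only points one must be careful about in the present theorem are (i) invoking the \cite{ADKRR18} bound rather than Lemma~\ref{lem:bpl1} in the trusted $\gamma\le\alpha$ branch, and (ii) using $0\le\beta\le\alpha$ to control $g_1$ from above via $\beta\ge0$ and $g_2$ from above via $\beta\le\alpha$ in the untrusted case.
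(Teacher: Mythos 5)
Your proposal is correct and follows essentially the same route as the paper: the theorem is obtained exactly by combining Lemma~\ref{lem:bp-cor-adv} (splitting on $\gamma\le\alpha$ versus $\gamma>\alpha$, invoking Lemma~\ref{lem:pure-advice} in the first branch and Lemma~\ref{lem:bpl1} with $\beta=\alpha$ in the second) with Lemma~\ref{lem:bp-incor-adv} (using the monotonicity of the two bounds to evaluate Lemma~\ref{lem:bpl1} at $\beta=0$ and Lemma~\ref{lem:bpl2} at $\beta=\alpha$). Your monotonicity computations and the observation that Case~II forces untrusted advice match the paper's argument.
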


Note that $r_\textsc{Rrc} \leq w_\textsc{Rrc}$, and therefore we have 
$r_\textsc{Rrc}\in [1.5,33/19]$. 
Assuming the size $k$ of the advice is a sufficiently large constant, we conclude the following. 
\begin{corollary}
For bin packing with untrusted advice, there is a $(r,f(r))$-competitive algorithm where $r \in [1.5,1.73]$ and 
$f(r) = \max \{33-18r, 7/4\}$.
\end{corollary}

\newcommand{\MTF}{\ensuremath{\textsc{Mtf}}\xspace}
\newcommand{\MTFF}{\ensuremath{\textsc{Mtf2}}\xspace}
\newcommand{\MTFE}{\ensuremath{\textsc{Mtfe}}\xspace}
\newcommand{\MTFO}{\ensuremath{\textsc{Mtfo}}\xspace}
\newcommand{\MO}{\ensuremath{\textsc{Tog}}\xspace}
\newcommand{\Mo}{\ensuremath{\textsc{Tog}}\xspace}
\newcommand{\OPT}{\ensuremath{\textsc{Opt}}\xspace}
\newcommand{\alg}{\ensuremath{\textsc{Alg}}\xspace}

\section{List update}
\label{sect:lu}

\subsection{Background}
\sh{In this section, we study the list update problem under the untrusted advice model.}The list update problem consists of a list of items of length $m$, and a sequence of $n$ requests that should be served with minimum total cost. Every request corresponds to an `access' to an item in the list. If the item is at position $i$ of the list then its access cost is $i$. After accessing the item, the algorithm can move it closer to the front of the list with no cost using a `free exchange'. In addition, at any point, the algorithm can swap the position of any two consecutive items in the list using a `paid exchange' which has a cost of 1. Throughout this section, we adopt the standard assumption that $m$ is a large integer but still a constant with respect to $n$. 
\sh{In particular, we assume that $m\in o(n)$.}

Move-to-Front (\MTF) is an algorithm that moves every accessed item to the front of the list using a free exchange.
{Sleator and Tarjan~\cite{SleTar85A} proved that }\MTF has a competitive ratio of at most 2~\cite{SleTar85A}, which is the best that a deterministic algorithm can achieve~\cite{Irani91}. \textsc{Timestamp}, introduced by Albers~\cite{Albers98}, is another algorithm that achieves the optimal competitive ratio of 2. This algorithm uses a free exchange to move an accessed item $x$ to the front of the first item that has been accessed at most once since the last access to $x$.
Another class of algorithms are Move-To-Front-Every-Other-Access (\MTFF) is a class of algorithms which maintain a bit for each item in the list. Upon accessing an item $x$, the bit of $x$ is flipped, and $x$ is moved to front if its bit is 0 after the flip (otherwise the list is not updated). If all bits are 0 at the beginning, 
\MTFF is called called Move-To-Front-Even (\MTFE), and 
if all bits are 1 at the beginning, \MTFF is 
called Move-To-Front-Odd (\MTFO). Both \MTFE and \MTFO algorithms have a competitive ratio of $5/2$~\cite{BoyKamLata14}. In \sh{Boyar et al}~\cite{BoyKamLata14} it is shown that, for any request sequence, at least one of \textsc{Timestamp}, \MTFO, and \MTFE has a competitive ratio of at most $5/3$.
For a given request sequence, the best option among the three algorithms can be indicated with two bits of advice, giving a $5/3$-competitive algorithm. However, if the advice is untrusted, the competitive ratio can be as bad as $5/2$.

To address this issue, we introduce an algorithm named Toggle (\MO) that has a parameter $\beta\in[0,1/2]$, and uses 2 advice bits to select one of the algorithms \textsc{Timestamp}, \MTFE or \MTFO, see Figure~\ref{fig:toggle}. This algorithm achieves a competitive ratio of $r_{\MO} = 5/3 + \frac{5\beta}{6+3\beta}$
when the advice is trusted and a competitive ratio of at most $w_{\MO} = 2 + 2/(4+5\beta)$ when the advice is untrusted. The parameter $\beta$ can be tuned and 
should be smaller when the advice is more reliable. In particular, when $\beta=0$, we get a $(5/3,5/2)$-competitive algorithm.

\begin{figure}[htb]
	\centering
\scalebox{.85}{
\begin{minipage}{\textwidth}
\begin{tikzpicture}
  \draw[->] (0, 0) -- (12.5, 0) node[right] {time};
  \def\sep{{0, 3, 5, 9, 12}};
  \foreach \i/\phase/\algo  in {1/trusting/Mtfo,2/ignoring/Mtf,3/trusting/Mtfo,4/ignoring/Mtf} {
	\draw (\sep[\i],-0.2) -- (\sep[\i],+0.2);
	\draw (\sep[\i-1],-0.2) -- (\sep[\i-1],+0.2);
	\draw ({\sep[\i]/2+\sep[\i-1]/2},0) node[below] {\phase}
		node[above] {\textsc{\algo}};
  };
\end{tikzpicture}
\\[1em]
\begin{tikzpicture}
  \draw[->] (0, 0) -- (12.5, 0) node[right] {time};
  \def\sep{{0, 2, 5.5, 8.5, 12.2}};
  \foreach \i/\phase/\algo  in {1/trusting/Mtfe,2/ignoring/Mtf,3/trusting/Mtfe,4/ignoring/Mtf} {
	\draw (\sep[\i],-0.2) -- (\sep[\i],+0.2);
	\draw (\sep[\i-1],-0.2) -- (\sep[\i-1],+0.2);
	\draw ({\sep[\i]/2+\sep[\i-1]/2},0) node[below] {\phase}
		node[above] {\textsc{\algo}};
  };
\end{tikzpicture}
\\[1em]
\begin{tikzpicture}
  \draw[->] (0, 0) -- (12.5, 0) node[right] {time};
  \draw (0,-0.2) -- (0,+0.2);
  \draw (6,0) node[above] {\textsc{Timestamp}};
\end{tikzpicture}
\end{minipage}}
\caption{The three different behaviors of the algorithm Toggle. The phases are indicated below the timeline.}
\label{fig:toggle}
\end{figure}



\subsection{The Toggle algorithm} 
Given the parameter $\beta$, the Toggle algorithm (\MO) works as follows. If the advice indicates \textsc{Timestamp}, the algorithm runs \textsc{Timestamp}. 
If the advice indicates either \MTFO or \MTFE, the algorithm will proceed in phases (the length of which partially depend on $\beta$) alternating (``toggling'') between running \MTFE or \MTFO, and \MTF. In what follows, we use \MTFF to represent the algorithm indicated by the advice{, that is, \MTFF is either \MTFE or \MTFO as per the advice}. The algorithm \MO will initially begin with \MTFF until the cost of the accesses of the phase reaches a certain threshold, then a new phase begins and \MO switches to \MTF. This new phase ends when the access cost of the phase reaches a certain threshold, and \MO switches back to \MTFF. This alternating pattern continues as \MO serves the requests. As such, \MO will use \MTFF for the odd phases which we will call \emph{trusting phases}, and \MTF for the even phases which we will call \emph{ignoring phases}. The actions during each phase are formally defined below.


\textit{Trusting phase: } In a trusting phase, \MO will use \MTFF to serve the requests. Let $\sigma_i$ be the first request of some trusting phase $j$ for $1 \le i \le n$ and an odd $j \ge 1$. Before serving
\sh{the first request of the phase}$\sigma_i$, \MO modifies the list with paid exchanges to match the list configuration that would result from running \MTFF on the request sequence $\sigma_1,\ldots,\sigma_i$. The number of paid exchanges will be less than $m^2$. In addition, \MO will set the bits of items in the list to the same value as at the end of this hypothetical run. As such, during a trusting phase, \MO incurs the same access cost as $\MTFF$. The trusting phase continues until the cost to access a request $\sigma_\ell$, $i < \ell \le n$, for \MO would cause the total access cost for the phase to become at least $m^3$ (or the request sequence ends). The next phase, which will be an ignoring phase, begins with request $\sigma_{\ell+1}$.

\textit{Ignoring phase: } In an ignoring phase, \MO will use the \MTF rule to serve the request.
In an ignoring phase, unlike the trusting phase, \MO does not use paid exchanges to match another list configuration. Let $\sigma_i$ be the first request of some ignoring phase $j$ for $1 \le i \le n$ and an even $j \ge 1$.
The ignoring phase continues until the cost to access a request $\sigma_\ell$, $i < \ell \le n$, for \MO would cause the total access cost for the phase to exceed $\beta \cdot m^3$ (or the request sequence ends). The next phase, which will be a trusting phase, begins with request $\sigma_{\ell+1}$.

%
%
%

\subsection{Analysis}

In our analysis, in the case of untrusted advice, we will focus on analyzing \MTFF. The reason for this is that, based on the competitive ratio, \textsc{Timestamp} has a competitive ratio of at most 2 which is better than the worst case of $2 + 2/(4+5\beta)$ that we will show when the untrusted advice indicates one of \MTFO or \MTFE.

Throughout the analysis, we fix a sequence $\sigma$ and use $k$ to denote the number of trusting phases of \Mo for serving $\sigma$. Note that the number of ignoring phases is either $k-1$ or $k$.
For each request, any algorithm incurs an access cost of at least 1 and hence each phase has length at most $m^3$. Since $m^3$ is a constant independent of the length of the input, $k$ grows with $n$. This observation will be used in the proof of the following two lemmas that help us bound the cost of \MO in the case of untrusted and trusted advice, respectively.

For the analysis, we will break the sequence into subsequences and analyze the cost over the subsequences. Let  $\sigma'$ be a subsequence of $\sigma$ and $\alg$ be any algorithm serving serving the sequence $\sigma$. We will denote the cost of $\alg$ over the subsequence $\sigma'$ with $\alg(\sigma')$, where it is implicit that $\alg$ has served the requests preceding $\sigma'$ in $\sigma$, and will serve the requests following $\sigma'$ in $\sigma$. The following lemma bounds the cost for an optimal algorithm over a subsequence as compared to a $c$-competitive online algorithm. 

\begin{lemma}\label{lem:phaseBoundOPT}
	Let $\alg$ be an online algorithm for the List Update problem such that, for all $\sigma$, $\alg(\sigma) \le c \cdot \opt(\sigma) + \alpha$. For any  $\sigma'$ that is a subsequence of $\sigma$,
	$$\OPT(\sigma') \ge \frac{\textsc{Alg}(\sigma')}{c} - \frac{\alpha}{c} - \frac{m^2}{2} + \frac{m}{2} ~.$$
\end{lemma}

\begin{proof}
	Let $r_i$ be the first request of $\sigma'$. Let $L^\alg_{r_j}$ be the list configuration of any algorithm $\alg$ immediately before serving the request $r_j$. Define $\opt'$ to be an optimal algorithm for the subsequence $\sigma'$ with an initial list configuration of $L^\alg_{r_i}$. That is, $\opt'$ is only serving $\sigma'$, starting from the configuration of $\alg$.
	
	Fix an optimal algorithm $\opt$ for $\sigma$. Define another algorithm
	$B$ that will only serve $\sigma'$. For a cost of at most $m(m-1)/2$ paid exchanges, $B$ will change its initial configuration of $L^\alg_{r_i}$ to $L^\opt_{r_i}$, serve $\sigma'$ as $\opt$ serves the subsequence in $\sigma$.
	The total cost of $B$ for $\sigma'$ cannot be less than $\opt'$ without contradicting the optimality of $\opt'$ for $\sigma'$. Hence, we have that, 
	\begin{align*}
		B(\sigma) &= \opt(\sigma') + m(m-1)/2 \ge \opt'(\sigma') ~.
	\end{align*}
	
	Using this and the competitive ratio of $\alg$, we get
	$$ \alg(\sigma') \le c \cdot \opt'(\sigma') + \alpha \le c \cdot (\opt(\sigma') + m(m-1)/2) + \alpha$$
	and the claim follows.
\end{proof}

\begin{lemma}
	\label{lem:costTrustMo}
	For a trusting phase, the cost of {\MO}  is in the range 
	$(m^3, m^3 (1+1/m+1/m^2))$ 
	(excluding the last phase). 
\end{lemma}

\begin{proof}
	For paid exchanges at the beginning of the phase, \MO incurs a cost that is less than $m^2$. Before serving the last request $\sigma_\ell$ of the phase, the access cost of \MO is less than $m^3$ by definition, and the access cost to $\sigma_\ell$ is at most $m$. 
\end{proof}

Similar arguments apply for an ignoring phase with the exception that the threshold is $\beta \cdot m^2$ and there are no paid exchanges performed by \MO. So, we can observe the following.

\begin{observation}\label{lem:costIgnoreMo}
	In an ignoring phase, the cost of \MO for the phase is in the range $(\beta m^3, \beta m^3(1+1/m^2))$	(excluding the last phase).
\end{observation}

The proof of the following lemma follows from Lemma~\ref{lem:costTrustMo} and Observation~\ref{lem:costIgnoreMo}, noting that there are $k$ trusting phases and at most $k$ ignoring phases.
\begin{lemma}\label{lem:costMo}
	The cost of \MO is bounded from above by $k \cdot m^3 \cdot \left(1 + \beta + \frac{3}{m} \right)$.
\end{lemma}

\begin{lemma}
	\label{lem:crMo}
	For sufficiently long lists and long request sequences, the competitive ratio of \MO (regardless of the advice being trusted or not) converges to at most $2 + \frac{2}{4+5\beta}$. 
\end{lemma}

\begin{proof}
	Consider an arbitrary trusting phase and let $\sigma_t$ denote the subsequence of $\sigma$ formed by requests in that phase. Recall that \MO uses the \MTFF strategy during a trusting phase. 
	We know that $\MTFF(\sigma_t) \leq 2.5\cdot \OPT(\sigma_t) + m^2$~\cite{BoyKamLata14}, and that \MTFF incurs a cost of more than $m^3$ during the phase (Lemma~\ref{lem:costTrustMo}). So, from Lemma~\ref{lem:phaseBoundOPT}, we conclude that \OPT incurs a cost of at least $m^3/2.5 - m^2$ during the phase. Note that this lower bound for the cost of \OPT applies for all trusting phases.
	
	Next, consider an arbitrary ignoring phase and let $\sigma'$ denote the subsequence of requests served by \MO during that phase. Recall that \MO applies \MTF during an ignoring phase.
	We know $\MTF(\sigma') \leq 2 \textsc{\OPT}(\sigma')+m^2$~
	\cite{SleTar85}, and \MTF incurs an access cost of at least $\beta m^3$ during the phase (Lemma~\ref{lem:costIgnoreMo}). So, from Lemma~\ref{lem:phaseBoundOPT}, we conclude that \OPT incurs a cost of at least $\beta m^3/2 - m^2$ during the phase. Note that this lower bound for the cost of \OPT applies for all ignoring phases.
	
	Since we have at least $k-1$ of each trusting and ignoring phases, the total cost of \OPT is at least $(k-1) (m^3/2.5 - m^2) + (k-1) (\beta m^3/2 - m^2) = (k-1) (m^3 \frac{4 + 5\beta}{10} -2m^2) > (k-1)m^3 (\frac{4 + 5\beta}{10} -2/m)$.
	
	To summarize, the cost of \OPT is larger than $(k-1)m^3 (\frac{4 + 5\beta}{10} -2/m)$ and, by Lemma~\ref{lem:costMo}, the cost of \MO is at most $k m^3(1+\beta + 3/m)$. The competitive ratio will be at most $\frac{k m^3(1+\beta + 3/m)}{(k-1)m^3 (\frac{4 + 5\beta}{10} -2/m)}$ which converges to $\frac{10+10\beta}{4+5\beta} = 2+2/(4+5\beta)$ for long lists (as $m$ grows) and long input sequences (as $k$ grows).
\end{proof}

\begin{lemma}
	\label{lem:luCorAd}
	For sufficiently long lists, the ratio between the cost of \MO and that of \MTFF converges to $1+\frac{\beta}{2+\beta}$.
\end{lemma}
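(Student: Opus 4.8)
The plan is to upper bound, over all sequences $\sigma$, the ratio $\MO(\sigma)/\MTFF(\sigma)$ (assuming \MO is in the ``toggling'' mode, i.e.\ the advice indicates \MTFF) by $\bigl(1+\tfrac{\beta}{2+\beta}\bigr)(1+o(1))$, where the $o(1)$ vanishes as the list length $m$ and the sequence length $n$ grow; the matching lower bound on the ratio comes from instances on which the per-phase estimates below are essentially tight, so I concentrate on the upper bound. Fix $\sigma$, let $k$ be the number of trusting phases of \MO (so there are at least $k-1$ ignoring phases), and decompose $\sigma$ into its phases $\sigma^{(1)},\sigma^{(2)},\dots$ The numerator is already under control: Lemma~\ref{lem:costMo} gives $\MO(\sigma)\le k\,m^3(1+\beta+3/m)$. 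All the work goes into a matching lower bound on $\MTFF(\sigma)=\sum_j \MTFF(\sigma^{(j)})$, estimated phase by phase.

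For a trusting phase $\sigma^{(j)}$, \MO matches \MTFF's configuration at the start of the phase and then runs the \MTFF rule, so \MO's access cost on $\sigma^{(j)}$ equals $\MTFF(\sigma^{(j)})$ (recall \MTFF uses only free exchanges); by the phase-termination rule (cf.\ Lemma~\ref{lem:costTrustMo}) this is at least $m^3$, except for the at most one incomplete last phase. For an ignoring phase $\sigma^{(j)}$ the situation is more delicate, because \MO never synchronizes with \MTFF there: \MO simply runs \MTF from its current configuration $L$, and by Observation~\ref{lem:costIgnoreMo} incurs cost at least $\beta m^3$. Since \MTF is $2$-competitive, the optimal cost for $\sigma^{(j)}$ started from $L$ is at least $\beta m^3/2$; if $L'$ denotes \MTFF's configuration at the start of $\sigma^{(j)}$, then converting $L'$ into $L$ costs at most $\binom{m}{2}$ paid exchanges, so the optimal cost for $\sigma^{(j)}$ started from $L'$ --- and hence $\MTFF(\sigma^{(j)})$ itself --- is at least $\beta m^3/2-\binom{m}{2}$. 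This last step is exactly the local argument of Lemma~\ref{lem:phaseBoundOPT}, and it is the one genuinely new ingredient.

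Summing over phases (all but a constant number of which reach their threshold) yields
\[
\MTFF(\sigma)\ \ge\ (k-1)\,m^3+(k-2)\Bigl(\tfrac{\beta}{2}m^3-\binom{m}{2}\Bigr).
\]
Dividing the bound on $\MO(\sigma)$ by this and letting first $k\to\infty$ (note that $k$ grows with $n$, since every phase has access cost $\Theta(m^3)$, a constant independent of $n$) and then $m\to\infty$, the ratio tends to $\dfrac{1+\beta}{1+\beta/2}=1+\dfrac{\beta}{2+\beta}$, as claimed.

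The main obstacle is the second paragraph: lower-bounding \MTFF's cost on the ignoring-phase subsequences, where \MO behaves like \MTF rather than like \MTFF, so no direct comparison between the two algorithms is available. The resolution is to route the lower bound through the offline optimum --- \MTF's $2$-competitiveness transfers the bound $\beta m^3/2$ from \MO's cost to the optimum of the subsequence, and a configuration change costing only $\binom{m}{2}$ transfers it on to \MTFF. Everything else is the phase bookkeeping already set up in Lemmas~\ref{lem:costTrustMo}, \ref{lem:costMo} and Observation~\ref{lem:costIgnoreMo}.
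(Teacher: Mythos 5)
Your proposal is correct and follows essentially the same route as the paper: identical access costs for \MO and \MTFF on trusting phases, and for ignoring phases a lower bound on \MTFF's cost obtained by passing through the offline optimum via \MTF's $2$-competitiveness plus an $O(m^2)$ configuration-change term (which is exactly the paper's Lemma~\ref{lem:phaseBoundOPT}), followed by the same phase bookkeeping. The only differences are immaterial constants in how many phases are counted as complete.
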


\begin{proof}
	Note that \MO and \MTFF incur the same access cost of $m^3$ in any trusting phases. We use an argument similar to the previous lemma for analyzing ignoring phases. Consider an arbitrary ignoring phase and let $\sigma'$ denote the subsequence of requests served by \MO during that phase.
	We know $\MTF(\sigma') \leq 2\cdot \OPT(\sigma') + m^2$~\cite{BoyKamLata14}, and
	\MTF incurs a cost of at least $\beta m^3$ during the phase. So, from Lemma~\ref{lem:phaseBoundOPT}, we conclude that \OPT, and consequently \MTFF, incur a cost of at least $\beta m^3/2 - m^2$ during the phase. Note that this lower bound for the cost of \MTFF applies for all ignoring phases.
	
	The worst-case ratio between the costs of \MO and \MTFF is maximized when the last phase is an ignoring phase. In this case, we have $k$ trusting phases and $k$ ignoring phases. The total cost of \MTFF is at least $k m^3 + k (\beta m^3/2 -m^2) = km^3 ( 1+\beta/2 -1/m)$. By Lemma~\ref{lem:costMo}, the cost of \MO is at most $k m^3(1+\beta + 3/m)$. The ratio between the two algorithms will be less than
	$\frac{k m^3(1+\beta + 3/m)}{km^3 ( 1+\beta/2 -1/m)}$ which converges to $1+\frac{\beta}{2+\beta}$ for long lists.
\end{proof}

Given the above lemmas, we can find upper bounds on the competitive ratio of \MO.
\begin{lemma}
	\label{lem:luCor}
	If the advice is trusted, then the competitive ratio of the \MO algorithm converges to $5/3 + \frac{5\beta}{6+3\beta}$ for sufficiently long lists.
\end{lemma}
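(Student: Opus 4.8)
The plan is to split on the value of the two advice bits, which---when the advice is trusted---correctly identify the algorithm of minimum cost on $\sigma$ among Timestamp, \MTFE, and \MTFO. The key external fact is the result of~\cite{BoyKamLata14}: for every request sequence, at least one of these three algorithms has cost at most $\tfrac53\OPT(\sigma)+O(m^2)$. Hence whichever of the three the trusted advice points to, call it $\textsc{Best}$, also satisfies $\textsc{Best}(\sigma)\le \tfrac53\OPT(\sigma)+O(m^2)$, since it is by definition no costlier than the one guaranteed to be $5/3$-competitive.

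First I would dispose of the easy case: if the advice indicates Timestamp, then \MO runs Timestamp verbatim, so $\MO(\sigma)=\textsc{Timestamp}(\sigma)\le \tfrac53\OPT(\sigma)+O(m^2)$, and since $\tfrac53\le \tfrac53+\tfrac{5\beta}{6+3\beta}$ for $\beta\ge 0$ the bound holds trivially (in fact with room to spare). The substantive case is when the advice indicates \MTFF (one of \MTFE, \MTFO). Here I would invoke Lemma~\ref{lem:luCorAd}, which states that for sufficiently long lists the ratio $\MO(\sigma)/\MTFF(\sigma)$ converges to at most $1+\tfrac{\beta}{2+\beta}$; more precisely $\MO(\sigma)\le\bigl(1+\tfrac{\beta}{2+\beta}+o(1)\bigr)\MTFF(\sigma)+O(m^2)$, where the $o(1)$ is taken as $m\to\infty$ and the additive term is absorbed as the number of phases $k\to\infty$. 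Combining this with $\MTFF(\sigma)\le\tfrac53\OPT(\sigma)+O(m^2)$ gives $\MO(\sigma)\le\bigl(1+\tfrac{\beta}{2+\beta}\bigr)\tfrac53\OPT(\sigma)$ plus lower-order terms, and a one-line computation $\bigl(1+\tfrac{\beta}{2+\beta}\bigr)\tfrac53=\tfrac{10+10\beta}{6+3\beta}=\tfrac53+\tfrac{5\beta}{6+3\beta}$ yields the claimed limit.

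The only place requiring care---and what I would consider the main (albeit modest) obstacle---is the bookkeeping of the additive $O(m^2)$ terms and of the discrepancy introduced by the fact that \MTFF, unlike \MO, does not perform paid exchanges at phase boundaries: one must check that when we apply the $5/3$-competitiveness of \MTFF on the whole of $\sigma$ (rather than phase by phase) and then use Lemma~\ref{lem:luCorAd}'s comparison of \MO to \MTFF, the accumulated constants are all $o(\OPT(\sigma))$ once $m$ and $n$ (equivalently, $k$) are large, so that the ratio genuinely converges to $\tfrac53+\tfrac{5\beta}{6+3\beta}$. All the heavy lifting---the phase-length accounting and the per-phase lower bounds on \OPT---has already been carried out in Lemmas~\ref{lem:costTrustMo}, \ref{lem:costIgnoreMo}, \ref{lem:costMo}, \ref{lem:phaseBoundOPT}, and \ref{lem:luCorAd}, so beyond this case split nothing new is needed.
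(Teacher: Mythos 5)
Your proposal is correct and follows essentially the same route as the paper's proof: the same case split on whether the trusted advice indicates Timestamp or one of \MTFE/\MTFO, the same invocation of Lemma~\ref{lem:luCorAd} to compare \MO against \MTFF, and the same computation $\frac{5}{3}\bigl(1+\frac{\beta}{2+\beta}\bigr)=\frac{5}{3}+\frac{5\beta}{6+3\beta}$. Your extra care with the additive $O(m^2)$ terms is a reasonable elaboration of what the paper leaves implicit.
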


\begin{proof}
	If the advice indicates \textsc{Timestamp} as the best algorithm among \MTFE, \MTFO, and \textsc{Timestamp}, the algorithm uses \textsc{Timestamp} to serve the entire sequence, and since the advice is right, the competitive ratio will be at most 5/3~\cite{BoyKamLata14}. If the advice indicates \MTFE or \MTFO as the best algorithm, the \MO algorithm uses the phasing scheme described above by alternating between the indicated algorithm and \MTF. If the advice is right, by Lemma~\ref{lem:luCorAd}, the cost of the algorithm will be within a ratio $1+\frac{\beta}{2+\beta}$ of the algorithm indicated by the advice, and consequently has a competitive ratio of at most $5/3 (1+\frac{\beta}{2+\beta}) = 5/3 + \frac{5\beta}{6+3\beta}$.
\end{proof}

\begin{lemma}
	If the advice is untrusted, then the competitive ratio of the \MO algorithm converges to $2 + \frac{2}{4+5\beta}$ for sufficiently long lists.
	\label{lem:luIncor}
\end{lemma}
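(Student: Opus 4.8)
The plan is a short case analysis on the content of the (untrusted) two-bit advice, which names one of Timestamp, \MTFE, or \MTFO; only the resulting behaviour of \MO matters, so effectively there are two cases to treat.

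\textbf{Case 1: the advice names Timestamp.} Then \MO runs Timestamp on all of $\sigma$. I cannot invoke the $5/3$ guarantee here, since that requires Timestamp to genuinely be the best of the three and the advice is untrusted; but Timestamp is unconditionally $2$-competitive~\cite{Albers98}, and $2 \le 2 + \frac{2}{4+5\beta}$ for every $\beta \in [0,1/2]$, so the claimed bound holds with no asymptotics needed. This is precisely the observation from the opening of the section that licenses focusing the remaining analysis on \MTFF.

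\textbf{Case 2: the advice names \MTFE or \MTFO}, which I denote \MTFF. Then \MO executes exactly the toggling scheme already analyzed: trusting phases (simulating \MTFF, terminated by the $m^3$ access-cost threshold) alternating with ignoring phases (running \MTF, terminated by the $\beta m^3$ threshold). I would simply appeal to Lemma~\ref{lem:crMo}, after pointing out that its proof is advice-oblivious: it uses only (i) the phase-length thresholds, which are fixed numbers independent of the advice; (ii) the per-phase cost estimates of Lemmas~\ref{lem:costTrustMo}, \ref{lem:costIgnoreMo} and \ref{lem:costMo}, which again depend only on the thresholds and on the $O(m^2)$ bound on the paid exchanges at the start of each trusting phase; (iii) the worst-case competitive ratios $5/2$ of \MTFF and $2$ of \MTF, which hold on every request sequence~\cite{BoyKamLata14,SleTar85}; and (iv) Lemma~\ref{lem:phaseBoundOPT}, which lower-bounds \OPT over a phase in terms of the cost \MO pays there. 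Therefore Lemma~\ref{lem:crMo} yields the asymptotic bound $2 + \frac{2}{4+5\beta}$ directly.

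Combining the two cases gives the statement. I expect no genuine obstacle: the quantitative work lives entirely inside Lemma~\ref{lem:crMo}, and the only thing to check is that the Timestamp case is dominated by the bound proved for the toggling scheme, which holds because $\beta \le 1/2$ forces $\frac{2}{4+5\beta} \ge \frac{4}{13} > 0$. Should one want the matching lower bound to make ``converges to'' literally tight, one would construct a sequence realizing the worst case of Lemma~\ref{lem:crMo} --- interleaving an instance forcing \MTFF toward ratio $5/2$ during the trusting phases with one forcing \MTF toward ratio $2$ during the ignoring phases --- but for the asymptotic upper bound asserted here this is not needed.
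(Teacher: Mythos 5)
Your proposal is correct and follows essentially the same route as the paper's own proof: a two-case split on the advice content, handling the Timestamp case via its unconditional $2$-competitiveness (which is dominated by $2+\frac{2}{4+5\beta}$) and the \MTFF case by invoking Lemma~\ref{lem:crMo}, whose statement already covers untrusted advice. Your added remarks on why Lemma~\ref{lem:crMo} is advice-oblivious are a sensible elaboration but do not change the argument.
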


\begin{proof}
	If the advice indicates \textsc{Timestamp} as the best algorithm, the algorithm trusts it and the competitive ratio will be at most 2~\cite{Albers98}. If the advice indicates \MTFE or \MTFO as the best algorithm, the \MO algorithm uses the phasing scheme described above by alternating between the indicated algorithm and Move-To-Front, and by Lemma~\ref{lem:crMo}, the competitive ratio of the algorithm will be at most $2 + \frac{2}{4+5\beta}$.
\end{proof}

The following theorem directly follows from Lemmas~\ref{lem:luCor} and \ref{lem:luIncor}.

\begin{theorem}\label{th:lu-main}
Algorithm \MO with parameter $\beta\in[0,1/2]$ and $2$ bits of advice achieves a competitive ratio of at most $5/3 + \frac{5\beta}{6+3\beta}$ when the advice is trusted and a competitive ratio of at most $2 + \frac{2}{4+5\beta}$ when the advice is untrusted.
\end{theorem}

\begin{corollary}\label{coro:luDet}
For list update with untrusted advice, there is a $(r,f(r))$-competitive algorithm where $r\in[5/3,2]$ and $f(r) = 2+ \frac{10 - 3r}{9r - 5}$.
\end{corollary}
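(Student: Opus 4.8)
The plan is to obtain the corollary directly from Theorem~\ref{th:lu-main} by eliminating the tuning parameter $\beta$ in favour of the trusted competitive ratio $r$. Theorem~\ref{th:lu-main} provides, for every $\beta\in[0,1/2]$, a concrete algorithm---namely \MO with parameter $\beta$, which uses only $k=2$ advice bits (a constant, needed just to select among Timestamp, \MTFE, \MTFO)---that is $(r_\beta,w_\beta)$-competitive with $r_\beta = 5/3 + \frac{5\beta}{6+3\beta}$ when the advice is trusted and $w_\beta = 2 + \frac{2}{4+5\beta}$ when it is untrusted. It therefore suffices to show two things: that as $\beta$ ranges over $[0,1/2]$ the quantity $r_\beta$ sweeps out exactly the interval $[5/3,2]$, and that $w_\beta$ equals $f(r_\beta)$ for $f(r)=2+\frac{10-3r}{9r-5}$.

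First I would invert the map $\beta\mapsto r_\beta$. Solving $r = 5/3 + \frac{5\beta}{6+3\beta}$ for $\beta$ gives $\beta = \frac{6r-10}{10-3r}$, and differentiating shows this expression is strictly increasing in $r$ on $[5/3,2]$ (its derivative is $30/(10-3r)^2>0$, and $10-3r>0$ on this range). Evaluating at the endpoints, $r=5/3$ yields $\beta=0$ and $r=2$ yields $\beta=1/2$, so $\beta\mapsto r_\beta$ is an increasing bijection from $[0,1/2]$ onto $[5/3,2]$; hence every target $r\in[5/3,2]$ is realised by a legitimate choice $\beta=\frac{6r-10}{10-3r}\in[0,1/2]$, which fixes the algorithm achieving trusted ratio $r$.

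Next I would substitute this value of $\beta$ into the untrusted bound $w_\beta$. Computing
\[
4 + 5\beta \;=\; \frac{4(10-3r) + 5(6r-10)}{10-3r} \;=\; \frac{18r-10}{10-3r},
\]
we obtain $w_\beta = 2 + \frac{2}{4+5\beta} = 2 + \frac{2(10-3r)}{18r-10} = 2 + \frac{10-3r}{9r-5} = f(r)$, which is exactly the claimed tradeoff. As sanity checks, $r=5/3$ gives $f(5/3)=5/2$ (recovering the known $(5/3,5/2)$-competitive algorithm of~\cite{BoyKamLata14} at maximum risk $\beta=0$), and $r=2$ gives $f(2)=2+4/13$ at $\beta=1/2$.

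Since every step reduces to elementary algebra applied to the already-established Theorem~\ref{th:lu-main}, there is no genuine technical obstacle in the corollary itself. The only point deserving a little care is confirming that the admissible parameter range $\beta\in[0,1/2]$ of Theorem~\ref{th:lu-main} corresponds precisely to the stated range $r\in[5/3,2]$---equivalently, that the entire Pareto curve $\{(r_\beta,w_\beta):\beta\in[0,1/2]\}$ is captured by the single formula $f$---which the monotonicity/bijection argument above settles.
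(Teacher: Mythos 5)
Your proposal is correct and is exactly the intended derivation: the paper states the corollary without an explicit proof, and the standard route is precisely your algebraic elimination of $\beta$ from Theorem~\ref{th:lu-main}, i.e.\ setting $\beta=\frac{6r-10}{10-3r}$ and verifying $2+\frac{2}{4+5\beta}=2+\frac{10-3r}{9r-5}$. The computations and the endpoint/monotonicity checks all hold, so nothing further is needed.
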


\section{Randomized online algorithms with untrusted advice}
\label{sec:extensions}

The discussion in all previous sections pertains to deterministic online algorithms. In this section,we focus on randomization and its
impact on online computation with untrusted advice. We will assume, as standard in the analysis of randomized algorithms, that the source
of randomness is trusted (unlike the advice). Given a randomized algorithm $A$, its trusted and untrusted competitive ratios 
are defined as in~\eqref{eq:definition}, with the difference that the cost $A(\sigma,\phi)$ is now replaced by the expected cost $\E(A(\sigma,\phi))$.

Randomization can improve the competitiveness of the ski rental problem~\cite{NIPS2018_8174,wei2020optimalold}.
Namely,~\cite{NIPS2018_8174} gave a randomized algorithm with a single advice bit which is 
$\big(\frac{\lambda}{1-e^{-\lambda}}, \frac{1}{1-e^{-(\lambda-1/B)}}\big)$-competitive, where $\lambda\in (1/B,1)$ is a parameter of the algorithm. More specifically, the algorithm uses as advice a bit that predicts whether the length of the sequence (i.e., the number of skiing days) exceeds $B$ or not, and samples the day when skis are bought based on two different probability distributions,
depending on the advice bit. For simplicity, we may assume that $B$ is large, hence this algorithm is $\big(\frac{\lambda}{1-e^{-\lambda}}, \frac{1}{1-e^{-\lambda}}\big)$-competitive,
which we can write in the equivalent form $(w\ln\frac{w}{w-1},w)$.
In contrast, Theorem~\ref{th:ski-main} shows that any deterministic Pareto-optimal algorithm with advice of any size is 
$(1+\lambda,1+1/\lambda)$-competitive, or equivalently $(\frac{w}{w-1},w)$-competitive.  Since $ w\ln\frac{w}{w-1}<\frac{w}{w-1}$, we conclude that the randomized algorithm Pareto-dominates any deterministic algorithm, even when the latter is allowed unbounded advice.



A second issue we address in this section is related to the comparison of random bits and advice bits as resource. More specifically, 
in the standard model in which advice is always trustworthy, an advice bit can be at least as powerful as a random bit since
the former can simulate the efficient choice of the latter, and thus provide a ``no-loss'' derandomization. However, in the setting of untrusted advice, 
the interplay between advice and randomization is much more intricate. This is because random bits, unlike advice bits, are assumed to be trusted.

We show, using online bidding as an example, that there are situations in which a deterministic algorithm with $L+1$ advice bits is Pareto-incomparable to a
randomized algorithm with 1 random bit and $L$ advice bits. In particular we focus on the {\em bounded online bidding} problem, 
in which $u\leq B$, for some given $B$.

\begin{theorem}
For every $\epsilon>0$ there exist sufficiently large $B$ and $L$ such that there is a randomized algorithm for bounded online bidding 
with $L$ advice bits and 1 random bit that is $(\frac{1+\rho_1}{2} \rho_1+\epsilon , \alpha w+\epsilon)$-competitive 
for all $w>4$, where $\rho_1=\frac{w-\sqrt{w^2-4w}}{2}$, and 
$\alpha=\max\{ \frac{1+\rho_1}{2}, \frac{\rho_1+\rho_2}{2\rho_2}\}$.
\label{thm:randomized.bidding}
\end{theorem}

\begin{proof}
	Recall that the strategy $X_u^*=(x_i)$ of Theorem~\ref{thm:bidding.optimal} is $(\frac{w-\sqrt{w^2-4w}}{2},w)$-competitive, and that
	for $w>4$, its bids are as in the statement of Lemma~\ref{lem:bidding.recurrence}. 
	To simplify the analysis, we will rely on a different characterization due to~\cite{DBLP:journals/jair/AngelopoulosK23}, which showed that $X_u^*$ can be expressed as a scaled geometric strategy with base $\rho_2$, namely, we have that $x_i=\lambda \rho_2^i$, for some $\lambda$ that only depends on the target $u$. Define strategy $Y=(y_i)_{i \geq 1}$ in which $y_i=\rho_1 x_i$. Last, we define a randomized strategy $R$ with 1 random bit, which mixes equiprobably between $X$ and $Y$. 
	
	There is a subtlety concerning the advice bits required to encode $u$. 
	First, we note that for every $\epsilon>0$, there is sufficiently large $B$ with the following property:
	if a deterministic algorithm has competitive ratio $w$ for unbounded $u$, then it has competitive ratio at 
	least $w-\epsilon$ when $u$ is constrained to be at most $B$. Furthermore, if $u$ is bounded, then with a sufficiently
	large number of bits, say $b_u$, we can approximate $u$ to any required precision. This in turn implies that if $u$
	is sufficiently large, then an encoding with $b_u$ and an encoding with $b_u-1$ bits can differ only by $\epsilon$,
	and therefore so do the competitive ratios of the algorithms $X^*_{u_1}$ and $X^*_{u_2}$, where $u_1$ and $u_2$ are the
	values encoded with $b_u$ and $b_u-1$ bits, respectively. Summarizing, we can assume that, excluding negligible effects to the trusted and untrusted
	competitive ratios, $X_u^*$ and $Y$ receive the precise value of $u$ as advice, and that $X_{u}^*$ requires one bit more than $Y$.
	
	Note that, by definition,  $y_i > x_i$, and that $y_i \leq x_{i+1}$.
	The worst case choices for the hidden value $u$ are values infinitesimally larger than the ones in the sets
	$\cup_i\{x_i\}$ and $\cup_i \{y_i\}$. We thus consider two cases:
	
	\medskip
	\noindent
	{\em Case 1:} $u$ is infinitesimally larger than $x_i$. Since $x_{i+1} >u$, and $y_{i}=\rho_1 x_i >u$, the expected cost of $R$
	is at most 
	\[
	\frac{1}{2} \sum_{j=1}^{i+1}x_j+ \frac{1}{2} \sum_{j=1}^{i}y_j.
	\] 
	
	\medskip
	\noindent
	{\em Case 2:} $u$ is infinitesimally larger than $y_i$. Since $y_{i+1} >u$ and $x_{i+1}>y_i$, 
	Thus the expected cost of $R$ is at most
	\[
	\frac{1}{2} \sum_{j=1}^{i+1} (x_j+y_j)=\frac{1+\rho_1}{2} \sum_{j=1}^{i+1} x_j.
	\]

	We can now bound the expected untrusted competitive ratio of $R$. If $u$ is from the set $\cup_i\{x_i\}$, then 
	\begin{align*}
		\frac{\E(cost(R))}{\opt} &=\frac{1}{2} \cdot \sup_i \frac{\sum_{j=1}^{i+1}x_j+\sum_{j=1}^{i}y_j}{x_i} \nonumber  \\
		&\leq 
		\frac{1}{2} \left( \sup_i \frac{\sum_{j=1}^{i+1} x_j}{x_{i}}  + \rho_1 \sup_i \frac{\sum_{j=1}^i y_{j}}{y_i} \right)
		\nonumber \tag{$y_i=\rho_1 x_i$}\\
		&= \frac{1}{2} \left( w  +\rho_1 \frac{\rho_2}{\rho_2-1} \right) \nonumber \tag{From $y_i=\lambda \rho_1 \rho_2^i$} \\
		&=\frac{1}{2}(w+\frac{\rho_1}{\rho_2}w)=w \frac{\rho_1+\rho_2}{2\rho_2}.
	\end{align*}
	
	If $u$ is from the set $\cup_i\{y_i\}$, then 
	\begin{align*}
		\frac{\E(cost(R))}{\opt} &= \frac{1+\rho_1}{2} \cdot \sup_i \frac{\sum_{j=1}^{i+1} x_j}{y_i} \nonumber \\
		&= \frac{1+\rho_1}{2} \cdot \sup_i \frac{\sum_{j=1}^{i+1} x_j}{\rho_1 x_i} =\frac{1+\rho_1}{2\rho_1} w.
	\end{align*}
	
	Summarizing, we have 
	\[
	w_R \leq \max \{w \frac{\rho_1+\rho_2}{2\rho_2}, w \frac{1+\rho_1}{2\rho_1} \}= \alpha w,
	\]
where recall that $\alpha=\max\{ \frac{1+\rho_1}{2}, \frac{\rho_1+\rho_2}{2\rho_2}\}$. Note also that for any $w>4$, $\alpha<1$.

	Concerning the expected trusted competitive ratio of $R$, we observe that with probability $1/2$ it is equal to $\rho_1$,  
	(if $X^*_u$ is chosen), or equal to at most $\rho_1 \rho_1=\rho_1^2$, if $Y$ is chosen.  Thus, 
	\[
	r_R\leq \rho_1\frac{1+\rho_1}{2}.
	\]
\end{proof}

Note that when $B, L \rightarrow \infty$, the competitiveness of the best deterministic algorithm with $L$ advice bits approaches the one of $X_u^*$, as
expressed in Theorem~\ref{thm:bidding.optimal}, namely $(\rho_1,w)$. Thus, 
Theorem~\ref{thm:randomized.bidding} shows that randomization improves upon the deterministic untrusted ratio $w$ by a multiplicative factor $\alpha$, at the expense of a degradation of the trusted competitive ratio by a factor $\frac{1+\rho_1}{2}>1$. 


\section{Conclusion}

\label{sec:conclusion}

We introduced a new model in the study of online algorithms with advice, in which the online algorithm can leverage information about the request sequence that is not necessarily foolproof. Motivated by advances in learning-online algorithms, we studied tradeoffs between the trusted and untrusted competitive ratio, as function of the advice size. We also proved the first lower bounds for online algorithms in this setting. Any other online problem should be amenable to analysis under this framework, and in particular any other of the many problems studied under the classic framework of (standard) advice complexity. 

In future work, we would like to expand the model so as to incorporate, into the analysis, the concept of advice {\em error}. More specifically, given an advice string of size $k$, let $\eta$ denote the number of erroneous bits (which may be not known to the algorithm). In this setting, the objective would be  to study the power and limitations of online algorithms, i.e., from the point of view of both upper and lower bounds on the competitive ratio. A first approach towards this direction was made recently in the context of problems such as contract 
scheduling~\cite{DBLP:journals/jair/AngelopoulosK23} and online time-series search~\cite{DBLP:conf/aaai/0001KZ22}. Considering that each advice bit can be interpreted as a response to a {\em binary query}, such studies may provide connections with areas such as computation with {\em noisy queries} (see., e.g.,~\cite{NIPS2017_7161} for clustering in a setting in which a stochastic oracle answers whether two input points should be part of the same cluster). This opens up the opportunity for bringing the advice complexity model much closer to real-world applications.

\bibliographystyle{abbrv} 
\bibliography{online}



\vspace{1cm}

\appendix


\section{Omitted proofs from Section~\ref{sec:online.bidding}}


Given $u, m \geq 1$, assuming that $(L_{m, u})$ is feasible, we will first show how to compute the optimal objective value of $(L_{m,u})$. Let \text{Obj} denote the numerator of objective value of $(L_{m,u})$ (namely, $\sum_{j=1}^i x_j$).
For convenience, we denote $x_{m, u, i}$ by $x_{i}$ and let $T_i$ denote $\sum_{j=1}^i x_j$, with $T_0 = 0$.

Define the sequences $c_i$ and $d_i$ as follows:
\begin{equation}
  \label{eq:rec_c}
  c_i = c_{i-1} + d_{i-1} \cdot a_{i-1}, \textrm{ with } c_0 = 0,
\end{equation}

\begin{equation}
  \label{eq:rec_d}
  d_i = d_{i-1} \cdot (1+ b_{i-1}), \textrm{ with } d_0 = 1.
\end{equation}
Sequences $a_i$ and $b_i$ were defined in \eqref{eq:rec_a'} and \eqref{eq:rec_b} right before Lemma~\ref{lem:bidding.recurrence}.

The sequences $a_i, b_i, c_i$ and $d_i$ satisfy the following technical properties.
\begin{lemma}
  \label{lem:a_b_c_d}
  For $i \geq 0$, we have
  \[
  a_i = \left\{
  \begin{array}{lr}
    \frac{2}{i+2} \cdot \frac{1}{2^i}, & w = 4  \\
    \frac{p^2-1}{p^{i+2}-1} \cdot (\frac{p}{w})^{\frac{i}2}, & w > 4,
  \end{array}
  \right. \quad  b_i = \left\{
  \begin{array}{lr}
    \frac{i}{i+2}, & w = 4  \\
    p\cdot \frac{p^i-1}{p^{i+2}-1}, & w > 4,
  \end{array}
  \right.
  \]
  
  \[
  c_i = \left\{
  \begin{array}{lr}
    2 - \frac{2}{i+1}, & w = 4  \\
    1 + p - \frac{p^i(p^2-1)}{p^{i+1}-1}, & w > 4
  \end{array}
  \right. \quad \textrm{ and } \quad d_i = \left\{
  \begin{array}{lr}
    \frac{2^{i}}{i+1} , & w = 4  \\
    \frac{p-1}{p^{i+1}-1} \cdot (pw)^{\frac{i}2}, & w > 4,
  \end{array}
  \right.
  \]
  with $p = \frac{w-2-\sqrt{w^2-4w}}{2}$.
\end{lemma}

\begin{proof}
  Choose $p < 1$ such that
  \begin{equation}
    \label{eq:p}
    p = \frac{1+p}{w-1-p}.
  \end{equation}
  In other words, $p = \frac{w-2-\sqrt{w^2-4w}}{2}$. From~\eqref{eq:rec_b}, we have
  \[
  b_i - p = \frac{(p+1)(b_{i-1} - p)}{w-1-p - (b_{i-1} - p)},
  \]
  which implies that
  \[
  \frac{1}{b_i - p} = \frac{w-1-p}{p+1}\cdot \frac{1}{b_{i-1} - p} - \frac{1}{p+1}.
  \]
  Define the sequence $(u_i)_{i \geq 0}$ as $u_i = \frac{1}{b_i - p}$ for $i \geq 0$, then
  \[
  u_i = \frac{1}{p} \cdot u_{i-1} - \frac{1}{p+1}, \quad \textrm{ with } \quad u_0 = \frac{-1}{p}.
  \]
  Thus,
  \[
  u_i = \left\{
  \begin{array}{lr}
    -\frac{i+2}{2}, & w = 4  \\
    -\frac{p^{i+2}-1}{(p^2-1)p^{i+1}}, & w > 4,
  \end{array}
  \right.
  \]
  which implies that
  \[
  b_i = \left\{
  \begin{array}{lr}
    \frac{i}{i+2}, & w = 4  \\
    p\cdot \frac{p^i-1}{p^{i+2}-1}, & w > 4.
  \end{array}
  \right.
  \]
  Then
  \begin{equation}
    \label{eq:prod_bi}
    \prod_{j=1}^{i} b_j = \left\{
    \begin{array}{lr}
      \frac{2}{(i+1)(i+2)}, & w = 4  \\
      p^i\cdot \frac{(p-1)(p^2-1)}{(p^{i+1}-1)(p^{i+2}-1)}, & w > 4,
    \end{array}
    \right.
  \end{equation}
  In addition, from~\eqref{eq:rec_a'} and~\eqref{eq:rec_d}, for $i \geq 1$, we have
  \[
  a_i = \prod_{j=1}^i \frac{1}{w-1-b_{j-1}} \quad \textrm{ and } \quad d_i = \prod_{j=1}^i (1+b_{j-1}).
  \]
  Then, for $i \geq 2$,
  \begin{equation}
    \label{eq:a_id_i}
    a_{i}d_{i} = \prod_{j=1}^{i} \frac{(1+b_{j-1})}{w-1-b_{j-1}} = \prod_{j=1}^{i} b_j.
  \end{equation}
  Moreover, from~\eqref{eq:rec_b}, we have
  \[
  1 + b_i = \frac{w}{w-1-b_{i-1}},
  \]
  then
  \[
  \prod_{j=1}^i (1+b_{j}) = w^i \cdot \prod_{j=1}^i \frac{1}{w-1-b_{j-1}},
  \]
  which implies that
  \begin{equation}
    \label{eq:d_i_a_i}
    d_{i+1} = w^i \cdot a_i.
  \end{equation}
  Combining~\eqref{eq:rec_d},~\eqref{eq:a_id_i} and~\eqref{eq:d_i_a_i}, we have
  \[
  a_i = \sqrt{\frac{(1+b_{i}) \cdot \prod_{j=1}^i b_j}{w^i} } \quad \textrm{ and } \quad d_i = \sqrt{\frac{w^i \cdot \prod_{j=1}^i b_j}{1+b_i} }.
  \]
  Thus, if $w > 4$,
  \[
  a_i = \sqrt{\frac{(1+p\cdot \frac{p^i-1}{p^{i+2}-1}) \cdot p^i\cdot \frac{(p-1)(p^2-1)}{(p^{i+1}-1)(p^{i+2}-1)} }{w^i} } = \frac{p^2-1}{p^{i+2}-1} \cdot \left(\frac{p}{w}\right)^{\frac{i}2}
  \]
  and
  \[
  d_i = \sqrt{ \frac{w^i \cdot p^i\cdot \frac{(p-1)(p^2-1)}{(p^{i+1}-1)(p^{i+2}-1)}}{(1+p\cdot \frac{p^i-1}{p^{i+2}-1})}} = \frac{p-1}{p^{i+1}-1} \cdot (pw)^{\frac{i}2}.
  \]
  If $w  = 4$,
  \[
  a_i = \sqrt{\frac{(1+\frac{i}{i+2}) \cdot \frac{2}{(i+1)(i+2)} }{w^i}} = \frac{2}{i+2} \cdot \frac{1}{w^{\frac{i}2}}
  \]
  and
  \[
  d_i = \sqrt{\frac{w^i \cdot \frac{2}{(i+1)(i+2)}}{1+\frac{i}{i+2}}} = \frac{1}{i+1} w^{\frac{i}2}
  \]
  From~\eqref{eq:rec_c}, for $i \geq 1$, we have
  \[
  c_i = \sum_{j=1}^{i} a_{j-1}d_{j-1} = 1 + \sum_{j=2}^{i}\prod_{k=1}^{j-1}b_k  =  1 + \sum_{j=1}^{i-1}\prod_{k=1}^jb_{k}.
  \]
  Then, by combining with~\eqref{eq:prod_bi}, we have
  \begin{align*}
    c_i &= \left\{
    \begin{array}{lr}
      1 + \sum_{j=1}^{i-1} \frac{2}{(j+1)(j+2)}, & w = 4  \\
      1 + \sum_{j=1}^{i-1} p^j\cdot \frac{(p-1)(p^2-1)}{(p^{j+1}-1)(p^{j+2}-1)}, & w > 4
    \end{array}
    \right. \\
    &= \left\{
    \begin{array}{lr}
      1 + \sum_{j=1}^{i-1} \frac{2}{j+1} - \frac{2}{j+2}, & w = 4  \\
      1 + \sum_{j=1}^{i-1} (p^2-1)\left( \frac{p^j}{p^{j+1}-1} - \frac{p^{j+1}}{p^{j+2}-1}\right), & w > 4
    \end{array}
    \right. \\
    &= \left\{
    \begin{array}{lr}
      2 - \frac{2}{i+1}, & w = 4  \\
      1 + p - \frac{p^i(p^2-1)}{p^{i+1}-1}, & w > 4
    \end{array}
    \right..
  \end{align*}
  This concludes the proof.
\end{proof}

The following lemma gives a lower bound on $x_i$ for any feasible solution $X$ of $(L_{m,u})$, for $i \in [1,m]$, as well as a lower bound on \text{Obj}.

\begin{lemma}
  \label{lem:main_lemma}
  For every feasible solution $X=(x_1, x_2, \ldots)$ of $(L_{m, u})$, it holds that, for $i \in [1, m]$,
  \[
  x_i \geq a_{m-i} \cdot u + b_{m-i} \cdot T_{i-1} \quad \textrm{ and } \quad \text{Obj} \geq c_{m-i}\cdot u + d_{m-i}\cdot T_{i}.
  \]
  In addition, $x_i = a_{m-i} \cdot u + b_{m-i} \cdot T_{i-1}$ and $\text{Obj} = c_{m-i}\cdot u + d_{m-i}\cdot T_{i}$ if constraints $(C_j)$ for $j \in [i+1, m]$ are tight.
\end{lemma}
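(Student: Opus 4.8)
\textbf{Proof proposal for Lemma~\ref{lem:main_lemma}.}

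The plan is to prove both inequalities by downward induction on $i$, starting from $i=m$ and decreasing to $i=1$, simultaneously tracking the bound on $x_i$ and the bound on $\text{Obj}=T_m=\sum_{j=1}^m x_j$. The base case is $i=m$: here the constraint $x_m=u$ gives $x_m = u = a_0\cdot u + b_0 \cdot T_{m-1}$ exactly (since $a_0=1$, $b_0=0$), so the first inequality holds with equality. For the $\text{Obj}$ bound at $i=m$ we must show $\text{Obj}\geq c_0\cdot u + d_0\cdot T_m$; since $c_0=0$ and $d_0=1$ this is just $\text{Obj}\geq T_m$, which is an equality. So both statements hold trivially at $i=m$, and the ``tightness'' clause is vacuous there.

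For the inductive step, assume the claim holds for index $i+1$, i.e. $x_{i+1}\geq a_{m-i-1}u + b_{m-i-1}T_i$ and $\text{Obj}\geq c_{m-i-1}u + d_{m-i-1}T_{i+1}$. I want to deduce the bound at index $i$. The key tool is constraint $(C_{i+1})$: $\sum_{j=1}^{i+1} x_j = T_{i+1}\leq w\cdot x_i$, equivalently $T_i + x_{i+1}\leq w\, x_i$. Substituting the inductive lower bound on $x_{i+1}$ gives $T_i + a_{m-i-1}u + b_{m-i-1}T_i \leq w\, x_i$, i.e. $x_i \geq \frac{a_{m-i-1}}{w} u + \frac{1+b_{m-i-1}}{w}T_i$. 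This is not yet the claimed form with $T_{i-1}$; I rewrite $T_i = T_{i-1}+x_i$ and solve for $x_i$: $x_i\bigl(1 - \tfrac{1+b_{m-i-1}}{w}\bigr) \geq \tfrac{a_{m-i-1}}{w}u + \tfrac{1+b_{m-i-1}}{w}T_{i-1}$, which after multiplying through by $w$ and dividing by $w-1-b_{m-i-1}$ yields exactly $x_i \geq \frac{a_{m-i-1}}{w-1-b_{m-i-1}}u + \frac{1+b_{m-i-1}}{w-1-b_{m-i-1}}T_{i-1} = a_{m-i}u + b_{m-i}T_{i-1}$, using the defining recurrences~\eqref{eq:rec_a'} and~\eqref{eq:rec_b}. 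For the $\text{Obj}$ bound, I start from $\text{Obj}\geq c_{m-i-1}u + d_{m-i-1}T_{i+1}$, write $T_{i+1}=T_i + x_{i+1}$, apply the just-derived (or inductive) lower bound on $x_{i+1}\geq a_{m-i-1}u+b_{m-i-1}T_i$, collect terms, and check that the resulting coefficients match $c_{m-i}u + d_{m-i}T_i$ via~\eqref{eq:rec_c} and~\eqref{eq:rec_d}: the coefficient of $u$ becomes $c_{m-i-1}+d_{m-i-1}a_{m-i-1}=c_{m-i}$, and the coefficient of $T_i$ becomes $d_{m-i-1}(1+b_{m-i-1})=d_{m-i}$. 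The equality statement follows by observing that every inequality used — namely constraint $(C_{i+1})$ and the inductive equalities — holds with equality precisely when all of $(C_{i+1}),\dots,(C_m)$ are tight.

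I expect the main obstacle to be purely bookkeeping: keeping the index shifts between $m-i$, $m-i-1$ straight and making sure the algebraic manipulation solving for $x_i$ from $T_i=T_{i-1}+x_i$ is done in the right direction (one must divide by $w-1-b_{m-i-1}$, which requires knowing this quantity is positive — this follows since $0\le b_j<1$ for the relevant range, itself provable from the recurrence~\eqref{eq:rec_b} and $w\ge 4$, or directly from the closed forms in Lemma~\ref{lem:a_b_c_d}). There is no conceptual difficulty once the downward induction is set up; the content is entirely in verifying that the recurrences $a_i,b_i,c_i,d_i$ were defined exactly so as to make these substitutions close up.
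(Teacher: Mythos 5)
Your proof is correct and follows essentially the same route as the paper's: a downward induction from $i=m$, invoking constraint $(C_{i+1})$, substituting the inductive lower bound on $x_{i+1}$, and solving for $x_i$ via $T_i = T_{i-1}+x_i$ so that the recurrences for $a,b,c,d$ close up. Your additional remark on the positivity of $w-1-b_{m-i-1}$ is a detail the paper leaves implicit, but the argument is otherwise identical.
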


\begin{proof}
  The proof is by induction on $i$, for $i \in [1,m]$. The base case, namely for $i = m$, can be readily verified.
  For the inductive step, suppose that for $i \in [1, m-1]$ it holds that $x_j \geq a_{m-j}\cdot u + b_{m-j} \cdot T_{j-1}$ and $\text{Obj} \geq c_{m-j}\cdot u + d_{m-j} \cdot T_{j}$ with $j \in [i+1, m]$.
  We will show that $x_i \geq a_{m-i}\cdot u + b_{m-i}\cdot T_{i-1}$ and $\text{Obj} \geq c_{m-i}\cdot u + d_{m-i} \cdot T_{i}$. By $(C_{i+1})$, we have
  \begin{align*}
    w \cdot x_{i} &\geq T_{i+1} \\
    &= x_{i+1} + T_i \\
    &\geq a_{m-i-1} \cdot u + b_{m-i-1} \cdot T_i + T_i\\
    &= a_{m-i-1} \cdot u + (1 + b_{m-i-1}) \cdot T_i \\
    &= a_{m-i-1} \cdot u + (1 + b_{m-i-1}) \cdot (x_i + T_{i-1})
  \end{align*}
  It implies that
  \[
  x_i \geq \frac{a_{m-i-1}}{w - 1 - b_{m-i-i}} \cdot u + \frac{1 + b_{m-i-1}}{w - 1 - b_{m-i-i}} \cdot T_{i-1},
  \]
  which is equivalent to
  \[
  x_i \geq a_{m-i} \cdot u + b_{m-i} \cdot T_{i-1}.
  \]
  It is straightforward to see that the previous inequality holds with equality if constraints $(C_j)$ are tight for $j \in [i+1, m]$. Moreover, from induction hypothesis, we have
  \begin{align*}
    \text{Obj} &\geq c_{m-i-1}\cdot u + d_{m-i-1}\cdot T_{i+1} \\
    &= c_{m-i-1}\cdot u + d_{m-i-1}\cdot (x_{i+1} + T_i)\\
    &\geq c_{m-i-1}\cdot u + d_{m-i-1}\cdot (a_{m-i-1}\cdot u + b_{m-i-1}\cdot T_i + T_i)\\
    &= (c_{m-i-1} + d_{m-i-1} \cdot a_{m-i-1}) \cdot u + d_{m-i-1} \cdot (1 + b_{m-i-1}) T_i,
  \end{align*}
  which is equivalent to
  \[
  \text{Obj} \geq c_{m-i}\cdot u +  d_{m-i}\cdot T_i.
  \]
  The inequality holds with equality if constraints $(C_j)$ are tight for $j \in [i+1, m]$. This concludes the proof.
\end{proof}

\begin{corollary}
  \label{cor:L_m_u_feasibility}
  If $(L_{m, u})$ is feasible, then $a_{m-1}\cdot u \leq w$.
\end{corollary}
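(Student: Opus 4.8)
The plan is to extract the claimed inequality directly from Lemma~\ref{lem:main_lemma} together with the very first covering constraint $(C_1)$. First I would instantiate Lemma~\ref{lem:main_lemma} at $i=1$ for an arbitrary feasible solution $X=(x_1,x_2,\ldots)$ of $(L_{m,u})$: it yields
\[
x_1 \;\geq\; a_{m-1}\cdot u + b_{m-1}\cdot T_0 \;=\; a_{m-1}\cdot u,
\]
since by definition $T_0 = \sum_{j=1}^{0} x_j = 0$.

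Next I would invoke constraint $(C_1)$, namely $\sum_{j=1}^{1} x_j \leq w\cdot x_0$. Recalling the global convention adopted just before the linear programs that $x_0 = 1$, this reads $x_1 \leq w$. Chaining the two bounds gives $a_{m-1}\cdot u \leq x_1 \leq w$, which is exactly the assertion. The hypothesis that $(L_{m,u})$ is feasible is used only to guarantee that such an $X$ exists so that both inequalities make sense.

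Honestly there is no substantive obstacle here: the statement is an immediate corollary of the base-/small-index case of the induction in Lemma~\ref{lem:main_lemma}. The only points requiring a moment's care are (i) remembering that the index-$0$ partial sum $T_0$ vanishes, so the $b_{m-1}T_0$ term drops, and (ii) that $x_0$ is fixed to $1$ by convention, so $(C_1)$ caps $x_1$ by $w$ rather than by an unknown quantity. With these two conventions in hand the two-line argument above completes the proof.
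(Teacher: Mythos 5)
Your proposal is correct and is essentially identical to the paper's own proof, which likewise combines the $i=1$ case of Lemma~\ref{lem:main_lemma} (giving $x_1 \geq a_{m-1}\cdot u$ since $T_0=0$) with the bound $x_1 \leq w$ coming from $(C_1)$ and the convention $x_0=1$. No issues.
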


\begin{proof}
  From Lemma~\ref{lem:main_lemma}, for any feasible solution $X(x_1, x_2, \ldots)$ of $(L_{m, u})$, it holds that $x_1 \geq a_{m-1}\cdot u$ and $x_1 \leq w$. Hence $a_{m-1}\cdot u \leq w$.
\end{proof}

Define a sequence $x^*_i$ as follows:
\[
x^*_i = w\cdot x^*_{i-1} - \sum_{j=1}^{i-1} x^*_j, \quad \textrm{ with } \quad x^*_1 = a_{m-1}\cdot u.
\]
\begin{lemma}
  $x^*_i$ has a closed formula as follows.
  \label{lem:x_star_closed_formula}
  \begin{itemize}
    \item If $w>4$, then $x^*_{i}=\alpha \cdot \rho_1^{i-1} +\beta \cdot \rho_2^{i-1}$, where $\alpha=\frac{a_{m-1}\rho_2^{m-1}-1}{\rho_2^{m-1}-\rho_1^{m-1}} \cdot u$,
    and $\beta=\frac{a_{m-1}\rho_1^{m-1}-1}{\rho_1^{m-1}-\rho_2^{m-1}} \cdot u > 0$,
    \item If $w=4$, then $x^*_{i}=(\alpha+\beta \cdot i)\cdot 2^i$, where $\alpha=\frac{2^{m-1}\cdot m \cdot a_{m-1}-1}{2^m(m-1)} \cdot u$, and $\beta=\frac{1-2^{m-1}\cdot a_{m-1}}{2^m(m-1)} \cdot u > 0$,
  \end{itemize}
  with $\rho_1=\frac{w-\sqrt{w^2-4w}}{2} > 1$ and $\rho_2=\frac{w+\sqrt{w^2-4w}}{2} > 1$, two roots of the quadratic equation $x^2-wx+w=0$. In addition, $x_i^*$ is monotone increasing in $i$ and $x_i \rightarrow +\infty$ as $i \rightarrow +\infty$.
\end{lemma}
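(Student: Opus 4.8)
The plan is to convert the ``mixed'' recurrence (each $x^*_i$ depends on all earlier terms through $\sum_{j<i}x^*_j$) into a homogeneous linear recurrence of order two, solve it, and then pin down the two free constants. Writing $T_i=\sum_{j=1}^i x^*_j$, the defining relation $x^*_i=w x^*_{i-1}-T_{i-1}$ (valid for $i\ge 2$) is equivalent to $T_i=w x^*_{i-1}$; subtracting this identity for the indices $i$ and $i+1$ eliminates the partial sums and yields
\[
x^*_{i+1}=w x^*_i-w x^*_{i-1},\qquad i\ge 2,
\]
together with the single ``boundary'' instance $x^*_2=(w-1)x^*_1$ (the case $i=2$ of the original recurrence). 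This is a constant-coefficient recurrence whose characteristic polynomial is $t^2-wt+w$, with roots $\rho_1,\rho_2$ when $w>4$ and a double root $2$ when $w=4$; the standard theory of linear recurrences then forces $x^*_i=\alpha\rho_1^{i-1}+\beta\rho_2^{i-1}$ (resp.\ $x^*_i=(\alpha+\beta i)2^i$) for some constants $\alpha,\beta$, which is exactly the shape claimed.

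It remains to identify $\alpha,\beta$, for which two linear conditions are needed. The first is the initial value $x^*_1=a_{m-1}u$, i.e.\ $\alpha+\beta=a_{m-1}u$. The second is that the generic order-two solution must also satisfy the \emph{original} first-order recurrence: the condition $x^*_2=(w-1)x^*_1$ reduces, using $w=\rho_1+\rho_2$, to the single relation $\alpha(\rho_2-1)+\beta(\rho_1-1)=0$. Solving this $2\times2$ system gives $\beta=\dfrac{a_{m-1}u(\rho_2-1)}{\rho_2-\rho_1}$ and $\alpha=\dfrac{a_{m-1}u(1-\rho_1)}{\rho_2-\rho_1}$, and the last step is to check that these agree with the stated expressions in terms of $\rho_1^{m-1},\rho_2^{m-1}$. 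That equivalence amounts to the identity $a_{m-1}\bigl(\rho_1^{m-1}(\rho_1-1)-\rho_2^{m-1}(\rho_2-1)\bigr)=\rho_1-\rho_2$, which I would verify by substituting the closed form of $a_{m-1}$ from Lemma~\ref{lem:a_b_c_d} and the relations $\rho_1=1+p$, $\rho_2=1+1/p$, $w=(1+p)^2/p$ (with $p$ as in Lemma~\ref{lem:a_b_c_d}); the case $w=4$ is handled identically with the double root, where the system collapses to $\alpha=\beta=\tfrac14 a_{m-1}u$. I expect this reconciliation of the two algebraic forms, via the explicit value of $a_{m-1}$, to be the main (purely computational) obstacle; everything preceding it is routine.

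Finally, the positivity of $\beta$ and the monotonicity/divergence of $x^*$ follow from sign bookkeeping on these formulas. Since $1<\rho_1<\rho_2$ for $w>4$ (and $\rho_1=\rho_2=2$ for $w=4$) and $a_{m-1}u>0$, the form $\beta=a_{m-1}u(\rho_2-1)/(\rho_2-\rho_1)$ is manifestly positive (and $\beta=\tfrac14 a_{m-1}u>0$ when $w=4$). For monotonicity, combining the two conditions gives the compact form $x^*_i=\dfrac{a_{m-1}u}{w(\rho_1-\rho_2)}\bigl(\rho_1^{\,i+1}-\rho_2^{\,i+1}\bigr)$, so
\[
x^*_{i+1}-x^*_i=\frac{a_{m-1}u}{w(\rho_1-\rho_2)}\bigl(\rho_1^{\,i+1}(\rho_1-1)-\rho_2^{\,i+1}(\rho_2-1)\bigr)>0,
\]
since the leading factor is negative while the bracket is negative as well ($0<\rho_1^{i+1}<\rho_2^{i+1}$ and $0<\rho_1-1<\rho_2-1$); and since $\rho_2>1$ the sequence tends to $+\infty$. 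The $w=4$ case is immediate from $x^*_i=\tfrac14 a_{m-1}u(i+1)2^i$.
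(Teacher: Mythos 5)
Your plan is correct, and its core mechanism is the same as the paper's: eliminate the partial sums to obtain the homogeneous recurrence $x^*_{i+1}=w(x^*_i-x^*_{i-1})$, read off the characteristic polynomial $t^2-wt+w$, and fit the two free constants. Where you genuinely diverge is in the boundary data used to pin down $\alpha,\beta$. The paper uses $x^*_1=a_{m-1}u$ together with $x^*_m=u$, the latter imported from Lemma~\ref{lem:main_lemma}; this makes the stated $m$-dependent expressions drop out immediately but leans on that earlier lemma. You instead use $x^*_1$ and the $i=2$ instance $x^*_2=(w-1)x^*_1$, both read directly off the definition of $x^*$, which is self-contained but shifts the work into reconciling your constants $\alpha=\frac{a_{m-1}u(1-\rho_1)}{\rho_2-\rho_1}$, $\beta=\frac{a_{m-1}u(\rho_2-1)}{\rho_2-\rho_1}$ with the stated ones. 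The identity you defer to, $a_{m-1}\bigl(\rho_1^{m-1}(\rho_1-1)-\rho_2^{m-1}(\rho_2-1)\bigr)=\rho_1-\rho_2$, does hold: with $p=\rho_1-1$ and $\rho_2=1+1/p$ one computes $a_{m-1}\rho_1^{m-1}=\frac{(p^2-1)p^{m-1}}{p^{m+1}-1}$ and $a_{m-1}\rho_2^{m-1}=\frac{p^2-1}{p^{m+1}-1}$, so the left-hand side equals $\frac{(p^2-1)(p^{m+1}-1)}{p\,(p^{m+1}-1)}=p-\tfrac1p=\rho_1-\rho_2$; the $w=4$ check $\alpha=\beta=\tfrac14 a_{m-1}u$ also matches the stated formulas via $2^{m-1}a_{m-1}=\tfrac{2}{m+1}$. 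Your alternative forms buy you two things: positivity of $\beta$ is manifest (the paper must separately prove $a_{m-1}\rho_1^{m-1}<1$), and monotonicity follows from an explicit closed form for $x^*_{i+1}-x^*_i$ rather than the paper's indirect argument that all $x^*_i>0$ combined with $x^*_i=w(x^*_{i-1}-x^*_{i-2})$ forces the sequence to increase. Both routes are sound; yours trades reliance on Lemma~\ref{lem:main_lemma} for one extra (verified) algebraic identity.
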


\begin{proof}
  By definition of $x^*_i$, we have the linear recurrence relation
  \[
  x^*_i = w(x^*_{i-1} - x^*_{i-2}) \textrm{ for } i \geq 3.
  \]
  Its characteristic equation is $x^2 - wx + w = 0$. We distinguish between two cases, namely for $w = 4$ and $w > 4$.
  
  If $w > 4$, then the characteristic equation has two roots $\rho_1=\frac{w-\sqrt{w^2-4w}}{2} > 1$ and $\rho_2=\frac{w+\sqrt{w^2-4w}}{2} > 1$. It implies that
  \[
  x^*_{i}=\alpha \cdot \rho_1^{i-1} +\beta \cdot \rho_2^{i-1},
  \]
  with some coefficients $\alpha, \beta$. We can determine the value of $\alpha, \beta$ by using the fact $x^*_1 = a_{m-1}u$ and $x_m = u$ (from Lemma~\ref{lem:main_lemma}). As a result, we have
  \[
  \alpha=\frac{\rho_2^{m-1}a_{m-1}-1}{\rho_2^{m-1}-\rho_1^{m-1}} \cdot u, \textrm{ and }
  \beta=\frac{a_{m-1}\rho_1^{m-1}-1}{\rho_1^{m-1}-\rho_2^{m-1}} \cdot u.
  \]
  We will argue that $\beta > 0$. Since $\rho_2 > \rho_1$, it remains to show that $a_{m-1}\rho_1^{m-1} - 1 < 0$. From~\eqref{eq:rec_a'} and~\eqref{eq:p}, we have
  \[
  a_{m-1} = \prod_{j=1}^{m-1} \frac{1}{w-1-b_{j-1}} < \prod_{j=1}^{m-1} \frac{1}{w-1-p} = \prod_{j=1}^{m-1} \frac{p}{1+p} = \left(\frac{p}{1+p} \right)^{m-1}.
  \]
  Since $\rho_1 = \frac{w-\sqrt{w^2-4w}}{2} = 1 + p$ and $p < 1$, then
  \[
  a_{m-1}\rho_1^{m-1} - 1 < \left(\frac{p}{1+p}\right)^{m-1} \cdot (1+p)^{m-1} = p^{m-1} - 1 < 0,
  \]
  which implies that $\beta > 0$. Now, we can argue that $x^*_i$ is monotone increasing in $i$. We have
  \[
  \alpha + \beta = \frac{\rho_2^{m-1}a_{m-1}-1}{\rho_2^{m-1}-\rho_1^{m-1}} \cdot u + \frac{a_{m-1}\rho_1^{m-1}-1}{\rho_1^{m-1}-\rho_2^{m-1}} \cdot u = \frac{a_{m-1}(\rho_2^{m-1} - \rho_1^{m-1})}{\rho_2^{m-1} - \rho_1^{m-1}} \cdot u= a_{m-1} \cdot u.
  \]
  By Lemma~\ref{lem:a_b_c_d}, we have $\alpha + \beta = a_{m-1} \cdot u > 0$, which implies that, for $i \geq 1$,
  \[
  x^*_i = \alpha \cdot \rho_1^{i-1} + \beta \cdot \rho_2^{i-1} > (\alpha + \beta) \cdot \rho_1^{i-1} > 0.
  \]
  Combining with $x^*_i = w(x^*_{i-1} - x^*_{i-2})$, for $i \geq 3$, then $x^*_i$ is monotone increasing in $i$. Moreover, since $\beta > 0$ and $1< \rho_1 < \rho_2$, then $x^*_i \rightarrow +\infty$ as $i \rightarrow +\infty$. This concludes the proof of the case $w > 4$.
  
  If $w = 4$, the proof is similar to the previous one. The characteristic equation has one double root $\rho = 2$. This implies 
  \[
  x^*_{i}=(\alpha+\beta \cdot i)2^i,
  \]
  with some coefficients $\alpha, \beta$. We can determine the value of $\alpha, \beta$ by using the fact $x^*_1 = a_{m-1}u$ and $x_m = u$ (from Lemma~\ref{lem:main_lemma}). As a result, we have
  \[
  \alpha=\frac{2^{m-1}\cdot m \cdot a_{m-1}-1}{2^m(m-1)} \cdot u \quad \textrm{ and } \quad \beta=\frac{1-2^{m-1}\cdot a_{m-1}}{2^m(m-1)} \cdot u.
  \]
  We will argue that $\beta > 0$. It suffices to show that $2^{m-1}a_{m-1} - 1 < 0$. By Lemma~\ref{lem:a_b_c_d}, we have
  \[
  a_{m-1} = \frac{2}{m+1} \cdot \frac{1}{2^{m-1}} < \frac{1}{2^{m-1}},
  \]
  which implies that $\beta > 0$. Moreover, since $\alpha + \beta = \frac{2^{m-1}(m-1)\cdot a_{m-1}}{2^m(m-1)} \cdot u = \frac{a_{m-1}\cdot u}{2} > 0$, then
  \[
  x^*_{i}=(\alpha+\beta \cdot i)2^i > \alpha+\beta > 0.
  \]
  Combining with $x^*_i = w(x^*_{i-1} - x^*_{i-2})$, for $i \geq 3$, then $x^*_i$ is monotone increasing in $i$. Moreover, since $\beta > 0$, then $x^*_i \rightarrow +\infty$ as $i \rightarrow +\infty$. This concludes the proof of the case $w = 4$.
\end{proof}

We are now able to prove the main lemmas of Section~\ref{subsec:bidding.target}.

\begin{proof}[Proof of Lemma~\ref{lem:bidding.recurrence}]
  First, if $X^*$ is an optimal solution of $(L_{m, u})$, then by Corollary~\ref{cor:L_m_u_feasibility}, we have $x^*_1 = a_{m-1}\cdot u \leq w$.
  
  In addition, we will show that $X^*$ is an optimal solution of $(L_{m, u})$, if $x^*_1 = a_{m-1}\cdot u \leq w$. First we argue that $X^*$ is a feasible solution of $(L_{m, u})$. By definition of $X^*$, it satisfies constraints $(C_j)$ and $x_m = u$. Lemma~\ref{lem:x_star_closed_formula} shows that $x^*_i$ is monotone increasing in $i$. Moreover, from Lemma~\ref{lem:main_lemma}, $X^*$ is optimal.
\end{proof}

\begin{proof}[Proof of Lemma~\ref{lem:bidding.r}]
  By Lemma~\ref{lem:main_lemma} and Theorem~\ref{lem:bidding.recurrence}, we have $r^*_{m,u} = c_m$. Combining with Lemma~\ref{lem:a_b_c_d}, we have
  \[
  r^*_{m,u} = \left\{
  \begin{array}{lr}
    2 - \frac{2}{m+1}, & w = 4  \\
    1 + p - \frac{p^m(p^2-1)}{p^{m+1}-1}, & w > 4
  \end{array}
  \right.,
  \]
  which implies that, the worst case ratio is
  \begin{align*}
    r^* &= \lim_{m \rightarrow +\infty} r^*_{m,u}
    = \lim_{m \rightarrow +\infty} 1 + p - \frac{p^m(p^2-1)}{p^{m+1}-1}
    = 1 + p  \\
    &= \frac{w - \sqrt{w^2 - 4w}}{2}.
  \end{align*}
\end{proof}

\end{document}